%% file: paper.tex
\documentclass[10pt,letterpaper]{article}

\pdfoutput=1

\usepackage[margin=1in]{geometry}
\usepackage[utf8]{inputenc}
\usepackage[T1]{fontenc}
\usepackage[nopatch=footnote]{microtype}
\usepackage{amsfonts}
\usepackage{amssymb}
\usepackage{amsmath}
\usepackage{amsthm}
\usepackage{booktabs}
\usepackage[inline]{enumitem}
\usepackage{setspace}
\usepackage[ruled,noend,linesnumbered]{algorithm2e}
\usepackage[hidelinks,bookmarks,bookmarksopen,bookmarksnumbered,bookmarksdepth=subsection]{hyperref}
\usepackage[capitalize]{cleveref}
\usepackage{mathtools}
\usepackage{tikz}
\usetikzlibrary{decorations.pathreplacing}
\usetikzlibrary{matrix}
\usepackage[font=small,labelfont=bf]{caption}
\usepackage{titlesec}
\usepackage{thmtools}
\usepackage{tablefootnote}
\usepackage{soul}
\usepackage{lineno}
\usepackage{framed}
\usepackage{thm-restate}
\usepackage{calc}
\usetikzlibrary{decorations.pathreplacing}
\usetikzlibrary{patterns}

\input{settings}
\input{macros}

\begin{document}

\title{Hardness of Frequency-Related Queries on Compressed Strings}

\author{
  \large Rajat De\\[-0.3ex]
  \normalsize Stony Brook University,\\[-0.3ex]
  \normalsize Stony Brook, NY, USA\\[-0.3ex]
  \normalsize \texttt{rde@cs.stonybrook.edu}
  \and
  \large Dominik Kempa\thanks{Partially funded by the
  NSF CAREER Award 2337891.}\\[-0.3ex]
  \normalsize Stony Brook University,\\[-0.3ex]
  \normalsize Stony Brook, NY, USA\\[-0.3ex]
  \normalsize \texttt{kempa@cs.stonybrook.edu}
}

\date{\vspace{-0.5cm}}
\maketitle

\input{abstract}

\input{intro}

\input{overview}

\input{prelim}

\input{reductions-from-bmm}

\input{reductions-from-ov}

\input{appendix}

\bibliographystyle{alphaurl}
\bibliography{paper}

\end{document}

%% file: settings.tex
\newtheorem{theorem}{Theorem}[section]
\newtheorem{corollary}[theorem]{Corollary}
\newtheorem{lemma}[theorem]{Lemma}
\newtheorem{observation}[theorem]{Observation}
\newtheorem{proposition}[theorem]{Proposition}
\theoremstyle{definition}
\newtheorem{definition}[theorem]{Definition}
\newtheorem{conjecture}[theorem]{Conjecture}
\theoremstyle{remark}
\newtheorem{remark}[theorem]{Remark}
\newtheorem{example}[theorem]{Example}

\crefname{algocf}{Algorithm}{Algorithms}
\Crefname{algocf}{Algorithm}{Algorithms}

\definecolor{mypink}{RGB}{199, 21, 133}
\hypersetup{
  colorlinks,
  urlcolor = blue,
  linkcolor = mypink,
  citecolor = blue
}

\setul{2pt}{0.6pt}

\setlist[enumerate]{nosep, topsep=1ex}
\setlist[itemize]{nosep, topsep=1ex}
\setlist[description]{nosep,topsep=1ex}

\allowdisplaybreaks

\apptocmd{\sloppy}{\hbadness 10000\relax}{}{}

\let\oldabstract\abstract
\let\oldendabstract\endabstract
\makeatletter
\renewenvironment{abstract}
{%
  {\list{}{\addtolength{\leftmargin}{2.6em}%
    \listparindent 1.5em%
     \itemindent    \listparindent%
     \rightmargin   \leftmargin%
     \parsep        \z@ \@plus\p@}%
     \item\relax}%
  {\endlist}%
\oldabstract}
{\oldendabstract}
\makeatother

\SetCommentSty{textit}
\SetKwInOut{Input}{Input}
\SetKwInOut{Output}{Output}
\SetArgSty{textnormal} %
\SetKw{KwTo}{to}

\makeatletter
\patchcmd\algocf@Vline{\vrule}{\vrule \kern-0.4pt}{}{}
\patchcmd\algocf@Vsline{\vrule}{\vrule \kern-0.4pt}{}{}
\makeatother

%% file: macros.tex
\newcommand{\bigO}{\mathcal{O}}
\newcommand{\dd}{\mathinner{.\,.}}
\newcommand{\probname}[1]{\text{\sc #1}}

\newcommand{\Z}{\mathbb{Z}}
\newcommand{\Zn}{\Z_{\geq 0}}
\newcommand{\Zp}{\Z_{\geq 1}}

\newcommand{\zero}{{\tt 0}}
\newcommand{\one}{{\tt 1}}
\newcommand{\BinaryAlphabet}{\{\zero,\one\}}
\newcommand{\emptystring}{\varepsilon}
\newcommand{\dol}{\text{\normalfont $\texttt{\$}$}}
\newcommand{\hash}{\text{\normalfont $\texttt{\#}$}}
\newcommand{\prefixes}[1]{\mathrm{prefixes}(#1)}
\newcommand{\strword}[2]{\mathrm{stretch}_{#1}(#2)}

\newcommand{\Occurs}[4]{\mathrm{occurs}_{#1}(#2,#3,#4)}
\newcommand{\Rank}[3]{\mathrm{rank}_{#1}(#2,#3)}
\newcommand{\TwoSidedRank}[4]{\mathrm{rank}_{#1}(#2,#3,#4)}
\newcommand{\RangeDistinctCount}[3]{\mathrm{distinct}_{#1}(#2, #3)}
\newcommand{\RangeModeFreq}[3]{\mathrm{mode\mbox{-}freq}_{#1}(#2,#3)}

\newcommand{\RowOnes}[2]{\mathrm{RowOnes}_{#1,#2}}

\newcommand{\OnePos}[1]{\mathrm{OnePos}_{#1}}
\newcommand{\DotProduct}[2]{\left\langle #1, #2 \right\rangle}

\newcommand{\OneVectors}[2]{\mathrm{OneVectors}_{#1,#2}}
\newcommand{\ZeroVectors}[2]{\mathrm{ZeroVectors}_{#1,#2}}

\newcommand{\Rhs}[2]{\mathrm{rhs}_{#1}(#2)}
\newcommand{\Exp}[2]{\mathrm{exp}_{#1}(#2)}
\newcommand{\Lang}[1]{L(#1)}

\newcommand{\BigLZSize}[1]{z_{78}(#1)}

%% file: abstract.tex
\begin{abstract}
  Compressed indexing is a recent trend in the design of data
  structures that aims to support fundamental string queries in space
  proportional to the size of the data in compressed form. One of the
  most popular compression frameworks in this field
  is \emph{grammar compression}. A length-$n$ string
  $T \in \Sigma^{n}$ (where $\Sigma$ is any finite set of size up to
  $|\Sigma| = |T|^{\bigO(1)}$) represented using a context-free grammar of size
  $|G|$ can be augmented to support random access queries (given any
  $i \in [1 \dd n]$, return $T[i]$) in $\bigO(|G| \log^{\bigO(1)} n)$
  space and $\bigO(\log^{\bigO(1)} n)$ time. Numerous other queries,
  including pattern matching, longest common extension,
  lexicographical predecessor/successor, Burrows--Wheeler
  Transform, suffix array, and even suffix tree queries, can also be
  supported within the same bounds.

  Despite this progress, one fundamental class of queries has remained
  elusive: frequency-related queries, such as reporting the number of
  occurrences of a symbol $c \in \Sigma$ in a substring $T(b \dd e]$
  (the so-called \emph{rank} query), or simply checking whether $c$
  occurs in $T(b \dd e]$ (the \emph{symbol occurrence} query). To
  date, no fully general structure achieving $\bigO(|G| \log^{\bigO(1)} n)$ space
  and $\bigO(\log^{\bigO(1)} n)$ query time is known.
  In this work, we establish new conditional lower bounds for frequency-related problems:
  \begin{itemize}

  \item We prove that answering rank and symbol occurrence queries on
    grammar-compressed texts in polylogarithmic time using a
    $\bigO(|G| \log^{\bigO(1)} n)$-space structure that is constructible from
    the input grammar in $\bigO(|G| \log^{\bigO(1)} n)$ time would imply an
    $\bigO(n^2 \log^{\bigO(1)} n)$-time algorithm for Boolean Matrix Multiplication
    (BMM), where the best known algorithms achieve
    $\bigO(n^{2.371339})$ time. Our result is achieved using a more
    general lower bound for efficiently answering a batch of
    rank and symbol occurrence~queries.

  \item We generalize the above result, showing that even
    \emph{LZ78-compressed strings} cannot support efficient rank
    queries. Since LZ78 is provably weaker than grammar compression,
    this yields a stronger result: rank and symbol occurrence queries
    remain hard for a wider class of compressors. We further show that
    achieving even \emph{additive} approximations of rank queries
    would imply faster BMM algorithms.

  \item After establishing hardness of rank and symbol occurrence
    queries, we consider a broader class of frequency-related queries
    and show that, under the popular Orthogonal Vectors (OV)
    conjecture, other problems, including range distinct counting
    and range mode frequency queries, also cannot be efficiently
    supported in compressed space.
  \end{itemize}

  In summary, we develop new techniques for reasoning about
  computation over compressed data, and establish tight connections
  between compressed indexing and long-standing problems in
  fine-grained complexity.
  This sheds new light on compressed indexing
  by isolating a new class of frequency-related queries whose
  complexity hinges on known hard problems.
\end{abstract}

%% file: intro.tex
\section{Introduction}\label{sec:intro}

Text indexing is a classical problem that asks to preprocess a given
length-$n$ sequence (text, string) $T \in \Sigma^{n}$ over an alphabet
$\Sigma$, so that we can efficiently answer various queries on $T$. To
date, numerous indexes using $\bigO(n)$ space are known, supporting a
wide range of queries with query times typically ranging from
$\bigO(1)$ to $\bigO(\log^{\bigO(1)} n)$. These include classical
queries such as suffix arrays/trees~\cite{sa,st,klaap2001}, longest
common extension (LCE)~\cite{st,sss}, pattern
matching~\cite{m2003,BilleGS17,KikiBBGGW14,MunroNN20b},
rank/select~\cite{wavelet,BelazzouguiN14}, lexicographical
predecessor/successor~\cite{csa}, and many
others~\cite{gusfield,Navarro14,ohl2013,AlgorithmsOnStrings,MBCT2023}.

While classical indexes remain fundamental in many applications and
are still frequently used in practice, the need to index massive, highly
repetitive sequences arising from projects such as the 100{,}000
Genomes Project~\cite{100k} or the ongoing 1+ Million Genomes
Initiative~\cite{mg} has led to the development of compressed variants.
A compressed index for a length-$n$ text $T$ is a data structure of size
close to $\bigO(C(T))$ (where $C(T)$ is the output size of some lossless
compression algorithm, or a measure of the repetitiveness of $T$) that
supports efficient queries (typically in $\bigO(\log^{\bigO(1)} n)$ time)
on the original uncompressed text $T$.

The design of a compressed index primarily depends on the underlying
compression representation. One particularly popular framework used
in text indexing is \emph{grammar compression}. In this method, a text
$T \in \Sigma^{n}$ is represented as a context-free grammar (CFG)
whose language consists only of the text $T$. One reason grammar
compression has become a popular framework is its strong theoretical
guarantees: the smallest grammar can be efficiently approximated within a
$\bigO(\log n)$
factor~\cite{Charikar05,Rytter03,Jez16}, and grammar sizes are closely
related to Lempel--Ziv compression~\cite{Rytter03} and the run-length
compressed Burrows--Wheeler transform~\cite{resolution}. More generally,
grammar compression belongs to a broader
family of repetitiveness measures used in compressed indexing. Besides
the size of a smallest grammar, this family includes the LZ77
size~\cite{LZ77}, the run-length BWT size $r(T)$~\cite{BWT}, the size of the
smallest string attractor~\cite{attractors}, and the substring
complexity $\delta(T)$~\cite{delta}, to name a few.
Across a series of works~\cite{Rytter03,Charikar05,GNPlatin18,attractors,resolution,KempaS22,delta},
it has been shown that for these and other
standard measures, the worst-case gaps are only $\bigO(\log^{\bigO(1)} n)$ factors.
Thus, if one ignores polylogarithmic factors, using
smallest grammar size, LZ77 size, run-length BWT size, smallest attractor size,
or $\delta(T)$ as the space benchmark leads to the same notion of
compressed space.

State-of-the-art compressed indexes in this repetitiveness-based setting
support the majority of central string processing queries, including:
\begin{itemize}
\item random
  access~\cite{BLRSRW15,balancing,blocktree,attractors,%
  delta,KempaS22,KempaK23}\footnote{Achieving efficient random access queries in
  $\bigO(n)$ space is trivial: it suffices to store the text
  $T \in \Sigma^{n}$ in plain form. In the compressed setting, however,
  even random access becomes a non-trivial query.};
\item longest common extension
  (LCE)~\cite{tomohiro-lce,dynstr,NishimotoMFCS,KempaS22,KempaK23};
\item pattern matching~\cite{ClaudeN11,ClaudeN12a,ClaudeNP21,%
  GagieGKNP14,GagieGKNP12,ChristiansenEKN21,KociumakaNO22,delta};
\item suffix/LCP array, suffix tree, or Burrows--Wheeler
  transform (BWT) queries~\cite{rindex,KempaK23}.
\end{itemize}
Using the polylogarithmic relations above, their space usage can be
bounded with respect to any grammar $G$ representing a text $T \in
\Sigma^{n}$ (where $\Sigma$ is an alphabet of size up to polynomial in
$|T|$, i.e., $|\Sigma| = |T|^{\bigO(1)}$) by
$\bigO(|G| \cdot \log^{\bigO(1)} n)$, with query times ranging from
$\bigO(\log \log n)$ to $\bigO(\log^{\bigO(1)} n)$. We refer
to surveys of Navarro~\cite{NavarroMeasures,NavarroIndexes} for further details.

Within this common compressed-space regime, the lower-bound picture
began with proving
that random access requires
$\Omega(\tfrac{\log n}{\log \log n})$ time in
$\bigO(\delta(T) \log^{\bigO(1)} n)$ space~\cite{VerbinY13}. More recent work in~\cite{KempaK26}
extends this understanding to most of the
above non-frequency queries and establishes a clean dichotomy. In the
cell-probe model with word size $\Theta(\log n)$, any such index
requires $\Omega(\tfrac{\log n}{\log \log n})$ time for random access,
LCP-array, suffix array, inverse suffix array, and LCE queries, whereas
BWT, PLCP, LF, inverse LF, and lexicographical predecessor/successor
queries require $\Omega(\log \log n)$ time. These bounds match
known upper bounds and already hold over a binary alphabet. Thus,
this work yields two optimal query-time classes,
$\Theta(\tfrac{\log n}{\log \log n})$ and $\Theta(\log \log n)$, for
much of the classical compressed-indexing toolkit.

Despite this progress, one fundamental class of queries has remained
elusive: frequency-related queries. These include reporting the number
of occurrences of a symbol $c \in \Sigma$ in a substring $T(b \dd e]$
(the so-called \emph{rank} query), or simply checking whether $c$
occurs in $T(b \dd e]$ (the \emph{symbol occurrence} query).
Rank queries are among the most widely used queries in string
processing~\cite{FerraginaM05,wt,Navarro14,navarrobook,GonzalezN09,Prezza19}. In the uncompressed setting, these
queries can be supported easily in $\bigO(\log n)$ time by storing the list of
occurrences of each character, and more efficient solutions
are known in the case of an integer alphabet, i.e., when $\Sigma = [0 \dd
\sigma)$~\cite{wavelet,GolynskiMR06,BelazzouguiN15}.

In the compressed setting, however, the understanding of these queries
is significantly more limited due to their dependence on the alphabet
size. The classical queries (such as random access, pattern matching,
suffix array, LCE, or BWT) can be supported in compressed space
independently of the alphabet size, i.e., even when $\Sigma = \{1,
\ldots, \sigma\}$ and $|\Sigma| = |G|^{\bigO(1)}$ or $|\Sigma| =
|T|^{\bigO(1)}$. Rank and symbol occurrence queries, however, appear
to depend strongly on the alphabet size. In the small-alphabet regime,
upper and lower bounds are well understood:
\begin{itemize}
\item Belazzougui et al.~\cite{BelazzouguiCPT15} describe a data
  structure that, for any SLP $G$ representing a string $T \in
  \Sigma^{n}$ with $\Sigma = \{1, \ldots, \sigma\}$, uses
  $\bigO(|\Sigma||G|)$ words of space and answers rank queries in
  $\bigO(\log n)$ time. They also describe a more general trade-off
  using $\bigO(\tau |\Sigma| |G| \log_{\tau}(n/|G|))$ space and
  $\bigO(\log_{\tau} (n/|G|))$ query time, for $2 \leq \tau \leq
  \log^{\epsilon} n$ and any constant $\epsilon > 0$. For $\tau =
  \log^{\epsilon} n$, this yields a structure using $\bigO(|\Sigma| |G|
  (\log^{\epsilon} n) \cdot \log(n / |G|) / \log \log n)$ words of
  space, answering queries in $\bigO(\log(n/|G|) / \log \log n)$ time.
\item On the other hand, Prezza~\cite{Prezza19} generalized the lower
  bound of Verbin and Yu~\cite{VerbinY13} and demonstrated that any
  data structure using $\bigO(|G| \log^{\bigO(1)} n)$ space cannot
  support rank queries in $o((\log n) / \log \log n)$ time. The same
  paper also shows how to achieve trade-offs similar to those
  in~\cite{BelazzouguiCPT15} for a wide range of compressed
  representations by generalizing them to string
  attractors~\cite{attractors}.
\end{itemize}
Consequently, when the alphabet is small, e.g., when $|\Sigma| =
\bigO(\log^{\bigO(1)} n)$, the above solutions yield structures using
$\bigO(|G| \log^{\bigO(1)} n)$ space that support rank queries in the
optimal time $\bigO(\tfrac{\log n}{\log \log n})$. For large alphabets
(say, when $|\Sigma| = |T|^{\bigO(1)}$), the situation is
different:
\begin{itemize}
\item The trade-off from~\cite{BelazzouguiCPT15} in this
  case yields structures using $\widetilde{\bigO}(|\Sigma| |G|)$
  space, which, for example, when $|\Sigma| = |G|$ corresponds to
  quadratic space $\bigO(|G|^2)$. At the other extreme, a
  naive solution using $\bigO(|G|)$ space answers rank queries in
  $\bigO(|G|)$ time.
\item On the hardness side, the authors of~\cite{BelazzouguiCPT15}
  showed that if we can preprocess a grammar
  of size $g$ with $g'$ nonterminals that generates a string of
  length $N$ in $T(g, g', N)$ time and produce a data structure of
  size $S(g, g', N)$ that answers rank queries on the generated string
  in time $t(g, g', N)$, then, given a DAG with $|V|$ nodes, $|E|$
  edges (possibly with multiedges), $\beta$ sources, and $\sigma$
  sinks, we can, after $\bigO(|E| + T(g, g', N))$-time preprocessing,
  produce a data structure of size $\bigO(|E| + S(g, g', N))$ that
  counts the number of distinct paths from any node of the DAG to one
  of the $\sigma$ sinks in time $\bigO(t(g, g', N))$, where $N$ is the
  number of distinct paths that connect the $\beta$ sources to the
  $\sigma$ sinks.
\end{itemize}

In other words, when the alphabet is large, e.g., when $|\Sigma| =
|T|^{\bigO(1)}$, it is currently not known whether rank queries can
be supported in $\bigO(|G| \log^{\bigO(1)} n)$ space and
$\bigO(\log^{\bigO(1)} n)$ query time. Although~\cite{BelazzouguiCPT15}
sheds some light on this hardness by connecting the problem to the DAG
path-counting problem, prior to this work, no precise quantitative lower bounds had been
developed beyond this general reduction, and large-alphabet rank and
symbol occurrence queries remain a central unresolved challenge in
compressed indexing. Furthermore, the known hardness evidence~\cite{BelazzouguiCPT15}
applies only to the relatively powerful rank queries, despite the fact
that no indexes are known even for the much simpler symbol occurrence
queries.

The \emph{large-alphabet} case for rank and symbol occurrence queries
has also recently been shown to be important for 2D string indexing.
In~\cite{indexing2d}, it is proved that if, for a 2D SLP $G_{M}$
representing a 2D string (array, matrix, image) $M \in
\BinaryAlphabet^{r \times c}$, there exists a data structure of size
$\bigO(|G_{M}| \log^{\bigO(1)} n)$ (where $n = \max(r,c)$) that
answers any of the basic 2D queries about subrectangles or subsquares
(including sum, equality, longest common extension, or all-zero
queries), then for any (1D) SLP $G_{T}$ representing a (1D) string $T
\in \Sigma^{*}$, where $|\Sigma| = |T|^{\bigO(1)}$, there exists a
structure of size $\bigO(|G_{T}| \log^{\bigO(1)} |T|)$ that answers
symbol occurrence queries in $\bigO(\log^{\bigO(1)} |T|)$ time. A
similar reduction is proved for rank queries. In other words,
a notion of hardness for rank or symbol occurrence queries
on 1D compressed strings over \emph{(polynomially) large alphabets}
would imply hardness for 2D compressed indexing of 2D strings
over a \emph{binary alphabet}. Given the fundamental role of rank and
symbol occurrence queries in many algorithms~\cite{navarrobook,Navarro14,GonzalezN09,Prezza19,FerraginaM05},
accentuated further by the recent reductions in~\cite{indexing2d}, we thus ask:

\begin{center}
	\emph{Can frequency-related queries (such as rank and symbol occurrence queries)\\
		on large-alphabet strings be efficiently supported in compressed space?}
\end{center}

\paragraph{Our Results}

We present a series of reductions showing that fully general support for
fundamental frequency-related queries over large alphabets (including rank
and symbol occurrence queries, as well as the related problems of range
distinct counting and range mode frequency) would either break
long-standing barriers in computational complexity or require substantially
new approaches.

More specifically, we first prove that efficient support for rank and
symbol occurrence queries would improve the state-of-the-art algorithms for
Boolean matrix multiplication.\footnote{Given any
  $A, B \in \BinaryAlphabet^{n \times n}$,
  Boolean matrix multiplication computes a matrix $C = AB$,
  where $C[i,j] = \bigvee_{k=1}^{n} A[i,k] \wedge B[k,j]$
  holds for every $i,j \in [1 \dd n]$.} Given
any matrices $A, B \in \BinaryAlphabet^{n \times n}$, the currently
best algorithm for this task runs in $\bigO(n^{2.371339})$ time~\cite{fastmm}.
Although we are not aware of any substantial barriers ruling out the existence
of a faster algorithm, and an $\bigO(n^2 \log^{\bigO(1)} n)$-time algorithm for
this problem may exist, our result nevertheless shows that obtaining fast
rank or symbol occurrence queries over grammars would have consequences well beyond compressed
indexing. In this sense, our work is similar in spirit to the conditional
lower bounds for text indexing with mismatches and differences by
Cohen{-}Addad et al.~\cite{Cohen-AddadFS19}.
Specifically, we prove the following theorem.

\begin{restatable}{theorem}{thhardnessofslgsymbolocc}\label{th:hardness-of-slg-symbol-occ}
  If there exists an algorithm that,
  given any SLG $G = (V, \Sigma, R, S)$ generating a string $T \in \Sigma^{N}$
  (where $\Sigma = \{1, \ldots, \sigma\}$ and
  $\sigma = \Omega(N^{1/3})$),
  answers any batch of $m$ symbol occurrence queries
  (\cref{def:symbol-occ}) on $T$
  in $\bigO((|G| + m) \log^{\bigO(1)} N)$ total time,
  then the Boolean matrix product of any two $n \times n$
  Boolean matrices can be computed in $\bigO(n^2 \log^{\bigO(1)} n)$ time.
\end{restatable}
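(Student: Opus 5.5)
The plan is to encode the Boolean product directly into a small SLG whose expanded text is cubically long. Then each entry of $C = AB$ becomes one symbol occurrence query, so a batch of $m = n^2$ queries recovers the whole product.

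Let $A, B \in \BinaryAlphabet^{n \times n}$. For every $k \in [1 \dd n]$, I define the string $Y_k$ listing the indices $j \in [1 \dd n]$ with $B[k,j] = \one$, in increasing order, as symbols over $\Sigma = \{1, \ldots, n\}$. For every $i \in [1 \dd n]$, I define $X_i = Y_{k_1} Y_{k_2} \cdots Y_{k_t}$, where $k_1 < \cdots < k_t$ are the indices with $A[i,k] = \one$. The text is $T = X_1 X_2 \cdots X_n$.

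The key correctness observation is immediate: symbol $j$ occurs in $X_i$ iff there is a $k$ with $A[i,k] = B[k,j] = \one$, i.e., iff $C[i,j] = \one$. Since each $X_i$ occupies a contiguous interval $(b_i \dd e_i]$ of $T$, we get $C[i,j] = \one$ exactly when the symbol occurrence query for symbol $j$ on $T(b_i \dd e_i]$ answers positively.

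Next I build the SLG $G$ and bound its size.
\begin{itemize}
\item For each $k$ with $Y_k$ nonempty, add a nonterminal $N_k \to Y_k$, of size $|Y_k|$. These rules have total size at most the number of ones in $B$, which is at most $n^2$.
\item For each $i$, add $M_i \to N_{k_1} \cdots N_{k_t}$, skipping empty $Y_k$. Each such rule has size at most $n$.
\item Add the start rule $S \to M_1 \cdots M_n$, skipping empty $X_i$. Rows $i$ with empty $X_i$ simply give all-zero rows of $C$.
\end{itemize}
Hence $|G| = \bigO(n^2)$, and $G$ is built in $\bigO(n^2)$ time. If the SLG definition forbids degenerate rules, adding a dummy extra symbol or padding handles this trivially.

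The lengths $|Y_k|$ are row counts of $B$. From them, $|X_i| = \sum_{k : A[i,k] = \one} |Y_k|$ is computed in $\bigO(n)$ per row, and prefix sums give all boundaries $b_i, e_i$ in $\bigO(n^2)$ total time.

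The only point needing care is the alphabet-size hypothesis $\sigma = \Omega(N^{1/3})$. Here $N = |T| \le n \cdot n \cdot n = n^3$, since each $X_i$ concatenates at most $n$ blocks of length at most $n$. Thus $\sigma = n \ge N^{1/3}$ holds automatically, and also $\log N = \bigO(\log n)$. If $T$ turns out very short, for instance empty, the product is trivial or can be read off directly, so this edge case causes no trouble.

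Running the hypothesized algorithm on $G$ with the $m = n^2$ queries $(j, b_i, e_i)$ costs $\bigO((|G| + m) \log^{\bigO(1)} N) = \bigO(n^2 \log^{\bigO(1)} n)$. Together with the $\bigO(n^2)$-time construction, this computes $C$ within the claimed bound.

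The main conceptual step is the one making the grammar small while the text is large. Sharing the nonterminals $N_k$ across all rows turns the union $\bigcup_{k \in S_i} \{j : B[k,j] = \one\}$ into a contiguous substring at only $\bigO(n)$ grammar cost per row. The remaining steps are routine bookkeeping.
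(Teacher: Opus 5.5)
Your proposal is correct and follows the paper's proof essentially step for step. Your $Y_k$ and $X_i$ are the paper's $\RowOnes{B}{k}$ and $T_{A,B,i}$, and your shared-nonterminal SLG of size $\bigO(n^2)$ is \cref{pr:mat-mul-answer-string-slg-construction}; the bound $N \le n^3$ giving $\sigma = n = \Omega(N^{1/3})$, the $m = n^2$ queries over prefix-sum boundaries, and the empty-text check (the paper's $A_{\rm sum}[n] = 0$ case) match the paper's argument exactly.
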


This immediately implies that unless we can multiply Boolean matrices in
$\bigO(n^2 \log^{\bigO(1)} n)$ time, there is no compressed index for symbol
occurrence queries on grammar-compressed text
that is simultaneously small, fast to query, and quickly constructible.

\begin{corollary}
	If there exists a data structure that, given any SLG $G$
  representing a string $T \in \Sigma^{N}$
  (where $\Sigma = \{1, \ldots, \sigma\}$ and $\sigma = \Omega(N^{1/3})$),
  answers symbol occurrence queries
  (\cref{def:symbol-occ}) on $T$
  in $\bigO(\log^{\bigO(1)} N)$ time,
  and takes $\bigO(|G| \log^{\bigO(1)} N)$ time to construct,
  then the Boolean matrix product of any two $n \times n$
  Boolean matrices can be computed in $\bigO(n^2 \log^{\bigO(1)} n)$ time.
\end{corollary}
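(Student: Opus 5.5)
The corollary should follow directly from \cref{th:hardness-of-slg-symbol-occ}. The plan is to turn the assumed data structure into the batch algorithm that the theorem requires, and then invoke the theorem as a black box.

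Let $\mathcal{D}$ be the hypothesized data structure. Suppose we are given an SLG $G=(V,\Sigma,R,S)$ generating $T\in\Sigma^{N}$ with $\Sigma=\{1,\ldots,\sigma\}$ and $\sigma=\Omega(N^{1/3})$, together with a batch of $m$ symbol occurrence queries (\cref{def:symbol-occ}). We run the construction procedure of $\mathcal{D}$ on $G$, which takes $\bigO(|G|\log^{\bigO(1)} N)$ time by assumption. We then answer the $m$ queries one by one, each in $\bigO(\log^{\bigO(1)} N)$ time, for $\bigO(m\log^{\bigO(1)} N)$ in total.

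The whole procedure therefore runs in $\bigO((|G|+m)\log^{\bigO(1)} N)$ time. The preconditions match exactly: the grammar class, the alphabet size condition $\sigma=\Omega(N^{1/3})$, and the query type are the same in both statements. So the procedure is precisely an algorithm of the kind assumed in \cref{th:hardness-of-slg-symbol-occ}, and applying that theorem gives an $\bigO(n^2\log^{\bigO(1)} n)$-time algorithm for Boolean matrix multiplication.

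The only point needing a moment's care is that no separate space bound is needed. Whatever space $\mathcal{D}$ uses is already bounded by its construction time, and all costs are measured in time only. All polylogarithmic factors stay polylogarithmic in $N$, and the conversion from $\log N$ to $\log n$ is handled inside the theorem. I do not expect any step to be a real obstacle; all the substantive work lives in the proof of \cref{th:hardness-of-slg-symbol-occ}.
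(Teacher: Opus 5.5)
Your proposal is correct and matches the paper's intended argument: the paper states this corollary as an immediate consequence of \cref{th:hardness-of-slg-symbol-occ}, and the implicit proof is exactly yours. Building the structure in $\bigO(|G|\log^{\bigO(1)} N)$ time and answering $m$ queries at polylogarithmic cost each yields the $\bigO((|G|+m)\log^{\bigO(1)} N)$ batch algorithm the theorem requires.
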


Since rank queries allow answering symbol occurrence queries, the
above results also hold for rank queries. We state them for
symbol occurrence queries, as this establishes the hardness of these
easier queries (i.e., yields a stronger result). To our knowledge, these
are the first hardness results for symbol occurrence queries, establishing
a surprisingly strong barrier in indexing for these extremely basic
frequency-related queries.

It is worth separating the above result from the well-understood rank
queries used inside BWT-based indexes. The \emph{Burrows--Wheeler
transform (BWT)}~\cite{BWT} is a permutation of the text that plays a
central role in data compression and text indexing~\cite{FerraginaM05,rindex}:
the FM-index of Ferragina and Manzini~\cite{FerraginaM05} relies on
rank over the BWT stored in plain form, whereas the $r$-index of Gagie et
al.~\cite{rindex} relies on rank over its run-length-compressed form,
using $\bigO(r(T))$ or $\bigO(r(T) \log n)$ space, where $r(T)$ is the
number of runs in the BWT of $T$. In both cases, the relevant primitive
is rank on the BWT sequence itself, either uncompressed or only
run-length-compressed, and this setting is well understood from the
upper and lower bound perspectives~\cite{SepulvedaKKP18}. The
surprising point is that $r(T)$ and the smallest grammar size $g^*(T)$
are known to be within $\log^{\bigO(1)} n$ factors of each other in the
worst case~\cite{resolution,GNPlatin18}: thus, rank over the
run-length-compressed BWT is understood, while rank over the original
grammar-compressed text remains challenging.

The hardness is not confined to grammar compression: as explained next,
the above conditional lower bounds for symbol occurrence queries hold even
for significantly weaker compression methods.

\paragraph{Generalization to LZ78}

LZ78~\cite{LZ78} is a classical compression method that, unlike other
compression schemes such as LZ77~\cite{LZ77} or grammar compression,
admits significantly faster algorithms and queries on the underlying text.
For example, the complexity of random access queries on
LZ78-compressed texts (allowing $\bigO(\log^{\bigO(1)} n)$ overhead in space)
is $\Theta(\log \log n)$~\cite{DuttaLRR13,grammarboosting} time, whereas, as noted above, for LZ77,
the optimal query time for random access is
$\Theta(\tfrac{\log n}{\log \log n})$~\cite{BelazzouguiCPT15,blocktree,balancing,VerbinY13}.
This decrease in query time comes at the price of reduced compression ratio:
while LZ77 and grammar compression are capable of exponential
compression, LZ78 cannot compress a length-$n$ string below
$\Omega(\sqrt{n})$ bits.
This motivates us to ask whether rank and symbol occurrence queries can
also be answered more efficiently on LZ78-compressed texts. We answer this
question negatively: we show that the above reduction
from Boolean matrix multiplication holds even on LZ78-compressed text.

\begin{restatable}{theorem}{thhardnessoflzbigsymbolocc}\label{th:hardness-of-lz78-symbol-occ}
  If there exists an algorithm that, given the LZ78 representation
  (\cref{def:lz78-representation}) of a string $T \in \Sigma^{N}$
  (where $\Sigma = \{1, \ldots, \sigma\}$ and $\sigma = \Omega(N^{1/3})$),
  answers any batch of $m$ symbol occurrence queries (\cref{def:symbol-occ})
  on $T$ in $\bigO((\BigLZSize{T} + m) \log^{\bigO(1)} N)$ total time,
  then the Boolean matrix product of any two $n \times n$
  Boolean matrices can be computed in $\bigO(n^2 \log^{\bigO(1)} n)$ time.
\end{restatable}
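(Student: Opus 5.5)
We reduce Boolean matrix multiplication directly to a batch of symbol occurrence queries on an LZ78-compressible string. Given $A, B \in \BinaryAlphabet^{n \times n}$, we will build a string $T$ of length $N = \Theta(n^3)$ over an alphabet of size $\sigma = \Theta(n)$, which ensures $\sigma = \Omega(N^{1/3})$. The construction will satisfy three conditions: $\BigLZSize{T} = \bigO(n^2 \log^{\bigO(1)} n)$; the LZ78 representation can be output in $\bigO(n^2\log^{\bigO(1)} n)$ time; and each entry $C[i,j]$ is the answer to a single symbol occurrence query. With $m = n^2$ queries the assumed algorithm then runs in $\bigO(n^2 \log^{\bigO(1)} n)$ time, which proves the theorem.

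\textbf{Grammar-style idea.} For row $i$ of $A$, let $\OnePos{}$-style notation describe its set $K_i = \{k : A[i,k]=\one\}$. For each $k$, let $S_k$ be a string listing the column symbols $j$ with $B[k,j]=\one$. Then $C[i,j]=\one$ if and only if symbol $j$ occurs in the concatenation of $S_k$ over $k \in K_i$. For a grammar this is immediate: define a nonterminal per $S_k$ and one per row, with total size $\bigO(n^2)$, and let $T$ be the concatenation of the row blocks. The difficulty for LZ78 is that phrases can only extend a previous phrase by one symbol, so arbitrary subsets cannot be shared.

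\textbf{LZ78 construction.} We replace the arbitrary concatenation with a structure that LZ78 can express. Each row block will consist of $n$ consecutive segments, one per $k$. Segment $k$ of row $i$ equals $S'_k$ if $A[i,k]=\one$, and a filler of equal length over fresh or dummy symbols otherwise. Here $S'_k$ is a length-$n$ padded version of the column set, i.e.\ position $j$ holds symbol $j$ if $B[k,j]=\one$ and a dummy otherwise. To make the factorization cheap, each segment of length $n$ should parse into $\bigO(\log^{\bigO(1)} n)$ phrases.

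The plan to achieve this is to first emit a ``dictionary-building'' prefix of total LZ78 size $\bigO(n^2)$. This prefix creates the phrases $S'_k[1\dd \ell]$ for all $\ell$, i.e.\ the full trie paths $\$_k S'_k$, each preceded by a distinct separator $\$_k$ so that paths for different $k$ do not interfere. Building one path of length $n$ costs $n$ phrases, so the total is $n^2$ phrases. Later occurrences of $\$_k S'_k$ then become a single phrase, plus one extension symbol. The extension symbols must remain fresh so that phrases stay distinct. To keep the parse predictable, we will append a unique terminator symbol after each segment. Using $\Theta(n^2)$ such terminators would blow up the alphabet, so instead we will use pairs of symbols from a set of size $\Theta(n)$ (encoding a $(i,k)$ index in two symbols). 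We will verify that the greedy LZ78 parse still matches our intended phrases.

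The query for $C[i,j]$ is then whether symbol $j$ occurs in the interval of row $i$'s block. Separators, dummies, and terminators are disjoint from column symbols, so the answer is exactly $\bigvee_k A[i,k]\wedge B[k,j]$.

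\textbf{Main obstacle.} The main obstacle is controlling the greedy LZ78 parse exactly. Phrases must be distinct, and each new phrase is the longest previous phrase plus one character. This means the filler for $A[i,k]=\zero$ and the terminators must not accidentally extend dictionary paths or create long unintended matches. The first attempt will be to make every segment start with $\$_k$ or a ``zero'' variant $\$'_k$, whose path $\$'_k$ followed by dummies is also prebuilt. We then argue by induction over blocks that the parse of each segment is $\$_k S'_k$ (or its zero variant) followed by the two-symbol terminator, with the terminator pair unique. This gives $\bigO(1)$ phrases per segment and $\BigLZSize{T}=\bigO(n^2)$ overall. The claimed total time follows, since the representation can be written directly.
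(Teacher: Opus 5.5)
Your architecture matches the paper's. A prefix gadget prebuilds each path $\dol_k S'_k$ under a private anchor, each later occurrence should then cost one phrase, and each entry is one symbol occurrence query on row $i$'s block. Your accounting ($N=\Theta(n^3)$, $\sigma=\Theta(n)$) is also fine. The gap is the step you defer: that the greedy parse stays aligned with the segments. With a two-symbol terminator $\tau_1\tau_2$ this is false.

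Only $\dol_k S'_k$ (and the zero-variant path) is prebuilt. So whenever $\dol_k S'_k\tau_1$ has not been a phrase before, the phrase ends right after $\tau_1$, and $\tau_2$ starts the next phrase. The first time a given symbol does this, it is a singleton phrase. The next time, it is already a phrase, so the parser emits $\tau_2$ together with the next segment's anchor. The rest of that segment begins with a column symbol or a dummy, which start no earlier phrase, so it is chopped into many short phrases.

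This happens immediately under either natural encoding:
\begin{itemize}
\item With $\tau_1\tau_2=c_i d_k$, every segment is the first occurrence of its pair (anchor, $c_i$). The dangling $d_1$ of segment $(2,1)$ repeats the one left by segment $(1,1)$.
\item With $\tau_1\tau_2=d_k c_i$, the first two segments of row $1$ are first occurrences of their anchors. Both leave a dangling $c_1$, so the third anchor is swallowed.
\end{itemize}
Hence neither the bound of $\bigO(1)$ phrases per segment nor the claim that the representation can be written down directly in $\bigO(n^2)$ time is justified. Without it, you would have to run the parser on a string of length $\Theta(n^3)$.

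The missing observation is that the extension symbol need not be unique per segment. The anchor already identifies $k$, so the extension only has to separate rows. Append a single symbol $\hash_i$ per row, i.e., only $n$ extra symbols. Row $i$ contains each anchor at most once, and distinct rows use distinct $\hash_i$. So $\dol_k S'_k\hash_i$ was never a phrase, while $\dol_k S'_k$ was. The phrase starting at the anchor is therefore exactly $\dol_k S'_k\hash_i$. It cannot be longer, because every proper prefix of an LZ78 phrase is an earlier phrase (follow the source pointers). The next phrase then starts at the next anchor.

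This is \cref{lm:lz78-representation}. It gives one phrase per segment and $\BigLZSize{T}\le n+2n^2$. Each pair is explicitly (index of the dictionary phrase $\dol_k S_k$, $\hash_i$), computable in $\bigO(n^2)$ time. With this fix, your padding and zero-variant fillers become unnecessary (the paper simply omits segments with $A[i,k]=\zero$), though they do no harm.
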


As in the grammar-compressed case, this immediately implies that unless
we can multiply any two Boolean matrices in $\bigO(n^2 \log^{\bigO(1)} n)$
time, there is no compressed index for symbol occurrence queries on
LZ78-compressed text that is simultaneously small, fast to query, and
quickly constructible. The same implication also holds for rank queries.

\begin{corollary}
	If there exists a data structure that, given the LZ78 representation
  (\cref{def:lz78-representation}) of a string $T \in \Sigma^{N}$
  (where $\Sigma = \{1, \ldots, \sigma\}$ and $\sigma = \Omega(N^{1/3})$),
  answers symbol occurrence queries (\cref{def:symbol-occ})
  on $T$ in $\bigO(\log^{\bigO(1)} N)$ time,
  and takes $\bigO(\BigLZSize{T} \log^{\bigO(1)} N)$ time to construct,
  then the Boolean matrix product of any two $n \times n$
  Boolean matrices can be computed in $\bigO(n^2 \log^{\bigO(1)} n)$ time.
\end{corollary}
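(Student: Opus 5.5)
This corollary follows from \cref{th:hardness-of-lz78-symbol-occ}, in the same way the grammar corollary follows from \cref{th:hardness-of-slg-symbol-occ}. The plan is to turn a fast data structure into a fast batch algorithm and then apply the theorem. Suppose a data structure $D$ exists as in the statement. Given the LZ78 representation of $T \in \Sigma^{N}$ with $\sigma = \Omega(N^{1/3})$ and a batch of $m$ symbol occurrence queries, we first build $D$ from the LZ78 representation in $\bigO(\BigLZSize{T} \log^{\bigO(1)} N)$ time. We then answer the $m$ queries one by one, each in $\bigO(\log^{\bigO(1)} N)$ time. The total time is $\bigO((\BigLZSize{T} + m)\log^{\bigO(1)} N)$, which is exactly the hypothesis of \cref{th:hardness-of-lz78-symbol-occ}. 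That theorem then gives an $\bigO(n^2 \log^{\bigO(1)} n)$-time algorithm for Boolean matrix multiplication.

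The only points to check are bookkeeping.
\begin{itemize}
\item Both polylogarithmic factors are in the same parameter $N$, so their sum is still $\log^{\bigO(1)} N$.
\item The input format the data structure is built from must be the LZ78 representation of \cref{def:lz78-representation}, the same object the theorem feeds to the batch algorithm.
\item The alphabet condition $\sigma = \Omega(N^{1/3})$ is the same in both statements, so the instances produced inside the reduction of \cref{th:hardness-of-lz78-symbol-occ} are valid inputs to $D$.
\end{itemize}

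I expect no real obstacle here. The substance lies in \cref{th:hardness-of-lz78-symbol-occ}, which is stated earlier and which we may assume. The one thing worth making explicit is that the reduction uses the batch algorithm as a black box: it only needs the total time bound, not any online or adaptive behaviour. So simulating the batch algorithm by preprocessing once and then answering queries sequentially (even adaptively) is legitimate.
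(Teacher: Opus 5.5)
Your proposal is correct and matches the paper, which derives this corollary immediately from \cref{th:hardness-of-lz78-symbol-occ}. As you do, it builds the structure once in $\bigO(\BigLZSize{T}\log^{\bigO(1)} N)$ time and answers the $m$ queries one at a time, which gives exactly the batch bound $\bigO((\BigLZSize{T}+m)\log^{\bigO(1)} N)$ that the theorem assumes.
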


\paragraph{Approximate Rank Queries}

The above results raise a natural question of whether
\emph{approximating} rank queries is easier than computing rank values
exactly. Since all of the above results
hold even for symbol occurrence queries (which distinguish whether
$\TwoSidedRank{T}{b}{e}{c} = 0$ or $\TwoSidedRank{T}{b}{e}{c} \geq
1$), we immediately obtain the hardness of multiplicative
approximation (since it would distinguish between the two cases). This
leaves open the possibility of an \ul{additive} approximation of
$\TwoSidedRank{T}{b}{e}{c}$. We show that even additive approximation
is hard.

\begin{restatable}{theorem}{thhardnessofrankapproximation}\label{th:hardness-of-rank-approximation}
  Let $\mu \in (0, 1)$ be a constant.
  If there exists an algorithm that,
  given any SLG $G = (V, \Sigma, R, S)$ generating a string $T \in \Sigma^{N}$
  (where $\Sigma = \{1, \ldots, \sigma\}$ and $\sigma = \Omega(N^{\mu/3})$),
  computes an $\lfloor N^{1-\mu} \rfloor$-additive approximation of any batch of $m$ two-sided rank queries
  (see \cref{def:rank,def:additive-approximation,rm:rank})
  in $\bigO((|G| + m) \log^{\bigO(1)} N)$ total time,
  then the Boolean matrix product of any two $n \times n$
  Boolean matrices can be computed in $\bigO(n^2 \log^{\bigO(1)} n)$ time.
\end{restatable}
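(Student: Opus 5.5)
The plan is to reuse the reduction behind \cref{th:hardness-of-slg-symbol-occ} and amplify it, so that exact distinctions become robust to additive error $\lfloor N^{1-\mu}\rfloor$.

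\textbf{Base reduction.} Start from $A,B \in \BinaryAlphabet^{n\times n}$. Build an SLG of size $\bigO(n^2 \log^{\bigO(1)} n)$ generating a string $T_0$ of length $N_0 = \Theta(n^3)$ over an alphabet of size $\sigma = \Theta(n) = \Theta(N_0^{1/3})$. For each pair $(i,j)$ there should be a substring $T_0(b_{ij} \dd e_{ij}]$ and a symbol $c_{ij}$ such that $\TwoSidedRank{T_0}{b_{ij}}{e_{ij}}{c_{ij}} = |\{k : A[i,k]=B[k,j]=1\}|$. A natural way to get this is to let column $j$ of $B$ be encoded by symbols, and to arrange, for each $i$, a nonterminal listing the blocks of $B$-rows $k$ with $A[i,k]=1$. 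Rank then counts witnesses $k$, so $C[i,j]=1$ iff the count is at least $1$.

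\textbf{Amplification.} Replace every letter $x$ of $T_0$ by $x^{L}$ (the stretch operation). In the grammar this costs $\bigO(\log L)$ extra rules per terminal: add a nonterminal for $x^L$ built by repeated doubling, and substitute it for $x$. The new string $T$ has length $N = N_0 L$. Queries are rescaled to positions $b_{ij}L$ and $e_{ij}L$, and every rank value is multiplied by $L$. So the true answer is $0$ if $C[i,j]=0$ and at least $L$ otherwise.

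\textbf{Choice of $L$.} An $\lfloor N^{1-\mu}\rfloor$-additive approximation separates the two cases as long as $L > 2N^{1-\mu}$. Choose $L = n^{t}$ with $N = n^{3+t}$. The condition becomes $n^t > 2n^{(3+t)(1-\mu)}$, which holds for $t > 3(1-\mu)/\mu$ and large $n$. Then $\sigma = \Theta(n) = \Theta(N^{1/(3+t)})$. I need $\sigma = \Omega(N^{\mu/3})$, i.e. $1/(3+t) \ge \mu/3$, i.e. $t \le 3(1-\mu)/\mu$. This clashes exactly at the boundary.

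\textbf{Main obstacle: the alphabet gap.} There are two ways I would try to close it.
\begin{itemize}
\item Enlarge the alphabet by padding with dummy distinct symbols. Appending $\Theta(N^{\mu/3})$ fresh letters costs only $\bigO(N^{\mu/3})$ grammar size, which is fine if it stays within $\bigO(n^2 \log^{\bigO(1)} n)$. Taking $t = 3(1-\mu)/\mu + \epsilon$ with small $\epsilon$, we get $N^{\mu/3} = n^{(3+t)\mu/3} \approx n^{1+\epsilon\mu/3}$, which is well below $n^2$. Padding is therefore affordable and resolves the alphabet condition. It does not change any queried rank.
\item Alternatively, use a base reduction with rectangular blocks to tune the exponents.
\end{itemize}
I will try padding first.

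\textbf{Running time.} The grammar has size $\bigO(n^2 \log^{\bigO(1)} n)$, and $m = n^2$. Since $\log N = \bigO(\log n)$ for constant $t$, the assumed algorithm runs in $\bigO(n^2 \log^{\bigO(1)} n)$. Finally, set $C[i,j]=1$ iff the approximate answer exceeds $N^{1-\mu}$.
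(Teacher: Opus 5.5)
Your proposal is correct, and its skeleton is the paper's: the same string $T_{A,B}$ (your per-$i$ list of $B$-row blocks is exactly $T_{A,B,i}$), the same stretch $\strword{k}{\cdot}$ realized by doubling rules at cost $\bigO(\sigma\log k)$, and the same $0$-versus-$\geq k$ gap argument. You diverge only in how you meet the alphabet condition.

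There, the obstacle you found is a constant factor, not a real clash. Scaling $L$ by a constant $c$ scales $N^{1-\mu}$ by only $c^{1-\mu}$. So at the boundary exponent $t = 3(1-\mu)/\mu$, any constant $c$ with $c^{\mu} \geq 2$ restores the separation, while $\sigma = n = \Omega(N^{\mu/3})$ survives because the $\Omega$ absorbs constants. This is what the paper does. It takes $k = \lfloor 2^{1/\mu} n^{3(1/\mu-1)}\rfloor$ and uses only the upper bound $N \leq k n^3 \leq 2^{1/\mu} n^{3/\mu}$. From this it derives $\lfloor N^{1-\mu}\rfloor \leq k/2$ and thresholds at $k/2$, with no padding.

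Your padding route is also sound. You take exponent $t+\epsilon$ and append $\Theta(n^{1+\epsilon\mu/3})$ fresh symbols after the stretched part.
\begin{itemize}
\item For $\epsilon \leq 3/\mu$ this adds only $\bigO(n^{1+\epsilon\mu/3}) \subseteq \bigO(n^2)$ to the grammar.
\item The fresh symbols do not change any queried rank.
\item You gain polynomial slack, $N^{1-\mu} = \bigO(L\,n^{-\epsilon\mu})$, so you need not track constants or floors.
\end{itemize}
The price is an extra gadget that the paper avoids, keeping the alphabet exactly $\{1,\ldots,n\}$.

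Two minor remarks. First, $|T_{A,B}|$ is only at most $n^3$, not $\Theta(n^3)$; this is harmless because both of your constraints need only an upper bound on $N$. Second, $T_{A,B}$ can be empty. The paper handles this by outputting the zero matrix; your padding makes the string nonempty anyway.
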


\begin{corollary}
  Let $\mu \in (0, 1)$ be a constant.
  If there exists a data structure that, given any SLG $G$
  representing a string $T \in \Sigma^{N}$
  (where $\Sigma = \{1, \ldots, \sigma\}$ and $\sigma = \Omega(N^{\mu/3})$),
  computes an $\lfloor N^{1-\mu} \rfloor$-additive approximation of a given two-sided rank query
  (\cref{def:rank,def:additive-approximation,rm:rank}) in $\bigO(\log^{\bigO(1)} N)$ time,
  and takes $\bigO(|G| \log^{\bigO(1)} N)$ time to construct,
  then the Boolean matrix product of any two $n \times n$
  Boolean matrices can be computed in $\bigO(n^2 \log^{\bigO(1)} n)$ time.
\end{corollary}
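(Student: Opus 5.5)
The plan is to derive this directly from \cref{th:hardness-of-rank-approximation} by building the batch algorithm that the theorem assumes out of the hypothesized data structure. No new reduction is needed: the corollary only trades a per-query bound plus a preprocessing bound for a total-time bound on a batch.

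Fix the constant $\mu \in (0,1)$ and suppose the data structure exists. The batch algorithm takes as input an SLG $G = (V, \Sigma, R, S)$ generating $T \in \Sigma^{N}$, with $\Sigma = \{1, \ldots, \sigma\}$ and $\sigma = \Omega(N^{\mu/3})$, together with a batch of $m$ two-sided rank queries. It proceeds in two steps:
\begin{enumerate}
\item Run the construction procedure on $G$, which takes $\bigO(|G| \log^{\bigO(1)} N)$ time.
\item Pose each of the $m$ queries to the resulting structure. Each answer is an $\lfloor N^{1-\mu} \rfloor$-additive approximation in the sense of \cref{def:additive-approximation}, and each costs $\bigO(\log^{\bigO(1)} N)$ time.
\end{enumerate}

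The total running time is therefore
\[
\bigO(|G| \log^{\bigO(1)} N) + m \cdot \bigO(\log^{\bigO(1)} N) = \bigO((|G| + m) \log^{\bigO(1)} N).
\]
To merge the two polylogarithmic factors into one, take the larger of the two hidden exponents. The output consists of exactly the approximations required by \cref{th:hardness-of-rank-approximation}.

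That theorem then gives that the Boolean product of any two $n \times n$ Boolean matrices can be computed in $\bigO(n^2 \log^{\bigO(1)} n)$ time, which is the claim.

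The only point needing care is bookkeeping that the two statements match. The alphabet-size condition, the additive error $\lfloor N^{1-\mu} \rfloor$, and the query semantics from \cref{def:rank,rm:rank} are stated identically in the corollary and the theorem, so every instance the theorem's reduction produces is a legal input for the data structure. I also need the data structure to be uniformly constructible, meaning a single algorithm that works for every SLG rather than a separate structure for each input. This is implicit in the phrase ``takes $\bigO(|G| \log^{\bigO(1)} N)$ time to construct,'' so I do not expect a genuine obstacle here.
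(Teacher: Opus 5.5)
Your proposal is correct and is exactly the intended argument. The paper gives no separate proof and treats the corollary as immediate from \cref{th:hardness-of-rank-approximation}: build the structure in $\bigO(|G|\log^{\bigO(1)} N)$ time, then answer the $m$ queries one at a time, for $\bigO((|G|+m)\log^{\bigO(1)} N)$ total time, which is precisely the batch algorithm that theorem assumes.
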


\paragraph{Hardness of Other Frequency-Related Queries}

After establishing the hardness of the most basic frequency-related
queries, we turn our attention to other related queries, namely, the
range distinct counting and range mode frequency queries. Consider a
length-$n$ string $T \in \Sigma^{n}$. The \emph{range
	distinct counting} query, given any $b, e \in [0 \dd n]$,
returns $\RangeDistinctCount{T}{b}{e} := |\{T[i] : i \in (b \dd e]\}|$, i.e., the number of
distinct elements in the block $T(b \dd e]$; see
\cref{def:range-distinct-count}. Similarly to the queries considered above,
range distinct counting queries can be answered efficiently in the uncompressed
setting (in~\cite{KaplanRSV07}, the authors describe an algorithm that
achieves $\bigO(n \log n)$ preprocessing time and $\bigO(\log n)$
query time).

We prove that, assuming the popular \emph{Orthogonal Vectors
Conjecture} (\cref{con:ov}), answering a range distinct counting query
on a grammar-compressed string essentially requires
inspecting the entire grammar. As before, we obtain this result as a corollary of the following stronger
batch lower bound:

\begin{restatable}{theorem}{thhardnessofrangedistinctcount}\label{th:hardness-of-range-distinct-count}
  Assuming the Orthogonal Vectors Conjecture (\cref{con:ov}), there is
  no algorithm that, given any SLG $G = (V, \Sigma, R, S)$ representing
  a string $T \in \Sigma^{N}$ (where $\Sigma = \{1, \dots, \sigma\}$ and $\sigma = \Omega((N/\log N)^{1/2})$),
  answers any batch of $m = \Omega(|G|/\log N)$ range distinct count queries (\cref{def:range-distinct-count})
  in $\bigO(m|G|^{1-\epsilon} \log^{\bigO(1)} N)$ time, for any constant $\epsilon > 0$.
\end{restatable}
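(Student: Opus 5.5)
We reduce from Orthogonal Vectors. Take an OV instance with sets $A = \{a_1,\dots,a_n\}$ and $B = \{b_1,\dots,b_n\}$ of vectors in $\BinaryAlphabet^{d}$, where $d = c\log n$. We will build, in $\bigO(nd)$ time, an SLG $G$ with $|G| = \bigO(nd)$ and a batch of $m = n$ range distinct count queries whose answers reveal, for each $a_i$, whether some $b_j$ is orthogonal to it. An algorithm answering the batch in $\bigO(m|G|^{1-\epsilon}\log^{\bigO(1)} N)$ time would then run in $\bigO(n\cdot (nd)^{1-\epsilon}\log^{\bigO(1)} N) = \bigO(n^{2-\epsilon}\operatorname{poly}(d))$ time. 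We will keep $N = \bigO(n^2 d)$, so $\log N = \bigO(\log n)$, contradicting \cref{con:ov}.

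The alphabet is $\Sigma = [1\dd n]$, with symbols standing for the indices $j$ of the vectors $b_j$. For each coordinate $k \in [1\dd d]$, let $S_k$ be a fixed-order string (say sorted) listing every $j$ with $b_j[k] = \one$. For $a_i$, form the string $X_i = \bigodot_{k:\, a_i[k]=\one} S_k$. Then $\RangeDistinctCount{X_i}{0}{|X_i|}$ counts the $j$ such that $\DotProduct{a_i}{b_j} \neq 0$, so $a_i$ has an orthogonal partner iff this count is less than $n$.

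We realize all the $X_i$ as disjoint blocks of one text $T = X_1 X_2 \cdots X_n$. The grammar has one nonterminal per $S_k$, each of size $\bigO(|S_k|)$, for a total of $\bigO(nd)$. Each $X_i$ is then a concatenation of at most $d$ such nonterminals, and $T$ is a concatenation of the $X_i$, which adds another $\bigO(nd)$. The text length is $N \le n\cdot d\cdot n = \bigO(n^2 d)$. Block boundaries are computed from the lengths $|S_k|$ in $\bigO(nd)$ time. The $i$-th query is $\RangeDistinctCount{T}{\beta_{i-1}}{\beta_i}$, where $\beta_i$ is the end of $X_i$. This gives $m = n$ queries, and since $|G| = \bigO(n\log n)$ and $\log N = \Theta(\log n)$, we get $m = \Omega(|G|/\log N)$ as required.

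The main obstacle is the alphabet condition $\sigma = \Omega((N/\log N)^{1/2})$. With $N = \Theta(n^2 d)$ in the worst case and $\sigma = n$, we have $(N/\log N)^{1/2} = \Theta(n)$ because $d = \Theta(\log n)$. The condition therefore holds up to constants, and presumably this is exactly why the theorem includes the $\log N$ factor. If the definition requires $\sigma$ larger, or $N$ needs to be pinned exactly, we pad with a run of a fresh symbol, encoded in $\bigO(\log N)$ grammar size by repeated doubling, or pad $\Sigma$ with unused symbols. Neither changes the query answers for the $X_i$ blocks. It remains to check that the SLG definition used in the paper (an SLG, which allows long right-hand sides) accommodates this size accounting; if only binary rules were allowed, this would cost an extra $\bigO(1)$ factor. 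With these checks done, the contradiction with \cref{con:ov} follows, since OV with $d = c\log n$ is decided in $n^{2-\epsilon}\operatorname{poly}(\log n)$ time, which is $\bigO(n^{2-\epsilon'})$ for every $c$.
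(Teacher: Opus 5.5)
Your construction is the paper's. With $B = A$ (which you need anyway, since \cref{con:ov} is stated for a single set), your $S_k$ is $\OneVectors{A}{k}$ and your $X_i$ is $U_{A,i}$. The grammar and the $n$ block queries also coincide with the paper's.

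\textbf{The gap.} You assert $\log N = \Theta(\log n)$ but prove only the upper bound $N \le dn^2$. The lower bound is exactly what the hypothesis $m = \Omega(|G|/\log N)$ needs. Since $|G|$ can be $\Theta(nd) = \Theta(n\log n)$, the choice $m = n$ qualifies only if $\log N = \Omega(\log n)$; for $N \le 1$, $\log N$ and $N/\log N$ are not even meaningful. Your reduction does not guarantee this. If $A$ is supported on the first half of the coordinates and $B$ on the second half, then $|G| = \Theta(nd)$ but $N = 0$, although no vector is zero. The hypothetical algorithm is therefore not promised to be fast on all instances you feed it.

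\textbf{How the paper closes it.} Its first step answers yes at once if some $a_i = \zero^{d}$. Otherwise $i$ occurs in $U_{A,i}$ once per one-coordinate of $a_i$, so $N \ge \sum_i |\OnePos{a_i}| \ge n$, giving $\log N = \Theta(\log n)$.

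\textbf{Fixes in your setting.}
\begin{itemize}
\item Counting: $|X_i| = \sum_j \DotProduct{a_i}{b_j}$, so $N = \sum_{i,j}\DotProduct{a_i}{b_j}$. Compute $N$ in $\bigO(nd)$ time and answer yes if $N < n^2$, since some pair must then be orthogonal. Otherwise $\log N \ge 2\log n$.
\item Padding, redirected to this purpose: append $n^2$ copies of a fresh symbol (grammar size $\bigO(\log n)$ by doubling). This forces $N \ge n^2$ while keeping $N = \bigO(n^2 d)$ and $\sigma = n+1$.
\end{itemize}
The alphabet condition, for which you invoked padding, needs nothing beyond $N \le dn^2$. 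Lengthening $T$ only makes that condition harder to satisfy.
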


\begin{corollary}
  Assuming the Orthogonal Vectors Conjecture (\cref{con:ov}), there is
  no data structure that, given any SLG $G$
  representing a string $T \in \Sigma^{N}$
  (where $\Sigma = \{1, \ldots, \sigma\}$ and $\sigma = \Omega((N/\log N)^{1/2})$),
  answers range distinct counting queries
  (\cref{def:range-distinct-count}) on $T$
  in $\bigO(|G|^{1-\epsilon} \log^{\bigO(1)} N)$ time,
  and takes $\bigO(|G|^{2-\epsilon} \log^{\bigO(1)} N)$ time to construct,
  for any $\epsilon > 0$.
\end{corollary}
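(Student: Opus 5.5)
The plan is to derive this corollary directly from \cref{th:hardness-of-range-distinct-count} by showing that such a data structure yields a batch algorithm whose running time the theorem rules out. Suppose, for contradiction, that such a data structure exists for some constant $\epsilon > 0$. Its query time is $\bigO(|G|^{1-\epsilon}\log^{\bigO(1)} N)$ and its construction time is $\bigO(|G|^{2-\epsilon}\log^{\bigO(1)} N)$.

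The batch algorithm works as follows. We are given an SLG $G$ representing $T \in \Sigma^{N}$ with $\sigma = \Omega((N/\log N)^{1/2})$, together with a batch of $m = \Omega(|G|/\log N)$ range distinct count queries. We first build the data structure from $G$, and then answer the $m$ queries one at a time. The total time is
\[
  \bigO\bigl(|G|^{2-\epsilon}\log^{\bigO(1)} N + m\,|G|^{1-\epsilon}\log^{\bigO(1)} N\bigr).
\]

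The key step is to absorb the construction cost into the query cost using the lower bound on the batch size. Since $m = \Omega(|G|/\log N)$, we have $|G| = \bigO(m\log N)$. Hence $|G|^{2-\epsilon} = |G|\cdot|G|^{1-\epsilon} = \bigO(m\,|G|^{1-\epsilon}\log N)$. The extra $\log N$ factor is absorbed into the $\log^{\bigO(1)} N$ term, so the total running time is $\bigO(m\,|G|^{1-\epsilon}\log^{\bigO(1)} N)$. This holds for every batch with $m = \Omega(|G|/\log N)$, with the same constant $\epsilon$. It therefore contradicts \cref{th:hardness-of-range-distinct-count}, and so no such data structure exists under the Orthogonal Vectors Conjecture.

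There is essentially no obstacle in this argument. The only point needing care is that the hidden constants in $m = \Omega(|G|/\log N)$ and in the polylogarithmic exponents are fixed, so the absorption above is valid uniformly over all inputs. It also uses the same $\epsilon$ as in the hypothesis, rather than requiring a smaller one.
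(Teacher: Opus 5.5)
Your proposal is correct and matches the paper's intended derivation. The paper states this corollary as a direct consequence of the batch lower bound in \cref{th:hardness-of-range-distinct-count}, and the only content needed is exactly your observation: since $m = \Omega(|G|/\log N)$, the $\bigO(|G|^{2-\epsilon}\log^{\bigO(1)} N)$ construction cost can be charged to the $m$ queries at the same $\epsilon$.
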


We complement this hardness result with essentially a matching upper
bound, showing how to answer a batch of $m$ range distinct counting queries in
$\bigO(m|G| \log n)$ time
(see \cref{th:range-distinct-count-on-slg-in-near-linear-time}).

We conclude our set of results by presenting an analogous hardness
argument for \emph{range mode frequency queries}. Given any $b, e \in [0 \dd n]$
satisfying $b < e$, the range mode frequency query returns the frequency of the
most common element in $T(b \dd e]$; see \cref{def:range-mode-freq}.

\begin{restatable}{theorem}{thhardnessofrangemodefreq}\label{th:hardness-of-range-mode-freq}
  Assuming the Orthogonal Vectors Conjecture (\cref{con:ov}), there is
  no algorithm that, given any SLG $G = (V, \Sigma, R, S)$ representing
  a string $T \in \Sigma^{N}$ (where $\Sigma = \{1, \dots, \sigma\}$ and $\sigma = \Omega((N/\log N)^{1/2})$),
  answers any batch of $m = \Omega(|G|/\log N)$ range mode frequency queries (\cref{def:range-mode-freq})
  in $\bigO(m|G|^{1-\epsilon} \log^{\bigO(1)} N)$ time, for any constant $\epsilon > 0$.
\end{restatable}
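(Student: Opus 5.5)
We prove the theorem by a fine-grained reduction from Orthogonal Vectors, following the template we use for range distinct counting (\cref{th:hardness-of-range-distinct-count}). The only change is the gadget, which now has to encode orthogonality in the \emph{maximum multiplicity} of a symbol rather than in the number of distinct symbols.

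\textbf{Input and parameters.} Take an OV instance: sets $A=\{a_1,\dots,a_n\}$ and $B=\{b_1,\dots,b_n\}\subseteq\BinaryAlphabet^{d}$ with $d=\Theta(\log n)$. We choose the grammar size to be $|G|=\Theta(nd)$ and the number of queries to be $m=n=\Omega(|G|/\log N)$. Suppose the batch algorithm runs in $\bigO(m|G|^{1-\epsilon}\log^{\bigO(1)}N)$ time. Then we solve OV in $\bigO(n\cdot(nd)^{1-\epsilon}\log^{\bigO(1)} n)=\bigO(n^{2-\epsilon'})$ time, which contradicts \cref{con:ov}.

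\textbf{Construction.} We use $\sigma=\Theta(nd)$ symbols, namely one symbol $c_{j,k}$ for each vector $b_j\in B$ and each coordinate $k\in[1\dd d]$. This choice already hints at trouble with the alphabet bound, addressed below.
\begin{itemize}
\item For each $b_j$, let $W_j$ be the string of the symbols $c_{j,k}$ over all $k$ with $b_j[k]=\one$.
\item For each coordinate $k$, let $X_k$ be the concatenation of the symbols $c_{j,k}$ over all $j$ with $b_j[k]=\one$.
\item For each query vector $a_i$, form $Y_i$ as the concatenation of the $X_k$ over all $k$ with $a_i[k]=\one$.
\end{itemize}
The symbol $c_{j,k}$ appears in $Y_i$ exactly when $a_i[k]=b_j[k]=\one$. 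So $a_i$ is orthogonal to every vector of $B$ iff $Y_i$ has mode frequency $0$ (or is empty). However, this only tests orthogonality against all of $B$ at once, and each symbol occurs at most once, so frequencies carry no information. We need to aggregate per $j$.

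\textbf{Fix: one symbol per vector of $B$.} We use a single symbol $c_j$ per vector $b_j$ and let $X_k$ list the $c_j$ with $b_j[k]=\one$. Then the frequency of $c_j$ in $Y_i$ is exactly $\langle a_i,b_j\rangle$. This makes the range mode frequency equal to $\max_j\langle a_i,b_j\rangle$, which is a maximum, not a minimum. Orthogonality asks whether $\min_j\langle a_i,b_j\rangle=0$, so we complement. Append to $Y_i$ a block $Z_i$ containing each $c_j$ exactly $|a_i|-\langle a_i,b_j\rangle$ extra times. We cannot build $Z_i$ directly, but we can build it as the concatenation over all $k$ with $a_i[k]=\one$ of blocks $\bar X_k$, where $\bar X_k$ lists the $c_j$ with $b_j[k]=\zero$.

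Each $c_j$ then appears exactly $|a_i|$ times in $Y_i$, so the mode frequency tells us nothing. Instead we give the two halves different weights. Each copy in the $\bar X_k$ blocks gets weight $2$: repeat $\bar X_k$ twice, giving frequency $2(|a_i|-\langle a_i,b_j\rangle)+\langle a_i,b_j\rangle=2|a_i|-\langle a_i,b_j\rangle$. The mode frequency of the substring is then $2|a_i|-\min_j\langle a_i,b_j\rangle$. Since $|a_i|$ is known, we recover $\min_j\langle a_i,b_j\rangle$ and test whether it is $0$.

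\textbf{Realizing the strings as substrings of one SLG text.} The remaining task is to make all the strings $Y_i$ substrings of a single text generated by a small SLG.
\begin{itemize}
\item We create nonterminals for $X_k$ and $\bar X_k$. Each has a balanced binary rule over its letters, giving $\bigO(nd)$ rules in total.
\item For each $i$, we form a nonterminal $Q_i$ as a chain of $\bigO(d)$ rules concatenating the chosen $X_k$ and $\bar X_k\bar X_k$ blocks.
\item We define $T=Q_1 Q_2\cdots Q_n$.
\end{itemize}
This gives $|G|=\bigO(nd)$ and $N=\bigO(n^2d)$, with $\sigma=n=\Theta((N/\log N)^{1/2})$ as required. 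The query for $a_i$ is the interval spanned by $Q_i$; its endpoints are computed in $\bigO(|G|)$ time from the expansion lengths.

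\textbf{Main obstacle.} The delicate step is the complementation gadget: we must ensure that a max-type query detects a zero inner product. The weighted-repetition trick above handles this. We would double-check that $\min_j$ is captured for every $j$, including vectors $b_j$ whose count is $0$ in both halves. Those give frequency exactly $2|a_i|$ and are consistent with the formula.
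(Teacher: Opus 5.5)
Your final construction is correct and is essentially the paper's reduction: your blocks $\bar X_k$ are exactly the paper's $\ZeroVectors{A}{k}$ (take $B=A$, so the bichromatic framing is immaterial), and the extra $X_k$ block together with doubling $\bar X_k$ only adds the same amount $|a_i|$ to the count of every symbol, so the paper uses $\bar X_k$ once and tests whether the mode frequency of block $i$ is at least $|\OnePos{a_i}|$. The one step you assert without justification is $m = n = \Omega(|G|/\log N)$, which needs the lower bound $\log N = \Omega(\log n)$; in your construction this follows from $|Y_i| \geq |X_k|+|\bar X_k| = n$ whenever $a_i \neq \zero^{d}$ (an input containing a zero vector being a trivial yes-instance), whereas the paper, lacking the $X_k$ blocks, must separately dispose of the case where all vectors are equal to guarantee $N \geq n-1$.
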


\begin{corollary}
  Assuming the Orthogonal Vectors Conjecture (\cref{con:ov}), there is
  no data structure that, given any SLG $G$
  representing a string $T \in \Sigma^{N}$
  (where $\Sigma = \{1, \ldots, \sigma\}$ and $\sigma = \Omega((N/\log N)^{1/2})$),
  answers range mode frequency queries
  (\cref{def:range-mode-freq}) on string $T$
  in $\bigO(|G|^{1-\epsilon} \log^{\bigO(1)} N)$ time,
  and takes $\bigO(|G|^{2-\epsilon} \log^{\bigO(1)} N)$ time to construct,
  for any $\epsilon > 0$.
\end{corollary}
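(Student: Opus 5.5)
We reduce from Orthogonal Vectors. Given sets $A=\{a_1,\dots,a_n\}$ and $B=\{b_1,\dots,b_n\}$ of vectors in $\{0,1\}^d$ with $d=\Theta(\log n)$, the OV Conjecture (\cref{con:ov}) says that deciding whether some $\DotProduct{a_i}{b_j}=0$ takes $n^{2-o(1)}$ time. The plan is to build, in $\bigO(n\log^{\bigO(1)} n)$ time, an SLG $G$ of size $|G|=\Theta(n\log n)$. The generated string $T$ should have the following property: for each $i$, a single range mode frequency query on $T$ reveals whether $a_i$ is orthogonal to some $b_j$. We then issue $m=n=\Theta(|G|/\log N)$ queries. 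An algorithm running in $\bigO(m|G|^{1-\epsilon}\log^{\bigO(1)}N)$ time would then solve OV in $n^{2-\epsilon}\log^{\bigO(1)}n$ time, contradicting the conjecture.

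\textbf{Construction.} Use $n$ distinct symbols $c_1,\dots,c_n$, one per vector $b_j$, plus separators. For each coordinate $k\in[1\dd d]$ define the block
\[
Z_k=\prod_{j:\,b_j[k]=1} c_j ,
\]
which contains symbol $c_j$ exactly when $b_j[k]=1$. The blocks $Z_k$ are explicit and have total length $\bigO(nd)$, so they cost $\bigO(nd)$ grammar size. Arrange $T$ so that each $i$ has a contiguous substring $T(b_i\dd e_i]$ equal to a concatenation of copies of the $Z_k$ for $k\in\OnePos{a_i}$, preceded by a base string $P$ in which every $c_j$ occurs exactly once. In that substring, the count of $c_j$ is $1+|\{k\in\OnePos{a_i}: b_j[k]=1\}|=1+\DotProduct{a_i}{b_j}$. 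Orthogonality then corresponds to the \emph{minimum} frequency being $1$, whereas range mode frequency reports the \emph{maximum}. To fix this, use complement blocks
\[
Z'_k=\prod_{j:\,b_j[k]=0} c_j ,
\]
and take the substring to be the concatenation of $Z'_k$ for $k\in\OnePos{a_i}$. The count of $c_j$ becomes $|\OnePos{a_i}|-\DotProduct{a_i}{b_j}$. Pad with a fresh symbol so every $c_j$ can be compared uniformly. The mode frequency equals $|\OnePos{a_i}|$ iff some $b_j$ is orthogonal to $a_i$, and $|\OnePos{a_i}|$ is known, so one query decides the question for $a_i$.

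\textbf{Realizing all $n$ substrings with small grammar.} Naively each $a_i$ needs its own concatenation of up to $d$ nonterminals, which gives $\bigO(nd)$ rules. That is fine: define $X_k\to Z'_k$ once, and for each $i$ a nonterminal $Y_i$ expanding to the concatenation of the $X_k$ with $k\in\OnePos{a_i}$. Put $S\to Y_1\,\hash_1\,Y_2\,\hash_2\cdots Y_n$ with distinct fresh separators $\hash_i$ (or one shared separator, whose frequency we keep small by inserting it only once per gap). Then $|G|=\bigO(nd)=\bigO(n\log n)$ and $N=\bigO(n^2 d)$. Hence $\sigma\ge n=\Omega((N/\log N)^{1/2})$, matching the theorem's alphabet hypothesis, and $m=n=\Theta(|G|/\log N)$. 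The query endpoints $b_i,e_i$ come from prefix sums of expansion lengths, computable in $\bigO(|G|)$ time.

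\textbf{Main obstacle.} The step needing the most care is the parameter bookkeeping. We must ensure that separators and padding symbols never attain the mode, which holds because each appears at most once in any query range while $|\OnePos{a_i}|\ge1$; the case $a_i=0$ is trivial. We must also check that the contradiction exponent survives. With $|G|=\Theta(n\log n)$ and $m=n$, the assumed running time is
\[
n\,(n\log n)^{1-\epsilon}\log^{\bigO(1)}N=n^{2-\epsilon}\log^{\bigO(1)}n,
\]
and the construction time is $\bigO(n\log n)$. Together these refute the OV Conjecture for $d=c\log n$, for every constant $c$ as the conjecture requires.
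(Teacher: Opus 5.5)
This is essentially the paper's argument: your complement blocks $Z'_k$ are the paper's $\ZeroVectors{A}{k}$ (with $B=A$), your $\Exp{G}{Y_i}$ is $W_{A,i}$, and the threshold test against $|\OnePos{a_i}|$ is \cref{lm:ov-and-range-mode-freq-equivalence}. The corollary then follows from \cref{th:hardness-of-range-mode-freq} once you also charge the data structure's construction time $\bigO(|G|^{2-\epsilon}\log^{\bigO(1)}N)=\bigO(n^{2-\epsilon}\log^{\bigO(1)}n)$, which you never account for. One fix is needed: drop the ``padding'' symbol and make each query range exactly $\Exp{G}{Y_i}$. A symbol occurring once inside a range ties the threshold when $|\OnePos{a_i}|=1$ and gives a false positive, so your ``at most once while $|\OnePos{a_i}|\ge1$'' justification fails there.
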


\paragraph{Implications of our Hardness Results for Other Range Queries on Grammar-Compressed Strings}

Our hardness results for symbol occurrence queries immediately imply the
hardness of other popular fundamental queries (of which symbol occurrence is
just a special case), such as position-restricted
pattern matching introduced by M{\"a}kinen and Navarro in~\cite{MakinenN06}.
These hardness results hold even on LZ78-compressed strings.

\begin{corollary}
  If there exists a data structure that, given any SLG $G$
  representing a string $T \in \Sigma^{N}$
  (where $\Sigma = \{1, \ldots, \sigma\}$ and $\sigma = \Omega(N^{1/3})$),
  answers position-restricted pattern matching queries
  (that, given any pattern $P \in \Sigma^{M}$ and any pair $b, e \in [0 \dd N-M+1]$,
  checks whether there exists $i \in (b \dd e]$
  satisfying $T[i \dd i + M) = P$)\footnote{We obtain symbol occurrence
  queries as a special case of position-restricted pattern matching queries
  simply by setting $M = 1$; see \cref{def:symbol-occ}.} on $T$
  in $\bigO(M^{\bigO(1)} \log^{\bigO(1)} N)$ time,
  and takes $\bigO(|G| \log^{\bigO(1)} N)$ time to construct,
  then the Boolean matrix product of any two $n \times n$
  Boolean matrices can be computed in $\bigO(n^2 \log^{\bigO(1)} n)$ time.
\end{corollary}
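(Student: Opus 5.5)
The plan is to derive this corollary directly from \cref{th:hardness-of-slg-symbol-occ}. We do this by turning the hypothesized position-restricted pattern matching structure into a batch algorithm for symbol occurrence queries. The key observation is the one recorded in the footnote: a symbol occurrence query is exactly a position-restricted pattern matching query with $M = 1$.

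\textbf{Step 1: reduce a single symbol occurrence query.} Let the query ask whether $c$ occurs in $T(b \dd e]$ (\cref{def:symbol-occ}), with $b, e \in [0 \dd N]$. We set $P = c \in \Sigma^{1}$. A position $i \in (b \dd e]$ with $T[i \dd i+1) = P$ is the same as a position with $T[i] = c$, so the answer to the position-restricted query on $(P, b, e)$ equals the answer to the symbol occurrence query. With $M = 1$, the admissible range $[0 \dd N - M + 1] = [0 \dd N]$ matches the range of the symbol occurrence query, so no boundary adjustment is needed. The query time $\bigO(M^{\bigO(1)} \log^{\bigO(1)} N)$ becomes $\bigO(\log^{\bigO(1)} N)$.

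\textbf{Step 2: assemble a batch algorithm.} Given an SLG $G$ generating $T$ with $\sigma = \Omega(N^{1/3})$ and a batch of $m$ symbol occurrence queries, we proceed as follows.
\begin{enumerate}
\item Build the assumed structure in $\bigO(|G| \log^{\bigO(1)} N)$ time.
\item Answer each of the $m$ queries via Step 1 in $\bigO(\log^{\bigO(1)} N)$ time each.
\end{enumerate}
The total time is $\bigO((|G| + m)\log^{\bigO(1)} N)$, which is exactly the hypothesis of \cref{th:hardness-of-slg-symbol-occ}.

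\textbf{Step 3: conclude.} Invoking \cref{th:hardness-of-slg-symbol-occ} gives an $\bigO(n^2 \log^{\bigO(1)} n)$-time Boolean matrix multiplication algorithm.

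There is no real obstacle; the only care needed is in Step 1, checking that the index conventions of the two query types coincide when $M=1$, and noting that the alphabet assumption $\sigma = \Omega(N^{1/3})$ carries over unchanged.

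The remark about LZ78 follows identically. Using \cref{th:hardness-of-lz78-symbol-occ} in place of \cref{th:hardness-of-slg-symbol-occ}, with construction time measured in $\BigLZSize{T}$, the same argument applies.
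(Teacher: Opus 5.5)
Your proposal is correct and matches the paper's intended argument. The paper states this corollary without a separate proof and only notes in the footnote that $M=1$ gives symbol occurrence queries; your reduction (build the structure, then answer $m$ queries with $P=c$, using $[0 \dd N-M+1]=[0 \dd N]$) yields exactly the batch algorithm required by \cref{th:hardness-of-slg-symbol-occ}.
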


\begin{corollary}
  If there exists a data structure that, given the LZ78 representation
  (\cref{def:lz78-representation}) of a string $T \in \Sigma^{N}$
  (where $\Sigma = \{1, \ldots, \sigma\}$ and $\sigma = \Omega(N^{1/3})$),
  answers position-restricted pattern matching queries
  (that, given any pattern $P \in \Sigma^{M}$ and any pair $b, e \in [0 \dd N-M+1]$,
  checks whether there exists $i \in (b \dd e]$
  satisfying $T[i \dd i + M) = P$) on $T$
  in $\bigO(M^{\bigO(1)} \log^{\bigO(1)} N)$ time,
  and takes $\bigO(\BigLZSize{T} \log^{\bigO(1)} N)$ time to construct,
  then the Boolean matrix product of any two $n \times n$
  Boolean matrices can be computed in $\bigO(n^2 \log^{\bigO(1)} n)$ time.
\end{corollary}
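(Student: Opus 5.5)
The last statement is the LZ78 corollary for position-restricted pattern matching. I would derive it from \cref{th:hardness-of-lz78-symbol-occ} by showing that such a data structure yields a fast batch algorithm for symbol occurrence queries on LZ78-compressed text.

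The first step is the specialization to $M = 1$. Take any symbol occurrence query $(b,e,c)$ on $T$ (\cref{def:symbol-occ}), with $b,e \in [0 \dd N]$. Ask the position-restricted query with the length-one pattern $P = c$ on the same interval. It checks whether some $i \in (b \dd e]$ satisfies $T[i \dd i+1) = c$, i.e., $T[i] = c$, so it answers exactly the symbol occurrence query. For $M = 1$ the allowed range $[0 \dd N-M+1] = [0 \dd N]$ coincides with the range of symbol occurrence queries, so every query is admissible. The hypothesized query time $\bigO(M^{\bigO(1)} \log^{\bigO(1)} N)$ becomes $\bigO(\log^{\bigO(1)} N)$ per query.

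The second step is to assemble the batch algorithm. Given the LZ78 representation of $T$ and a batch of $m$ symbol occurrence queries:
\begin{enumerate}
\item build the assumed structure in $\bigO(\BigLZSize{T} \log^{\bigO(1)} N)$ time;
\item answer the $m$ queries one by one through the reduction above, in $\bigO(m \log^{\bigO(1)} N)$ total time.
\end{enumerate}
The total is $\bigO((\BigLZSize{T} + m)\log^{\bigO(1)} N)$. This is exactly the premise of \cref{th:hardness-of-lz78-symbol-occ}, with the same alphabet condition $\sigma = \Omega(N^{1/3})$. That theorem then gives an $\bigO(n^2 \log^{\bigO(1)} n)$-time algorithm for Boolean matrix multiplication.

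No step is a real obstacle. The only points needing care are that the index ranges match when $M = 1$, which they do, and that preprocessing is charged once rather than per query. The analogous SLG corollary follows in the same way from \cref{th:hardness-of-slg-symbol-occ}, replacing $\BigLZSize{T}$ with $|G|$.
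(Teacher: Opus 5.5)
Your proposal is correct and matches the paper's (implicit) argument: the paper obtains this corollary by noting, in its footnote, that setting $M = 1$ turns position-restricted pattern matching into symbol occurrence queries. It then invokes \cref{th:hardness-of-lz78-symbol-occ}, through its data-structure corollary, charging construction once and answering the $m = n^2$ queries at polylogarithmic cost each, exactly as you do.
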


\paragraph{Organization of the Paper}

First, in \cref{sec:overview}, we give an overview of our hardness
reductions. In \cref{sec:prelim}, we formally introduce the notation
and all definitions used in the paper. Next, in
\cref{sec:reductions-from-bmm}, we present the hardness reductions
from the problem of Boolean matrix multiplication (BMM) (specifically,
in \cref{sec:hardness-of-symbol-occ-on-slg}, we prove the hardness of
symbol occurrence queries on grammars, in
\cref{sec:hardness-of-symbol-occ-on-lz78}, we prove similar results
for LZ78, and finally in \cref{sec:hardness-of-approx-rank-queries} we
prove the hardness of approximating rank queries). In
\cref{sec:reductions-from-ov}, we then present our hardness reductions
based on the Orthogonal Vectors Conjecture (specifically, in
\cref{sec:hardness-of-range-distinct-count-on-slg}, we prove the
hardness of range distinct counting queries and in
\cref{sec:hardness-of-range-mode-freq-queries} we show the hardness of
range mode frequency queries).

%% file: overview.tex
\section{Technical Overview}\label{sec:overview}

\paragraph{Reducing Boolean Matrix Multiplication to Symbol Occurrence Queries}

Our main idea is to construct a large but compressible string that
lets us compute a single entry of the product of two matrices using a
\emph{single} symbol occurrence query (see
\cref{def:symbol-occ}). Given two Boolean matrices $A,B \in
\BinaryAlphabet^{n \times n}$, we start by defining the following
objects (see \cref{fig:slg-bmm-reduction} for an example containing
every object defined below):
\begin{enumerate}

\item For each $i \in [1 \dd n]$, we define $\RowOnes{A}{i}$ as the
  string containing (in increasing order) the indices of all columns
  $j$ such that $A[i,j] = \one$
  (see \cref{def:row-ones}).

\item For each $i \in [1 \dd n]$, the string $\RowOnes{B}{i}$ is
  defined analogously, i.e., $\RowOnes{B}{i}$ contains (in increasing
  order) the indices of all columns $j$ such that $B[i,j] = \one$.

\item For each $i \in [1 \dd n]$, we define the string $T_{A,B,i}$ to
  be the concatenation of the strings $\RowOnes{B}{x}$ over all
  indices $x$ appearing in $\RowOnes{A}{i}$, i.e.,
  \[
    T_{A,B,i} := \bigodot_{t=1,\dots,k} \RowOnes{B}{R[t]},
  \]
  where $R = \RowOnes{A}{i}$ and $k = |R|$
  (see \cref{def:mat-mul-answer-string}).

\item Lastly, we define $T_{A,B} := \bigodot_{i=1,\dots,n} T_{A,B,i}$.
\end{enumerate}

We show that there is a direct correspondence between symbols
appearing in $T_{A,B,i}$ and the positions of $\one$-entries in row
$i$ of the Boolean matrix product $AB$. Formally, for every $i,j \in
[1 \dd n]$, the symbol $j$ appears in the string $T_{A,B,i}$ if and
only if $(AB)[i,j] = \one$ holds
(see \cref{lm:sym-occ-and-mat-mul-equivalence}).
We illustrate this correspondence in \cref{fig:slg-bmm-reduction}; for example,
$(AB)[1,2] = (AB)[1,3] = \one$ corresponds to the symbols $2$ and $3$
occurring in $T_{A,B,1}$, and $(AB)[2,1] = \one$ corresponds to $1$
occurring in $T_{A,B,2}$.

After concatenating the strings $T_{A,B,i}$ into $T_{A,B}$, we can
compute the entire product $AB$ using $n^2$ symbol occurrence queries,
one for each pair $(i,j)$, by querying the substring corresponding to
$T_{A,B,i}$. We also construct an SLG of size $\bigO(n^2)$ that
generates $T_{A,B}$
(see \cref{pr:mat-mul-answer-string-slg-construction}).
Thus, any algorithm
that answers a batch of symbol occurrence queries (and consequently
also rank queries; see \cref{def:rank}) on SLG-compressed texts in
amortized polylogarithmic time per query yields a near-quadratic-time
algorithm for Boolean matrix multiplication (see
\cref{th:hardness-of-slg-symbol-occ}).

\begin{figure}
  \centering
  \begin{tikzpicture}
  \newcommand{\topOffset}{2.5}
  \newcommand{\rightOffset}{6}

  \draw (0,\topOffset+1) node[] {A};
  \node [matrix,nodes={inner sep=2pt},left delimiter = {[}, right delimiter = {]}] (matA) at (0,\topOffset){
    \node{1}; & \node{0}; & \node{1}; \\
    \node{0}; & \node{1}; & \node{0}; \\
    \node{1}; & \node{1}; & \node{0}; \\
  };

  \draw (2.5,\topOffset+1) node[] {B};
  \node [matrix,nodes={inner sep=2pt},left delimiter = {[}, right delimiter = {]}] (matB) at (2.5,\topOffset){
    \node{0}; & \node{1}; & \node{0}; \\
    \node{1}; & \node{0}; & \node{0}; \\
    \node{0}; & \node{1}; & \node{1}; \\
  };

  \node [matrix,nodes={inner sep=2pt},anchor=west] (RowOnesA) at (\rightOffset-2,\topOffset){
    \node{$\RowOnes{A}{1} =\ $13};\\
    \node{$\RowOnes{A}{2} =\ $2};\\
    \node{$\RowOnes{A}{3} =\ $12};\\
  };
  
  \node [matrix,nodes={inner sep=2pt},anchor=west] (RowOnesB) at (\rightOffset+2,\topOffset){
    \node{$\RowOnes{B}{1} =\ $2};\\
    \node{$\RowOnes{B}{2} =\ $1};\\
    \node{$\RowOnes{B}{3} =\ $23};\\
  };

  \draw (1.25,1) node[] {AB};
  \node [matrix,nodes={inner sep=2pt},left delimiter = {[}, right delimiter = {]}] (matAB) at (1.25,0){
    \node{0}; & \node{1}; & \node{1}; \\
    \node{1}; & \node{0}; & \node{0}; \\
    \node{1}; & \node{1}; & \node{0}; \\
  };

  \node [matrix,nodes={inner sep=2pt},anchor=west] (TABis) at (\rightOffset-2,0){
    \node{$T_{A,B,1} =\ $223};\\
    \node{$T_{A,B,2} =\ $1};\\
    \node{$T_{A,B,3} =\ $21};\\
  };

  \draw (\rightOffset + 3.5,0)node {$T_{A,B} =\ $ 223121};
  \end{tikzpicture}
  \caption{\label{fig:slg-bmm-reduction}Example showing matrices
    $A$ and $B$ along with strings $\RowOnes{A}{i}$, $\RowOnes{B}{i}$
    (\cref{def:row-ones}),
    $T_{A,B,i}$ for $i \in [1 \dd 3]$, and $T_{A,B}$
    (\cref{def:mat-mul-answer-string}).}
\end{figure}

\paragraph{Hardness for Symbol Occurrence Queries on LZ78-Compressed Text}

We use a new variant of the \emph{grammar boosting} technique
of~\cite{grammarboosting} to transform the structured BMM instance
$(A,B)$ underlying the answer string $T_{A,B}$ into a new string
$X_{A,B}$
(see \cref{def:mat-mul-lz78-answer-string}).
The key idea is to add prefix gadgets that force LZ78 to create, for every
$i \in [1 \dd n]$, the phrase $\dol_i \cdot \RowOnes{B}{i}$ together
with all of its prefixes. Once these phrases have been created, each
later block of the form $\dol_p \cdot \RowOnes{B}{p} \cdot \hash_i$ is
parsed as a single additional LZ78 phrase, because it extends an
already existing phrase by one fresh delimiter. This yields an
explicit description of the LZ78 representation of $X_{A,B}$ and
allows us to compute it in $\bigO(n^2)$ time from $A$ and $B$
(see \cref{lm:lz78-representation}, \cref{pr:lz78-parsing-of-mat-mul-answer-string}, and \cref{alg:lz78-parsing-of-answer-string}).
Moreover, for every $i \in [1 \dd n]$, the block $X_{A,B,i}$
preserves the occurrences of all symbols $j \in [1 \dd n]$ from the
corresponding block $T_{A,B,i}$. Hence, the symbol occurrence queries
used in the BMM reduction can be simulated on $X_{A,B}$, and the same
idea yields the corresponding hardness for rank queries (see
\cref{th:hardness-of-lz78-symbol-occ}).

Applying this technique to the example string $T_{A,B}$ from
\cref{fig:slg-bmm-reduction} results in the following transformed
string (with $\odot$ added for clarity):
\[
  \dol_1 \dol_1 2 \dol_2 \dol_2 1 \dol_3 \dol_3 2 \dol_3 23 \odot
  \dol_1 2 \hash_1 \dol_3 2 3 \hash_1 \odot
  \dol_2 1 \hash_2 \odot
  \dol_1 2 \hash_3 \dol_2 1 \hash_3
\]
The LZ78 parsing of the above string is (with parentheses denoting
each phrase):
\[
  (\dol_1) (\dol_1 2) (\dol_2) (\dol_2 1) (\dol_3) (\dol_3 2) (\dol_3 23)
  (\dol_1 2 \hash_1) (\dol_3 2 3 \hash_1) (\dol_2 1 \hash_2) (\dol_1 2 \hash_3) (\dol_2 1 \hash_3)
\]
Lastly, we highlight the portion of this transformed string that
corresponds to the symbols of $T_{A,B}$ from
\cref{fig:slg-bmm-reduction}. All symbols appearing between the
highlighted ones are auxiliary delimiters. This illustrates how symbol
occurrence queries on the transformed string can simulate symbol
occurrence queries on $T_{A,B}$ (and the same simulation applies to
rank queries).
\[
  \dol_1 \dol_1 2 \dol_2 \dol_2 1 \dol_3 \dol_3 2 \dol_3 23\ 
  \dol_1 {\color{red}{2}} \hash_1\ 
  \dol_3 {\color{red}{2 3}} \hash_1\ 
  \dol_2 {\color{red}{1}} \hash_2\ 
  \dol_1 {\color{red}{2}} \hash_3\ 
  \dol_2 {\color{red}{1}} \hash_3
\]

\paragraph{Hardness of Approximate Rank Queries}

The hardness results for symbol occurrence queries
imply that any multiplicative approximation for two-sided rank queries
is also hard, since it would
distinguish between the cases $\TwoSidedRank{T}{b}{e}{c}=0$ and
$\TwoSidedRank{T}{b}{e}{c} \geq 1$. We therefore consider additive
approximations. We take any string $T \in \Sigma^{N}$ and replace every
symbol with $k$ copies of itself (call this transformed string
$\strword{k}{T}$;
see \cref{def:strword}).
After this transformation,
a substring of $T$ contains a symbol $c \in \Sigma$ if and only if the
corresponding substring in $\strword{k}{T}$ contains at least $k$
copies of $c$
(see \cref{lm:stretch}).
If the original string $T$ can
be generated by a small grammar, then $\strword{k}{T}$ can also be
generated by a small grammar, and such a grammar can be constructed
efficiently
(see \cref{pr:slg-stretching}).
Letting $T$ be the string defined in \cref{def:mat-mul-answer-string}, we
obtain hardness
results for additive approximations of two-sided rank queries over
grammar-compressed texts (see
\cref{th:hardness-of-rank-approximation}).

\paragraph{Hardness for Range Distinct Count and Range Mode Frequency Queries}

Let $A = (a_1,\dots,a_n)$ be a sequence of $n \geq 1$ binary vectors
of dimension $d \geq 1$. To show hardness of range distinct count queries (see
\cref{def:range-distinct-count}), we construct a string over alphabet
$\{1,\dots,n\}$ consisting of $n$ \emph{blocks} such that, for each
$i$, the symbols missing from block $i$ are exactly the indices of
vectors orthogonal to $a_i$. We start by defining the following
objects (see \cref{fig:ovreduction1} for an example showing all the
objects defined below):

\begin{enumerate}

\item For each $j \in [1 \dd d]$, we define $\OneVectors{A}{j}$ to be
  the string containing (in increasing order) the indices $i$ of all
  vectors satisfying $a_i[j] = \one$
  (see \cref{def:one-vectors}).

\item For any vector $x \in \BinaryAlphabet^d$, we define $\OnePos{x}$
  to be the string containing (in increasing order) the indices of all
  coordinates $j$ satisfying $x[j] = \one$
  (see \cref{def:one-pos}).

\item For every $i \in [1 \dd n]$, we define the string $U_{A,i}$ as
  follows:
  \[
    U_{A,i} := \bigodot_{j=1,\dots,k} \OneVectors{A}{R[j]},
  \]
  where $R = \OnePos{a_i}$ and $k = |R|$.

\item Lastly, we let $U_A := \bigodot_{i=1,\dots,n}U_{A,i}$
  (see \cref{def:range-distinct-count-answer-string}).
\end{enumerate}

We prove that for every $i,j \in [1 \dd n]$, the symbol $j$ does
\emph{not} occur in $U_{A,i}$ if and only if $\DotProduct{a_i}{a_j} = 0$
(see \cref{lm:reduce-ov-to-range-distinct-count}).
Therefore, $\RangeDistinctCount{U_{A,i}}{0}{|U_{A,i}|} < n$ holds if and only if
$a_i$ is orthogonal to some vector in $A$
(see \cref{lm:ov-and-range-distinct-count-equivalence}).
Hence, after
concatenating all blocks into $U_A$, we can use $n$ queries
$\RangeDistinctCount{U_A}{\cdot}{\cdot}$
(\cref{def:range-distinct-count}) to solve the orthogonal vectors
problem on $A$.

We demonstrate this reduction in \cref{fig:ovreduction1}. Here, the
strings $\OnePos{a_i}$ contain the indices of all coordinates $j$ such
that $a_i[j] = \one$ for $i \in [1 \dd 5]$, and the strings
$\OneVectors{A}{j}$ contain the indices of all vectors $a_i$ such that
$a_i[j] = \one$ for $j \in [1 \dd 4]$. The strings $U_{A,i}$ are
defined as above
(\cref{def:range-distinct-count-answer-string}).
We observe that $U_{A,1}$ contains all elements from $\{1,\dots,5\}$,
which matches the fact that $a_1$ is not orthogonal to any
vector in $A$. The element $4$ is missing from $U_{A,2}$ (and symmetrically,
$2$ is missing from $U_{A,4}$). This corresponds to the fact that
$\DotProduct{a_2}{a_4} = 0$. Similarly, the element $5$ is missing
from $U_{A,3}$ (and $3$ is missing from $U_{A,5}$), which corresponds
to $\DotProduct{a_3}{a_5} = 0$. Thus, in this example, the only pairs
of orthogonal vectors are $(a_2,a_4)$ and $(a_3,a_5)$.

\begin{figure}[t]
  \centering
  \begin{tikzpicture}
    \newcommand{\topOffset}{3}
    \newcommand{\rightOffset}{6}

    \draw node at (2,\topOffset){$a_1 = [\one,\zero,\zero,\one]$};
    \draw node at (2,\topOffset-0.5){$a_2 = [\one,\one,\zero,\zero]$};
    \draw node at (2,\topOffset-1){$a_3 = [\zero,\one,\zero,\one]$};
    \draw node at (2,\topOffset-1.5){$a_4 = [\zero,\zero,\one,\one]$};
    \draw node at (2,\topOffset-2){$a_5 = [\one,\zero,\one,\zero]$};

    \draw node at (\rightOffset,\topOffset){$\OnePos{a_1}$ = 14};
    \draw node at (\rightOffset,\topOffset-0.5){$\OnePos{a_2}$ = 12};
    \draw node at (\rightOffset,\topOffset-1){$\OnePos{a_3}$ = 24};
    \draw node at (\rightOffset,\topOffset-1.5){$\OnePos{a_4}$ = 34};
    \draw node at (\rightOffset,\topOffset-2){$\OnePos{a_5}$ = 13};

    \draw node[anchor=west] at (\rightOffset+2,\topOffset){$\OneVectors{A}{1}$ = 125};
    \draw node[anchor=west] at (\rightOffset+2,\topOffset-0.5){$\OneVectors{A}{2}$ = 23};
    \draw node[anchor=west] at (\rightOffset+2,\topOffset-1){$\OneVectors{A}{3}$ = 45};
    \draw node[anchor=west] at (\rightOffset+2,\topOffset-1.5){$\OneVectors{A}{4}$ = 134};

    \draw node[anchor=west] at (1,0){$U_{A,1}$ = 125134};
    \draw node[anchor=west] at (1,-0.5){$U_{A,2}$ = 12523};
    \draw node[anchor=west] at (1,-1){$U_{A,3}$ = 23134};
    \draw node[anchor=west] at (1,-1.5){$U_{A,4}$ = 45134};
    \draw node[anchor=west] at (1,-2){$U_{A,5}$ = 12545};

    \draw node at (\rightOffset+1.5,-1){$U_{A}$ = 125134 12523 23134 45134 12545};
    \draw[decorate,decoration={brace}] +(\rightOffset+2.5,-1.25) -- node[below]{Missing 5} ++(\rightOffset+1.6,-1.25);
    \draw node at (\rightOffset+1.5,-1){$U_{A}$ = 125134 12523 23134 45134 12545};
    \draw[decorate,decoration={brace}] +(\rightOffset+2.65,-0.75) -- node[above]{Missing 2} ++(\rightOffset+3.55,-0.75);
  \end{tikzpicture}
  \caption{\label{fig:ovreduction1}Example showing our reduction
    from Orthogonal Vectors to range distinct count queries.}
\end{figure}

The string $U_A$ can be generated by a grammar $G$ of size $\bigO(nd)$
that can also be constructed in $\bigO(nd)$ time
(see \cref{pr:range-distinct-count-answer-string-slg-construction}).
Thus, if we can answer a batch of $n$ queries
$\RangeDistinctCount{U_A}{\cdot}{\cdot}$ with average time
$\bigO(|G|^{1-\epsilon} \log^{\bigO(1)} N)$ per query
(where $N$ is the length of the input text), then we can solve the Orthogonal
Vectors problem in sub-quadratic time (see
\cref{th:hardness-of-range-distinct-count}).

The hardness for range mode frequency queries (see
\cref{def:range-mode-freq}) follows an analogous structure. We replace
each string $\OneVectors{A}{j}$ with its complement: we let
$\ZeroVectors{A}{j}$ be the string containing (in increasing order)
the indices of all vectors $a_i$ such that $a_i[j] = \zero$
(see \cref{def:zero-vectors}).
We then define the corresponding blocks $W_{A,i}$
(see \cref{def:range-mode-freq-answer-string}).
We show that symbol $j$ appears in $W_{A,i}$ less than
$|\OnePos{a_i}|$ times if and only if
$\DotProduct{a_i}{a_j} \neq 0$. Thus, the maximum frequency in
$W_{A,i}$ is at least $|\OnePos{a_i}|$ if and only if $a_i$ is
orthogonal to some vector in $A$.
Proceeding as above, we
obtain hardness of answering a batch of range mode frequency
queries over substrings of grammar-compressed texts (see
\cref{th:hardness-of-range-mode-freq}).

%% file: prelim.tex
\section{Preliminaries}\label{sec:prelim}

\subsection{Basic Definitions}\label{sec:prelim-basic}

\subsubsection{Strings}\label{sec:prelim-strings}

A \emph{string} is a finite sequence of characters drawn from a given
\emph{alphabet} $\Sigma$. The length of a string $S$ is denoted by
$|S|$. For $i \in [1 \dd |S|]$,%
\footnote{For $i,j \in \mathbb{Z}$, we define
$[i \dd j] = \{k \in \mathbb{Z} : i \leq k \leq j\}$,
$[i \dd j) = \{k \in \mathbb{Z} : i \leq k < j\}$, and
$(i \dd j] = \{k \in \mathbb{Z} : i < k \leq j\}$.}
the $i$th leftmost character of $S$ is denoted $S[i]$. A \emph{substring} of $S$
is any string of the form $S[i \dd j) = S[i]S[i{+}1]\cdots S[j{-}1]$
for some $1 \leq i \leq j \leq |S|{+}1$.
Substrings of the forms $S[1 \dd j)$ and $S[i \dd |S|{+}1)$ are called
\emph{prefixes} and \emph{suffixes}, respectively.
The \emph{concatenation} of two strings $S_1$ and $S_2$,
namely the string $S_1[1]\cdots S_1[|S_1|]S_2[1]\cdots S_2[|S_2|]$,
is denoted by $S_1S_2$ or $S_1 \cdot S_2$.
For $k \in \mathbb{Z}_{\ge 0}$, we define
$S^k = \bigodot_{i=1}^k S$ as the concatenation of $k$ copies of $S$;
by convention, $S^0 = \emptystring$ denotes the \emph{empty string}.

\subsubsection{Matrices and Vectors}\label{sec:prelim-matrices-and-vectors}

A matrix $A \in \BinaryAlphabet^{n \times m}$ is a two-dimensional
array with $n$ rows and $m$ columns. The entry at the intersection of
row $i$ and column $j$ is denoted as $A[i,j]$ for every $(i,j) \in [1
\dd n] \times [1 \dd m]$. All matrix products used in this paper are
over the Boolean semiring, i.e., for $A,B \in \BinaryAlphabet^{n
\times n}$, we define $C=AB$ as $C[i,j] = \bigvee_{k=1}^{n} A[i,k]
\wedge B[k,j]$ for each $i,j \in [1 \dd n]$.

A binary vector $v$ of dimension $d$ is an element of
$\BinaryAlphabet^{d}$. The $i$th coordinate of $v$ is
denoted as $v[i]$. For two vectors $u,v$ of dimension $d$, we denote
their dot product by $\DotProduct{u}{v} = \sum_{i=1}^d u[i] \cdot v[i]$.
We say two vectors $u$ and $v$ are \emph{orthogonal} if $\DotProduct{u}{v} = 0$.

\subsection{Frequency-Based Queries}\label{sec:prelim-frequency-based-queries}

\subsubsection{Symbol Occurrence and Rank Queries}\label{sec:prelim-symbol-occ-and-rank}

\begin{definition}[Symbol occurrence]\label{def:symbol-occ}
  Let $T \in \Sigma^{n}$.
  For every $b, e \in [0 \dd n]$ and every $c \in \Sigma$, we define
  \vspace{1ex}
  \[
    \Occurs{T}{b}{e}{c} := 
      \begin{cases}
        1 & \text{if there exists } j \in (b \dd e] \text{ such that }T[j] = c,\\
        0 & \text{otherwise}.
      \end{cases}
  \]
\end{definition}
\vspace{1ex}

\begin{example}\label{ex:symbol-occ}
  For $T = \texttt{acaaba}$ it holds $\Occurs{T}{1}{4}{\texttt{a}} = 1$
  and $\Occurs{T}{1}{4}{\texttt{b}} = 0$.
\end{example}

\begin{definition}[Rank]\label{def:rank}
  Let $T \in \Sigma^{n}$.
  For every $b, e \in [0 \dd n]$ and every $c \in \Sigma$, we define
  \[
    \TwoSidedRank{T}{b}{e}{c} := |\{i \in (b \dd e] : T[i] = c\}|.
  \]
  For every $j \in [0 \dd n]$ and every $c \in \Sigma$, we also define
  \vspace{-0.5ex}
  \[
    \Rank{T}{j}{c} := \TwoSidedRank{T}{0}{j}{c} = |\{i \in (0 \dd j] : T[i] = c\}|.
  \]
\end{definition}

\begin{example}\label{ex:rank}
  For $T = \texttt{acaaba}$, it holds
  $\Rank{T}{5}{\texttt{a}} = 3$ and
  $\TwoSidedRank{T}{2}{5}{\texttt{a}} = 2$.
\end{example}

\begin{remark}\label{rm:rank}
  To distinguish between the two types of rank queries in \cref{def:rank},
  by a \emph{two-sided rank query}, we mean the computation of $\TwoSidedRank{T}{b}{e}{c}$,
  and by \emph{one-sided rank query}, we refer to the computation of $\Rank{T}{j}{c}$.
  Note that, for every $T \in \Sigma^{n}$, $c \in \Sigma$, and $b, e \in [0 \dd n]$
  such that $b \leq e$, it holds
  $\TwoSidedRank{T}{b}{e}{c} = \Rank{T}{e}{c} - \Rank{T}{b}{c}$.
\end{remark}

\begin{definition}[Additive rank approximation]\label{def:additive-approximation}
  Let $T \in \Sigma^{n}$, $c \in \Sigma$, and $b, e \in [0 \dd n]$.
  Let $k \in \Zp$. We say that $x \in \Z$ is a \emph{$k$-additive approximation}
  of $\TwoSidedRank{T}{b}{e}{c}$ if it holds
  \[
    \TwoSidedRank{T}{b}{e}{c} - k < x < \TwoSidedRank{T}{b}{e}{c} + k.
  \]
\end{definition}

\subsubsection{Range Distinct Count Queries}\label{sec:prelim-range-count-distinct}

\begin{definition}[Range distinct count]\label{def:range-distinct-count}
  Let $T \in \Sigma^{n}$. For any $b, e \in [0 \dd n]$, we define
  \[
    \RangeDistinctCount{T}{b}{e} := |\{T[i] : i \in (b \dd e]\}|.
  \]
\end{definition}

\begin{example}\label{ex:count-distinct}
  For $T = \texttt{acaaba}$, it holds
  $\RangeDistinctCount{T}{0}{4} = 2$ and
  $\RangeDistinctCount{T}{1}{5} = 3$.
\end{example}

\subsubsection{Range Mode Frequency Queries}\label{sec:prelim-range-mode}

\begin{definition}[Range mode frequency]\label{def:range-mode-freq}
  Let $T \in \Sigma^{n}$. For every $b, e \in [0 \dd n]$,
  we define $\RangeModeFreq{T}{b}{e}$ as the frequency of the most common
  element in $T(b \dd e]$, i.e.,
  \[
    \RangeModeFreq{T}{b}{e} := \max_{c \in \Sigma} \TwoSidedRank{T}{b}{e}{c}
  \]
  (see \cref{def:rank}).
\end{definition}

\begin{example}\label{ex:range-mode-count}
  For $T = \texttt{acaaba}$ it holds
  $\RangeModeFreq{T}{0}{4} = 3$ and
  $\RangeModeFreq{T}{3}{6} = 2$.
\end{example}

\subsection{Compressed Representations}\label{sec:prelim-compressed-representations}

\subsubsection{Grammar Compression}\label{sec:prelim-grammar-compression}

A \emph{context-free grammar (CFG)} is a tuple $G = (V, \Sigma,
R, S)$ such that $V \cap \Sigma = \emptyset$ and
\begin{itemize}
  \item $V$ is a finite nonempty set of \emph{nonterminals} or \emph{variables},
  \item $\Sigma$ is a finite nonempty set of \emph{terminal} symbols,
  \item $R \subseteq V \times (V \cup \Sigma)^*$ is a set of
    \emph{productions} or \emph{rules}, and
  \item $S \in V$ is the special \emph{starting nonterminal}.
\end{itemize}

We say that $u \in (V \cup \Sigma)^{*}$ \emph{derives} $v$, and write
$u \Rightarrow^* v$, if $v$ can be obtained from $u$ by repeatedly
replacing nonterminals according to the rule set $R$. We then denote
$\Lang{G} := \{w \in \Sigma^* \mid S \Rightarrow^* w\}$.

By a \emph{straight-line grammar (SLG)} we mean a CFG $G =
(V,\Sigma,R,S)$ such that:
\begin{enumerate}
\item there exists an ordering $(N_1, \dots, N_{|V|})$ of all elements
  in $V$ such that, for every $(N,\gamma) \in R$, letting $i \in [1
    \dd |V|]$ be such that $N = N_i$, it holds $\gamma \in (\{N_{i+1},
  \dots, N_{|V|}\} \cup \Sigma)^{*}$, and
\item for every nonterminal $N \in V$, there exists exactly one
  $\gamma \in (V \cup \Sigma)^{*}$ such that $(N,\gamma) \in R$.
\end{enumerate}
The unique string $\gamma \in (V \cup \Sigma)^{*}$ such that
$(N,\gamma) \in R$ is called the \emph{definition} or \emph{right-hand side}
of nonterminal $N$ and is denoted $\Rhs{G}{N}$. Note that in an SLG, for every $\alpha
\in (V \cup \Sigma)^{*}$, there exists exactly one string $\gamma \in
\Sigma^{*}$ satisfying $\alpha \Rightarrow^{*} \gamma$. Such $\gamma$
is called the \emph{expansion} of $\alpha$ and is denoted
$\Exp{G}{\alpha}$. In particular, in an SLG, we have $|\Lang{G}| = 1$.
We define the size of an SLG as $|G| = \sum_{N \in V}\max(|\Rhs{G}{N}|, 1)$.

An SLG in which, for every $N \in V$, it holds $\Rhs{G}{N} = XY$,
where $X,Y \in V$, or $\Rhs{G}{N} = a$, where $a \in \Sigma$, is
called a \emph{straight-line program (SLP)}.

\begin{observation}\label{ob:slg-to-slp}
  Every SLG $G = (V, \Sigma, R, S)$ satisfying $\Lang{G} \neq \{\emptystring\}$
  can be transformed in $\bigO(|G|)$ time into an
  SLP $G' = (V', \Sigma, R', S')$
  satisfying $|G'| = \Theta(|G|)$ and $\Lang{G'} = \Lang{G}$.
\end{observation}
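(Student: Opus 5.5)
We convert the SLG into an SLP in two stages: first make every nonterminal useful and nonempty, then binarize the long right-hand sides. The whole procedure is a constant number of linear passes over the rules.

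\textbf{Stage 1: remove empty nonterminals.} Process the nonterminals in reverse topological order $N_{|V|}, \dots, N_1$, which is the ordering guaranteed by the SLG definition. For each nonterminal we decide whether its expansion is empty: $\Exp{G}{N}=\emptystring$ iff every symbol of $\Rhs{G}{N}$ is a nonterminal already marked empty. This takes $\bigO(|G|)$ total time. We then delete all empty nonterminals and erase their occurrences from right-hand sides. Expansions of the remaining nonterminals are unchanged. Since $\Lang{G}\neq\{\emptystring\}$, the start symbol $S$ survives. Optionally, we also discard nonterminals unreachable from $S$, which is another linear traversal.

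\textbf{Stage 2: binarize.} Now every right-hand side $\gamma=X_1\cdots X_k$ is nonempty and each $X_t\in V\cup\Sigma$.
\begin{itemize}
\item For each terminal $a\in\Sigma$ that occurs anywhere, create a single fresh nonterminal $L_a$ with rule $L_a\to a$. There are at most $|G|$ of these. Replace terminal occurrences in right-hand sides of length at least $2$ by the corresponding $L_a$.
\item For a right-hand side of length $k\ge 2$, replace it by a left-leaning chain: new nonterminals $M_2\to X_1X_2$, $M_t\to M_{t-1}X_t$, and finally $N\to M_{k-1}X_k$. This creates $k-2$ new nonterminals, each of size $2$.
\item A rule of length $1$ whose symbol is a terminal is already of SLP form.
\item A unit rule $N\to M$ with $M\in V$ must be eliminated. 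We substitute $N$ by $M$ everywhere, following chains of unit rules to their ends; resolving these chains in topological order makes this linear. If $S$ itself is a unit nonterminal, we let $S'$ be the end of its chain.
\end{itemize}

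\textbf{Correctness and size.} The new nonterminals can be inserted into the topological order just before their parent, so the straight-line ordering is preserved and the grammar is an SLP. Expansions are preserved by construction, so $\Lang{G'}=\Lang{G}$. For the size, each original rule of length $k$ contributes $\bigO(k)$ to $|G'|$, so $|G'|=\bigO(|G|)$.

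\textbf{Main obstacle: the lower bound $|G'|=\Omega(|G|)$.} A faithful $\Theta$ needs this, but deleting empty or unused rules could shrink the grammar drastically. The simplest fix is to pad: keep every original nonempty nonterminal, and if needed add dummy unreachable SLP rules (for example $L_a\to a$ copies) until the size is at least that of $G$. Strictly speaking this must respect the SLP definition, which demands nothing about reachability, so unreachable rules are allowed. Size-$\Theta(|G|)$ padding is then trivially achievable in linear time.
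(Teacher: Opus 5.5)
Your proof is correct. The paper states this as an observation without any proof. Your construction is the standard argument it takes for granted: delete empty nonterminals, collapse unit chains, binarize long right-hand sides with chains, and pad with unreachable rules to secure $|G'| = \Omega(|G|)$. You are also right that the padding is genuinely necessary, since otherwise a long unit chain, or a right-hand side made mostly of empty nonterminals, collapses to constant size.

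One small implementation point. Creating a single $L_a$ per distinct terminal needs a dictionary keyed by $\Sigma$. For an arbitrary alphabet this is not deterministic $\bigO(|G|)$ time, and the paper's appendix does apply the observation to an arbitrary ordered $\Sigma$. Instead, create a fresh nonterminal for each terminal occurrence; this still costs only $\bigO(|G|)$.
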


\begin{observation}\label{ob:slp-min-size}
  If $G$ is an SLP such that $\Lang{G} = \{T\}$, where $|T| = n$, then
  $|G| = \Omega(\log n)$.
\end{observation}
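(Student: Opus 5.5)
The plan is a direct counting argument: bound the expansion length of every nonterminal exponentially in its position in the ordering guaranteed by the SLG definition, and then use that $|G| \geq |V|$. Fix an ordering $(N_1, \dots, N_{|V|})$ of $V$ as in the definition of an SLG. Every right-hand side of $N_i$ uses only terminals and nonterminals from $\{N_{i+1}, \dots, N_{|V|}\}$. The claim I would prove is
\[
  |\Exp{G}{N_i}| \leq 2^{|V|-i} \quad \text{for every } i \in [1 \dd |V|].
\]

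I would prove this by backward induction on $i$, from $i = |V|$ down to $i = 1$.
\begin{itemize}
\item \emph{Base case $i = |V|$.} The right-hand side of $N_{|V|}$ cannot contain any nonterminal. Since $G$ is an SLP, it must have the form $a$ with $a \in \Sigma$, so its expansion has length $1 = 2^{0}$.
\item \emph{Inductive step $i < |V|$.} Either $\Rhs{G}{N_i} = a \in \Sigma$, giving length $1$, or $\Rhs{G}{N_i} = XY$ with $X = N_j$ and $Y = N_{j'}$ for some $j, j' > i$. In the latter case the inductive hypothesis gives
\[
  |\Exp{G}{N_i}| = |\Exp{G}{X}| + |\Exp{G}{Y}| \leq 2 \cdot 2^{|V|-i-1} = 2^{|V|-i}.
\]
\end{itemize}

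To conclude, note that the starting nonterminal $S$ equals $N_i$ for some $i \geq 1$. Hence
\[
  n = |\Exp{G}{S}| \leq 2^{|V|-1},
\]
so $|V| \geq \log_2 n + 1$. Every nonterminal contributes $\max(|\Rhs{G}{N}|, 1) \geq 1$ to $|G|$, so $|G| \geq |V| \geq \log_2 n + 1 = \Omega(\log n)$.

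There is no real obstacle in this argument. The only point needing care is that the SLP definition makes every expansion nonempty, so $n \geq 1$ and the logarithm is well defined.
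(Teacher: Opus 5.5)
Your proof is correct and complete. The paper states this observation without proof, and your backward induction along the SLG ordering, which gives $n = |\Exp{G}{S}| \leq 2^{|V|-1}$ and then $|G| \geq |V|$, is the standard argument the paper is implicitly relying on.
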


\begin{lemma}[{\cite{balancing}}]\label{lm:slp-balancing}
  Given any SLP $G = (V, \Sigma, R, S)$ representing a string $T \in \Sigma^{n}$,
  we can in $\bigO(|G|)$ time construct an SLP
  $G' = (V', \Sigma, R', S')$ of height $\bigO(\log n)$ that
  represents the same string $T$.
\end{lemma}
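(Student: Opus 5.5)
This is a cited result (Ganardi, Jeż, and Lohrey~\cite{balancing}), so in the paper it is used as a black box. If I had to reprove it, I would follow their \emph{symmetric centroid decomposition} strategy. View the SLP $G$ as a DAG on $V$, where each nonterminal $N$ with $\Rhs{G}{N} = XY$ has a left edge to $X$ and a right edge to $Y$. Weight each nonterminal by the number of root-to-$N$ paths times the number of $N$-to-leaf paths. Equivalently, $\pi(N)$ counts how many derivation-tree nodes are labelled $N$, and $|\Exp{G}{N}|$ counts the leaves below one such node. In one bottom-up and one top-down pass over a topological order of the DAG, I would compute $|\Exp{G}{N}|$ and $\pi(N)$ in $\bigO(|G|)$ time. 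The numbers involved are at most $n^{\bigO(1)}$, hence fit in words, which I believe keeps them bounded as in the paper.

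Next I declare an edge $N \to X$ \emph{heavy} if $X$ carries at least half of $N$'s weight on both sides, in the sense $\lfloor\log \pi(X)\rfloor = \lfloor\log \pi(N)\rfloor$ and $\lfloor\log |\Exp{G}{X}|\rfloor = \lfloor\log|\Exp{G}{N}|\rfloor$. The key structural claim is that every node has at most one heavy outgoing edge and at most one heavy incoming edge. Hence the heavy edges form vertex-disjoint paths. Moreover, any root-to-leaf path of the DAG uses only $\bigO(\log n)$ light edges, because each light edge decreases one of the two logarithmic quantities and neither can go below $0$ or above $\log n$ (up to constants).

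For each heavy path $N_1 \to N_2 \to \dots \to N_k$, the expansion of $N_1$ has the form $L_1 L_2 \cdots L_{k-1}\, \Exp{G}{N_k}\, R_{k-1}\cdots R_1$. Here each $L_t$ or $R_t$ is the expansion of the light sibling of $N_{t+1}$ inside $\Rhs{G}{N_t}$. I would rebuild, for every prefix-suffix cut along the path, the nonterminal for $N_t$ as a concatenation of the left siblings, then $N_k$, then the right siblings. To do so I would use a weight-balanced (Mehlhorn/biased-search-tree style) concatenation over the sequence of siblings, where a sibling $Y$ gets weight $|\Exp{G}{Y}|$. Such a tree places a leaf of weight $w$ at depth $\bigO(1 + \log(W/w))$, where $W$ is the total weight. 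The new grammar then has a root-to-leaf height that telescopes: crossing from a path with total weight $W$ into a light child of weight $w$ costs $\bigO(1+\log(W/w))$. Summing over the $\bigO(\log n)$ light edges gives $\bigO(\log n)$ height.

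The main obstacle is keeping the size $\bigO(|G|)$ and the time linear. Each $N_t$ needs its own nonterminal expressing a \emph{suffix} of its heavy path's decomposition, so one weight-balanced tree per path is not enough. I would handle this with a structure that shares nodes across all suffixes, such as a balanced "finger" or suffix-sharing construction in which the trees for consecutive suffixes differ only in $\bigO(1)$ amortized new nodes. This is exactly the delicate part of~\cite{balancing}, and I would first try their linear-time construction of weight-balanced suffix trees over a sequence.
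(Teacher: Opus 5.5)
Citing Ganardi, Je\.{z}, and Lohrey is all the paper does; it gives no proof of this lemma. So as a black-box use, your proposal agrees with the paper. Your structural claims about the symmetric centroid decomposition are also right: each node has at most one heavy edge in and one heavy edge out, and any root-to-leaf path has at most $2\log n+\bigO(1)$ light edges. One sign slip: along a downward edge $\lfloor\log\pi\rfloor$ increases, not decreases.

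As a reproof, though, the sketch has a genuine gap at the step you defer, and with your choice of weights that step does not go through as described. Suppose the per-path tree $\mathcal{T}$ is weight-balanced with respect to $|\Exp{G}{Y}|$ only. The natural linear-size way to give every $v_i$ its own nonterminal uses one rule per node of $\mathcal{T}$: the suffix starting at leaf $Y_i$ is $Y_i$ followed by a right comb over the maximal subtrees of $\mathcal{T}$ to the right of $Y_i$. This only guarantees depth $D_i + D_j + \bigO(1)$ for a sibling $Y_j$ below the nonterminal for $v_i$. Here $D_t = \bigO(1+\log(W/|\Exp{G}{Y_t}|))$ is the depth of $Y_t$ in $\mathcal{T}$. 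The term $D_i$ can be of order $\log n$, for example when you enter at a tiny sibling followed by many tiny ones and then a heavy one. It is determined by the light child of the node $v_i$ where you entered the heavy path, not by the light edge you arrived through, so it does not telescope. Summed over $\bigO(\log n)$ light edges, the analysis only yields $\bigO(\log^2 n)$. You assert that suffixes can share structure with $\bigO(1)$ amortized new nodes while each stays weight-balanced in these weights, but you do not prove it, and it is more than you need.

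The missing idea is to use the $\pi$ half of the decomposition inside the per-path construction, not only to make the heavy edges form paths. Charge every light edge the number of root-to-leaf paths through it:
\begin{itemize}
\item a leaving edge $v_t \to Y_t$ gets weight $\pi(v_t)\,|\Exp{G}{Y_t}|$;
\item an entering edge $u \to v_i$ gets weight $\pi(u)\,|\Exp{G}{v_i}|$, charged to the leaf where the suffix for $v_i$ starts.
\end{itemize}
Both $\lfloor\log\pi\rfloor$ and $\lfloor\log|\Exp{G}{\cdot}|\rfloor$ are constant along a heavy path, so the total weight is $\bigO(\pi(v_1)\,|\Exp{G}{v_1}|)$. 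The same comb construction then costs $\bigO(1+\log(\pi(v_i)/\pi(u))+\log(|\Exp{G}{v_j}|/|\Exp{G}{Y_j}|))$ when you enter via $u \to v_i$ and leave via $v_j \to Y_j$. The first term telescopes because $\pi$ is nondecreasing downward, and the second because $|\Exp{G}{\cdot}|$ is nonincreasing. This gives depth $\bigO(\log n)$ with $\bigO(|G|)$ size, and $\bigO(|G|)$ time using linear-time construction of weight-balanced trees.
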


\subsubsection{Lempel--Ziv (LZ78) Compression}\label{sec:prelim-lz78}

\begin{definition}[Lempel--Ziv (LZ78) factorization~\cite{LZ78}]\label{def:lz78}
  The \emph{LZ78 factorization} of a string $T$ is a factorization
  $T = f_1 f_2 \cdots f_k$ defined such that, letting $f_0 = \emptystring$,
  for every $i \in [1 \dd k]$, after $f_1\cdots f_{i-1}$ has been parsed,
  $f_i$ is the longest prefix of the remaining suffix $f_i f_{i+1} \cdots f_k$
  such that there exists $j \in [0 \dd i)$ satisfying $f_i = f_j \cdot c$ for
  some symbol $c$. The elements $f_i$ of the factorization (where $i \geq 1$) are called
  \emph{phrases}. We denote the number of phrases in the LZ78 factorization
  of $T$ by $\BigLZSize{T}$ (i.e., $k = \BigLZSize{T}$).
\end{definition}

\begin{definition}[Lempel--Ziv (LZ78) representation]\label{def:lz78-representation}
  Let $T = f_1 f_2 \cdots f_k$ be the LZ78 factorization of $T$ (\cref{def:lz78}), and
  let $f_0 = \emptystring$.
  The \emph{LZ78 representation} of $T$ is a sequence of pairs
  $((p_1,c_1), \dots, (p_k,c_k))$ such that, for every $i \in [1 \dd k]$,
  it holds $p_i \in [0 \dd i)$ and $f_{i} = f_{p_i} \cdot c_i$.
\end{definition}

\begin{remark}\label{rm:lz78-representation-uniqueness}
  Observe that if $T = f_1 f_2 \cdots f_k$ is the LZ78 factorization of $T$, then
  all phrases $f_1, \dots, f_{k-1}$ are distinct. Consequently, the LZ78 representation (\cref{def:lz78-representation})
  of $T$ is unique.
\end{remark}

\begin{remark}\label{rm:lz78-encoding-bit-size}
  Note that if $T \in [0 \dd \sigma)^{n}$ for some $\sigma \in \Z_{\geq 1}$,
  then the LZ78 representation of $T$ (\cref{def:lz78-representation})
  encodes $T$ using $\bigO(\BigLZSize{T} \cdot (\log \BigLZSize{T} + \log \sigma)) =
  \bigO(\BigLZSize{T} \cdot (\log n + \log \sigma))$
  bits of space.
\end{remark}

\begin{example}\label{ex:lz78}
  The LZ78 factorization (\cref{def:lz78}) of the string $T = \texttt{010212020}$ is
  $T =
    \texttt{0} \cdot
    \texttt{1} \cdot
    \texttt{02} \cdot
    \texttt{12} \cdot
    \texttt{020}
  $
  with $\BigLZSize{T} = 5$ phrases, and the LZ78 representation (\cref{def:lz78-representation})
  of $T$ is:
  $(
      (0,\texttt{0}),
      (0,\texttt{1}),
      (1,\texttt{2}),
      (2,\texttt{2}),
      (3,\texttt{0})
  )$.
\end{example}

\subsection{Hardness Assumptions}\label{sec:prelim-hardness-assumptions}

\subsubsection{Boolean Matrix Multiplication (BMM)}

The first problem we use as the basis of our hardness arguments is
the Boolean Matrix Multiplication problem.

\begin{framed}
  \noindent
  \probname{Boolean Matrix Multiplication (BMM)}\par\smallskip
  \begin{description}[
    topsep=0pt,
    partopsep=0pt,
    parsep=0pt,
    itemsep=0pt,
    leftmargin=!,
    labelwidth=\widthof{\bfseries Output:}
  ]
  \item[Input:] Two matrices $A,B \in \BinaryAlphabet^{n \times n}$.
  \item[Output:] The matrix $C = AB$, where
    $C[i,j] = \bigvee_{k=1}^{n} A[i,k] \wedge B[k,j]$
    for every $i,j \in [1 \dd n]$.
  \end{description}
\end{framed}

The state-of-the-art for the above problem is summarized below, and
currently no algorithm is known that solves this problem in
$\bigO(n^2 \log^{\bigO(1)} n)$ time.

\begin{theorem}[{\cite{fastmm}}]
  Given any two Boolean matrices $A, B \in \BinaryAlphabet^{n \times n}$,
  we can compute their product in $\bigO(n^{2.371339})$ time.
\end{theorem}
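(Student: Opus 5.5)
This statement is imported from~\cite{fastmm} rather than proved in the paper, so the plan is to reduce Boolean matrix multiplication to ordinary integer matrix multiplication and then invoke the best known bound on the exponent~$\omega$. Given $A, B \in \BinaryAlphabet^{n \times n}$, read both as $0/1$ integer matrices and compute the integer product $C' = AB$ over $\Z$. For every $i,j \in [1 \dd n]$ we have $C'[i,j] = \sum_{k=1}^{n} A[i,k] \cdot B[k,j] \in [0 \dd n]$. Since all summands are nonnegative, $C'[i,j] \geq 1$ holds exactly when some $k$ satisfies $A[i,k] = B[k,j] = \one$. Hence the Boolean product is $C[i,j] = [C'[i,j] > 0]$, and it can be read off in $\bigO(n^2)$ additional time.

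The key steps, in order, are the following.
\begin{enumerate}
\item Fix an arithmetic model. Every entry of $C'$ is at most $n$, so all intermediate values can be kept bounded: run the algebraic algorithm modulo a prime $p \in (n \dd 2n]$, found in $\bigO(n)$ time. Modulo $p$ no entry in $[0 \dd n]$ collapses to zero, and each field operation takes $\bigO(1)$ time on a word RAM with $\Theta(\log n)$-bit words.
\item Apply the bilinear algorithm realizing $\omega < 2.371339$ from~\cite{fastmm}. Pad $A$ and $B$ to the base block size and recurse, which multiplies $n \times n$ matrices over any ring using $\bigO(n^{2.371339})$ ring operations.
\item Threshold $C'$ to obtain $C$ in $\bigO(n^2)$ time.
\end{enumerate}
Together these give total time $\bigO(n^{2.371339})$.

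The real obstacle is the exponent bound itself, $\omega < 2.371339$. It comes from the refined laser method applied to powers of the Coppersmith--Winograd tensor, which requires a delicate asymmetric hashing analysis and numerical optimization over variable splittings. That analysis is far outside the scope of the present paper, so I would simply cite~\cite{fastmm} for it. The only part needing care here is that the existence of such a bilinear algorithm yields an actual $\bigO(n^{\omega})$-time algorithm. For this I would rely on the standard fact that the tensor-rank upper bound gives a recursive algorithm with $\bigO(n^{\omega + o(1)})$ operations, with the $o(1)$ absorbed because $2.371339$ is strictly above the proven bound. I would also check that working in $\mathbb{F}_p$ incurs no word-size blow-up.
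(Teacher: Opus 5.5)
Your proposal is correct and matches the paper, which gives no proof of its own and simply imports the bound from~\cite{fastmm}; like the paper, you leave the real content, the bound $\omega < 2.371339$, to that citation. The steps you add are the standard details the paper leaves implicit, and they are sound: reducing to an integer product and thresholding at zero, and computing modulo a prime $p \in (n \dd 2n]$ so that no nonzero entry vanishes and all arithmetic fits in $\bigO(1)$ words. The strictness of $\omega < 2.371339$ then absorbs the $n^{o(1)}$ slack.
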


\subsubsection{Orthogonal Vectors (OV)}

The second problem we use as a basis for our hardness results is the
Orthogonal Vectors problem, one
of the most widely used tools in fine-grained complexity theory
(see, e.g., \cite{Williams24} and references therein for a recent
discussion).

\begin{framed}
  \noindent
  \probname{Orthogonal Vectors (OV)}\par\smallskip
  \begin{description}[
    topsep=0pt,
    partopsep=0pt,
    parsep=0pt,
    itemsep=0pt,
    leftmargin=!,
    labelwidth=\widthof{\bfseries Output:}
  ]
  \item[Input:] A set of vectors $A \subseteq \BinaryAlphabet^d$ with $|A| = n$.
  \item[Output:] Determine whether there exist $x,y \in A$ such that
    $\DotProduct{x}{y} = \sum_{i=1}^{d} x[i] \cdot y[i] = 0$.
  \end{description}
\end{framed}

\begin{conjecture}[Orthogonal Vectors Conjecture]\label{con:ov}
  For every constant $\epsilon > 0$, there exists a constant $c \geq 1$ such
  that OV cannot be solved in $\bigO(n^{2-\epsilon})$ time on instances
  with $d = c\log n$.
\end{conjecture}

\subsection{Model of Computation}

We use the standard word RAM model of computation~\cite{Hagerup98}
with $w$-bit machine words, where $w = \Theta(\log n)$, and all
standard bitwise and arithmetic operations take $\bigO(1)$ time.
Unless explicitly stated otherwise, we measure space complexity
in machine words.

%% file: reductions-from-bmm.tex
\section{Reductions from Boolean Matrix Multiplication}\label{sec:reductions-from-bmm}

\input{hardness-of-symbol-occ-on-slg}

\input{hardness-of-symbol-occ-on-lz78}

\input{hardness-of-approx-rank}

%% file: hardness-of-symbol-occ-on-slg.tex
\subsection{Hardness of Symbol Occurrence Queries on Grammars}\label{sec:hardness-of-symbol-occ-on-slg}

\begin{definition}[String with positions of all $\one$s in a given row]\label{def:row-ones}
  Let $A \in \BinaryAlphabet^{n \times n}$, where $n \geq 1$.
  For every $i \in [1 \dd n]$, by
  $\RowOnes{A}{i}$, we denote a string
  containing in increasing order the positions of all ones in the $i$th row
  of $A$, i.e., $\RowOnes{A}{i} \in \{1, \dots, n\}^{*}$ is such that,
  letting $k = |\RowOnes{A}{i}|$, it holds
  \begin{itemize}
  \item if $k > 0$, then $\RowOnes{A}{i}[1] < \dots < \RowOnes{A}{i}[k]$, and
  \item $\{\RowOnes{A}{i}[t]\}_{t \in [1 \dd k]} =
    \{j \in [1 \dd n] : A[i,j] = \one\}$.
  \end{itemize}
\end{definition}

\begin{definition}[Grammar-compressible encoding of the product $AB$ as a string]\label{def:mat-mul-answer-string}
  Let $A, B \in \BinaryAlphabet^{n \times n}$, where $n \geq 1$.
  For every $i \in [1 \dd n]$, we denote
  \[
    T_{A,B,i} := \textstyle\bigodot_{t=1,\dots,k} \RowOnes{B}{R[t]},
  \]
  where $R = \RowOnes{A}{i}$ (\cref{def:row-ones}) and $k = |R|$.
  We then let
  \[
    T_{A,B} := \textstyle\bigodot_{i=1,\dots,n} T_{A,B,i}.
  \]
\end{definition}

\begin{observation}\label{ob:mat-mul-answer-string-length}
  For every $A,B \in \BinaryAlphabet^{n \times n}$, the string $T_{A,B}$
  (\cref{def:mat-mul-answer-string}) satisfies $|T_{A,B}| \leq n^3$.
\end{observation}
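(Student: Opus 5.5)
The plan is to bound the length of each block $T_{A,B,i}$ separately and then sum over $i$. By \cref{def:row-ones}, for every matrix $M \in \BinaryAlphabet^{n \times n}$ and every $x \in [1 \dd n]$, the string $\RowOnes{M}{x}$ consists of pairwise distinct elements of $\{1,\dots,n\}$, since its entries are strictly increasing. Hence $|\RowOnes{M}{x}| = |\{j \in [1 \dd n] : M[x,j] = \one\}| \leq n$. I will apply this with $M = A$ and with $M = B$.

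Next, fix $i \in [1 \dd n]$ and let $R = \RowOnes{A}{i}$ and $k = |R|$. By the first step, $k \leq n$. By \cref{def:mat-mul-answer-string}, $T_{A,B,i}$ is the concatenation of the $k$ strings $\RowOnes{B}{R[t]}$ for $t \in [1 \dd k]$. Each of these has length at most $n$, again by the first step. Therefore
\[
  |T_{A,B,i}| = \textstyle\sum_{t=1}^{k} |\RowOnes{B}{R[t]}| \leq k \cdot n \leq n^2 .
\]

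Finally, $T_{A,B}$ is the concatenation of $T_{A,B,1},\dots,T_{A,B,n}$. Summing the bound from the previous step over the $n$ values of $i$ gives $|T_{A,B}| = \sum_{i=1}^{n} |T_{A,B,i}| \leq n \cdot n^2 = n^3$.

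There is no real obstacle here. The only point needing care is that each $\RowOnes{\cdot}{\cdot}$ string has length at most $n$, and this follows directly from the strict monotonicity in \cref{def:row-ones}. The case $k = 0$, or any empty row of $B$, only contributes empty strings, so the bound still holds.
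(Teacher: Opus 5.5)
Your proof is correct and matches the paper's argument: bound each row-ones string by $n$, deduce $|T_{A,B,i}| \leq n^2$ since it concatenates at most $n$ such strings, and sum over the $n$ blocks to get $|T_{A,B}| \leq n^3$.
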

\begin{proof}
  For all $i \in [1 \dd n]$, it holds $|\RowOnes{A}{i}| \leq n$ and
  $|\RowOnes{B}{i}| \leq n$. Since $T_{A,B,i}$ is a concatenation of at most
  $n$ strings each of which has length at most $n$, we thus have
  $|T_{A,B,i}| \leq n^2$ for all $i \in [1 \dd n]$.
  Consequently, $|T_{A,B}| \leq n^3$.
\end{proof}

\begin{lemma}\label{lm:sym-occ-and-mat-mul-equivalence}
  Let $A,B \in \BinaryAlphabet^{n \times n}$.
  For every $i,j \in [1 \dd n]$, the following conditions are equivalent:
  \begin{enumerate}
  \item\label{lm:sym-occ-and-mat-mul-equivalence-it-1}
    Symbol $j$ occurs in the string $T_{A,B,i}$
    (see \cref{def:mat-mul-answer-string}),
  \item\label{lm:sym-occ-and-mat-mul-equivalence-it-2}
    It holds $(AB)[i,j] = \one$
    (where $AB$ denotes the Boolean matrix multiplication).
  \end{enumerate}
\end{lemma}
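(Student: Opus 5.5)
The plan is to prove both directions by unfolding \cref{def:mat-mul-answer-string} and \cref{def:row-ones}. The central fact is a characterization of the symbols of $T_{A,B,i}$. Let $R = \RowOnes{A}{i}$ and $k = |R|$. Since $T_{A,B,i}$ is the concatenation $\RowOnes{B}{R[1]} \cdots \RowOnes{B}{R[k]}$, a symbol $j$ occurs in $T_{A,B,i}$ if and only if there is some $t \in [1 \dd k]$ such that $j$ occurs in $\RowOnes{B}{R[t]}$. In this form, each direction reduces to reading off the set identities in \cref{def:row-ones}.

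For the implication from \eqref{lm:sym-occ-and-mat-mul-equivalence-it-1} to \eqref{lm:sym-occ-and-mat-mul-equivalence-it-2}, assume $j$ occurs in $T_{A,B,i}$. Then $j$ lies in some block $\RowOnes{B}{x}$ with $x = R[t]$.
\begin{itemize}
\item Because $x$ is a symbol of $\RowOnes{A}{i}$, the second bullet of \cref{def:row-ones} gives $A[i,x] = \one$.
\item Because $j$ is a symbol of $\RowOnes{B}{x}$, the same bullet gives $B[x,j] = \one$.
\end{itemize}
Hence $A[i,x] \wedge B[x,j] = \one$, and therefore $(AB)[i,j] = \bigvee_{x'} A[i,x'] \wedge B[x',j] = \one$.

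For the converse, assume $(AB)[i,j] = \one$. Then there exists $x \in [1 \dd n]$ with $A[i,x] = B[x,j] = \one$.
\begin{itemize}
\item From $A[i,x] = \one$, \cref{def:row-ones} places $x$ in the set $\{R[t]\}_{t \in [1 \dd k]}$, so $x = R[t]$ for some $t$. In particular $k \geq 1$, and $\RowOnes{B}{x}$ appears as the $t$th block of $T_{A,B,i}$.
\item From $B[x,j] = \one$, the symbol $j$ occurs in $\RowOnes{B}{x}$.
\end{itemize}
It follows that $j$ occurs in $T_{A,B,i}$.

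I do not expect a genuine obstacle here. The only point needing care is precision about the decomposition of $T_{A,B,i}$ into blocks. If $\RowOnes{A}{i}$ is empty, then $T_{A,B,i}$ is empty and both conditions are false, which is consistent with the equivalence. If some $\RowOnes{B}{x}$ is empty, that block contributes no symbols, so the occurrence characterization above remains exact.
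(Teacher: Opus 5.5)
Your proof is correct and follows essentially the same route as the paper: both directions unfold \cref{def:mat-mul-answer-string} and \cref{def:row-ones}, and in each direction the witness for $(AB)[i,j] = \one$ is an index $R[t]$ whose block $\RowOnes{B}{R[t]}$ of $T_{A,B,i}$ contains $j$. Your remark about empty rows is a harmless extra sanity check that the paper's argument does not need.
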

\begin{proof}
  (\ref{lm:sym-occ-and-mat-mul-equivalence-it-1}) $\Rightarrow$
  (\ref{lm:sym-occ-and-mat-mul-equivalence-it-2})
  Assume that symbol $j$ occurs in the string $T_{A,B,i}$.
  Denote $R = \RowOnes{A}{i}$ and $k = |R|$.
  By \cref{def:mat-mul-answer-string}, the assumption that $j$ occurs in
  $T_{A,B,i}$ implies that there exists $t \in [1 \dd k]$ such that $j$
  occurs in the string $\RowOnes{B}{R[t]}$. By definition of
  $\RowOnes{B}{R[t]}$, this implies that $B[R[t],j] = \one$.
  On the other hand, by definition of $R = \RowOnes{A}{i}$, we have
  $A[i,R[t]] = \one$. Consequently, $A[i,R[t]] \wedge B[R[t],j] = \one$.
  Hence, $(AB)[i,j] = \one$.

  (\ref{lm:sym-occ-and-mat-mul-equivalence-it-2}) $\Rightarrow$
  (\ref{lm:sym-occ-and-mat-mul-equivalence-it-1})
  Let us now assume that $(AB)[i,j] = \one$. This implies that there exists
  $p \in [1 \dd n]$ such that $A[i,p] = \one$ and $B[p,j] = \one$.
  Denote again $R = \RowOnes{A}{i}$ and $k = |R|$.
  On the one hand, $A[i,p] = \one$ implies that $p$ occurs in $R$,
  i.e., there exists $t \in [1 \dd k]$ such that $p = R[t]$.
  On the other hand, by $B[p,j] = \one$, we obtain that $j$ occurs in
  $\RowOnes{B}{p} = \RowOnes{B}{R[t]}$. 
  We have thus proved that there exists $t \in [1 \dd k]$ such that
  $j$ occurs in $\RowOnes{B}{R[t]}$. Since $\RowOnes{B}{R[t]}$ is a
  substring of $T_{A,B,i}$, we thus obtain that $j$ occurs in $T_{A,B,i}$.
\end{proof}

\begin{proposition}\label{pr:mat-mul-answer-string-slg-construction}
  Given any $A, B \in \BinaryAlphabet^{n \times n}$, we can in
  $\bigO(n^2)$ time construct an SLG $G = (V, \Sigma, R, S)$
  such that $\Sigma = \{1, \dots, n\}$ and
  $\Lang{G} = \{T_{A,B}\}$ (\cref{def:mat-mul-answer-string}).
\end{proposition}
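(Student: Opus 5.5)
The construction is a two-level grammar. The first level has one nonterminal per row of $B$, and the second has one nonterminal per row of $A$, whose right-hand side lists the row-of-$B$ nonterminals selected by that row of $A$.

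\textbf{Nonterminals.} Let $V = \{S\} \cup \{X_1,\dots,X_n\} \cup \{Y_1,\dots,Y_n\}$, with the ordering $(S, X_1,\dots,X_n, Y_1,\dots,Y_n)$. The rules are:
\begin{itemize}
\item $\Rhs{G}{Y_p} := \RowOnes{B}{p}$, a string over $\Sigma = \{1,\dots,n\}$;
\item $\Rhs{G}{X_i} := Y_{R[1]} Y_{R[2]} \cdots Y_{R[k]}$, where $R = \RowOnes{A}{i}$ and $k = |R|$;
\item $\Rhs{G}{S} := X_1 X_2 \cdots X_n$.
\end{itemize}
Each right-hand side uses only terminals or nonterminals that come later in the ordering, and each nonterminal has exactly one rule. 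Hence $G$ is an SLG. Empty right-hand sides are allowed, since the definition of size uses $\max(|\Rhs{G}{N}|,1)$.

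\textbf{Correctness.} By construction, $\Exp{G}{Y_p} = \RowOnes{B}{p}$. It follows that $\Exp{G}{X_i} = \bigodot_{t} \RowOnes{B}{R[t]} = T_{A,B,i}$. Therefore $\Exp{G}{S} = T_{A,B}$, which gives $\Lang{G} = \{T_{A,B}\}$ by \cref{def:mat-mul-answer-string}.

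\textbf{Size and running time.} Every right-hand side of a $Y_p$ or $X_i$ has length at most $n$, and $|\Rhs{G}{S}| = n$. So $|G| \le (2n+1)\max(n,1) = \bigO(n^2)$.

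Computing each $\RowOnes{B}{p}$ takes one left-to-right scan of row $p$ of $B$, in $\bigO(n)$ time. The same scan of row $i$ of $A$ yields $\Rhs{G}{X_i}$, since it lists the indices $p$ with $A[i,p]=\one$ in increasing order. The total is $\bigO(n^2)$ time.

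There is no real obstacle here. The only detail to check is that an empty row of $A$ or $B$ produces an empty right-hand side, which the SLG definition permits. If one insisted on nonempty rules, one could instead drop such nonterminals and their occurrences without changing the expansion.
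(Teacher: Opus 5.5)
Your proof is correct and is essentially the paper's construction: the same two-level SLG, with one nonterminal per row of $B$ expanding to $\RowOnes{B}{p}$, one nonterminal per row of $A$ concatenating the selected ones, and a start symbol concatenating those, all built in $\bigO(n^2)$ time. The only differences are that your names $X$ and $Y$ are swapped relative to the paper, and you spell out the nonterminal ordering and size bound, which the paper leaves implicit.
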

\begin{proof}

  Let $V = \{X_1, \dots, X_n, Y_1, \dots, Y_n, S\}$.
  For every $i \in [1 \dd n]$, we set
  $\Rhs{G}{X_i} = \RowOnes{B}{i} \in \{1, \dots, n\}^{*}$
  (\cref{def:row-ones}). For every $i \in [1 \dd n]$,
  letting $R = \RowOnes{A}{i}$ and $k = |R|$, we set
  $\Rhs{G}{Y_i} = X_{R[1]} \cdot X_{R[2]} \cdots X_{R[k]}$,
  with the right-hand side interpreted as $\emptystring$ if $k=0$.
  Finally, we set $\Rhs{G}{S} = Y_1 \cdot Y_2 \cdots Y_n$.

  To construct $G$, we proceed as follows:
  \begin{enumerate}
  \item First, in $\bigO(n^2)$ time we compute
    the strings $\RowOnes{A}{i}$ and $\RowOnes{B}{i}$ for
    all $i \in [1 \dd n]$.
  \item Given the above strings, the construction of
    $\Rhs{G}{N}$ for all $N \in V$ takes
    $\bigO(n^2)$ time.
  \end{enumerate}
  In total, the construction takes $\bigO(n^2)$ time.

  To show that $\Lang{G} = \{T_{A,B}\}$, it suffices to observe
  that, for every $i \in [1 \dd n]$, it holds
  $\Exp{G}{Y_i} = T_{A,B,i}$ (see \cref{def:mat-mul-answer-string}).
  Thus, $\Lang{G} = \{\Exp{G}{S}\} =
  \{\bigodot_{i=1,\dots,n}\Exp{G}{Y_i}\} = \{T_{A,B}\}$.
\end{proof}

\thhardnessofslgsymbolocc*
\begin{proof}
  Let $A,B \in \BinaryAlphabet^{n \times n}$ be two given Boolean matrices.
  The algorithm to compute the Boolean matrix product $AB$ proceeds as follows:
  \begin{enumerate}

  \item In this step, we compute an array
    $A_{\rm sum}[0 \dd n]$ defined such that $A_{\rm sum}[0] = 0$ and,
    for every $i \in [1 \dd n]$, $A_{\rm sum}[i] = \sum_{j=1}^{i} |T_{A,B,j}|$ (see \cref{def:mat-mul-answer-string}).
    We begin by computing an array $A_{\rm len}[1 \dd n]$ defined by
    $A_{\rm len}[i] = |T_{A,B,i}|$.
    To this end, we first in $\bigO(n^2)$ time compute the strings $\RowOnes{A}{i}$ and $\RowOnes{B}{i}$
    (see \cref{def:row-ones}) for
    all $i \in [1 \dd n]$. Using these strings, we can then compute $|T_{A,B,i}|$ for any $i \in [1 \dd n]$
    in $\bigO(n)$ time (see \cref{def:mat-mul-answer-string}). In total, computing
    $A_{\rm len}[1 \dd n]$ takes $\bigO(n^2)$ time.
    Using $A_{\rm len}$, we then easily obtain $A_{\rm sum}$ in $\bigO(n)$ time.
    In total, construction of $A_{\rm sum}[0 \dd n]$ takes $\bigO(n^2)$ time.
    Observe that by \cref{def:mat-mul-answer-string}, for any $i \in [1 \dd n]$, it holds
    \[
      T_{A,B}(A_{\rm sum}[i-1] \dd A_{\rm sum}[i]] = T_{A,B,i}.
    \]
    If $A_{\rm sum}[n] = 0$, then $T_{A,B,i}$ is empty for every
    $i \in [1 \dd n]$. By \cref{lm:sym-occ-and-mat-mul-equivalence},
    the output matrix $AB$ then consists only of zeros. In this case,
    we return the $n \times n$ zero matrix in $\bigO(n^2)$ time and
    conclude the algorithm. Henceforth, assume that $A_{\rm sum}[n] > 0$.

  \item Using \cref{pr:mat-mul-answer-string-slg-construction},
    in $\bigO(n^2)$ time we construct an SLG
    $G = (V, \Sigma, R, S)$ such that $\Sigma = \{1, \dots, \sigma\}$ (where $\sigma = n$) and
    $\Lang{G} = \{T\}$, where $T = T_{A,B}$ (\cref{def:mat-mul-answer-string}).
    Note that the upper bound on the runtime of \cref{pr:mat-mul-answer-string-slg-construction} implies that $|G| = \bigO(n^2)$.
    Denote $N = |T|$ and recall that $N \leq n^3$ (\cref{ob:mat-mul-answer-string-length}).
    Observe also that we then have $\sigma = n = \Omega(N^{1/3})$.

  \item In this step, we compute the product $AB$. By
    \cref{lm:sym-occ-and-mat-mul-equivalence},
    computing $(AB)[i,j]$ for any given $i,j \in [1 \dd n]$
    can be done using a single symbol occurrence query. More precisely,
    by \cref{lm:sym-occ-and-mat-mul-equivalence} and the above discussion, it holds
    (see \cref{def:symbol-occ})
    \begin{align*}
      (AB)[i,j]
        &= \Occurs{T_{A,B,i}}{0}{|T_{A,B,i}|}{j}\\
        &= \Occurs{T_{A,B}}{A_{\rm sum}[i-1]}{A_{\rm sum}[i]}{j}\\
        &= \Occurs{T}{A_{\rm sum}[i-1]}{A_{\rm sum}[i]}{j}.
    \end{align*}
    Using the array $A_{\rm sum}$, in $\bigO(n^2)$ time we prepare
    the arguments for a batch of $m = n^2$ symbol occurrence queries on $T$.
    Using the algorithm from the claim, we then answer all the queries
    (and hence compute the product $AB$) in
    $
      \bigO((|G| + m) \log^{\bigO(1)} N) =
      \bigO(n^2 \log^{\bigO(1)} (n^3)) =
      \bigO(n^2 \log^{\bigO(1)} n)
    $
    time.
 \end{enumerate}

 In total, the algorithm takes $\bigO(n^2 \log^{\bigO(1)} n)$ time.
\end{proof}

%% file: hardness-of-symbol-occ-on-lz78.tex
\subsection{Hardness of Symbol Occurrence Queries on LZ78}\label{sec:hardness-of-symbol-occ-on-lz78}

\begin{definition}[Prefix concatenation string]\label{def:prefixes}
  For any string $T \in \Sigma^{n}$, we denote
  \[
    \prefixes{T} :=
      \textstyle\bigodot_{i=1,\dots,n} T[1 \dd i] \in \Sigma^{n(n+1)/2}.
  \]
\end{definition}

\begin{lemma}\label{lm:lz78-representation}
  Let $m, k \in \Z_{\geq 1}$.
  Let $(S_i)_{i \in [1 \dd m]}$ be a sequence of strings over alphabet
  $\Sigma = \{1, \ldots, \sigma\}$. Let $\dol_1, \ldots, \dol_m, \hash_1, \ldots, \hash_k$
  be pairwise distinct symbols that do not belong to $\Sigma$. Denote
  $
    \hat{\Sigma} = \Sigma \cup
      \{\dol_1,  \dol_2,  \ldots, \dol_m \} \cup
      \{\hash_1, \hash_2, \ldots, \hash_k\}
  $.
  For every $i \in [1 \dd k]$, let $n_i \in \Z_{\geq 0}$ and
  $(p_{i,j})_{j \in [1 \dd n_i]}$ be a possibly empty sequence of
  integers such that, whenever $n_i > 0$,
  $
    1 \leq p_{i,1} < p_{i,2} < \cdots < p_{i,n_i} \leq m
  $.
  For every $i \in [1 \dd k]$,
  we define the following string (parentheses added for clarity):
  \[
    T_{i} :=
      \textstyle\bigodot_{j=1,\ldots,n_i}
        \Big(\dol_{p_{i,j}} \cdot S_{p_{i,j}} \cdot \hash_{i}\Big)
          \in \hat{\Sigma}^{*}.
  \]
  We also define (see \cref{def:prefixes}):
  \[
    T :=
      \Big(\textstyle\bigodot_{i=1,\ldots,m} \prefixes{\dol_i \cdot S_i}\Big)
      \cdot
      \Big(\bigodot_{i=1,\ldots,k} T_{i}\Big)
        \in \hat{\Sigma}^{*}.
  \]
  Then, the LZ78 factorization (\cref{def:lz78}) of $T$ is (with empty terms omitted):
  \[
    \begin{aligned}
      T ={}\;&
        \begin{array}[t]{@{}l c l c l c l c l c@{}}
          \dol_1 & \odot & \dol_1 S_1[1] & \odot & \dol_1 S_1[1\dd 2] & \odot & \dol_1 S_1[1\dd 3] & \cdots & \dol_1 S_1[1\dd |S_1|] & \odot \\
          \dol_2 & \odot & \dol_2 S_2[1] & \odot & \dol_2 S_2[1\dd 2] & \odot & \dol_2 S_2[1\dd 3] & \cdots & \dol_2 S_2[1\dd |S_2|] & \odot \\
          \vdots &      & \vdots        &      & \vdots              &      & \vdots              &       & \vdots                 &      \\
          \dol_m & \odot & \dol_m S_m[1] & \odot & \dol_m S_m[1\dd 2] & \odot & \dol_m S_m[1\dd 3] & \cdots & \dol_m S_m[1\dd |S_m|] & \odot
        \end{array}
        \\[0.6ex]
        &\begin{array}[t]{@{}l c l c l c l c@{}}
          \dol_{p_{1,1}}   S_{p_{1,1}}   \hash_{1} & \odot &
          \dol_{p_{1,2}}   S_{p_{1,2}}   \hash_{1} & \odot &
          \dol_{p_{1,3}}   S_{p_{1,3}}   \hash_{1} & \cdots &
          \dol_{p_{1,n_1}} S_{p_{1,n_1}} \hash_{1} & \odot \\
          \dol_{p_{2,1}}   S_{p_{2,1}}   \hash_{2} & \odot &
          \dol_{p_{2,2}}   S_{p_{2,2}}   \hash_{2} & \odot &
          \dol_{p_{2,3}}   S_{p_{2,3}}   \hash_{2} & \cdots &
          \dol_{p_{2,n_2}} S_{p_{2,n_2}} \hash_{2} & \odot \\
          \vdots &      & \vdots &      & \vdots &      & \vdots &      \\
          \dol_{p_{k,1}}   S_{p_{k,1}}   \hash_{k} & \odot &
          \dol_{p_{k,2}}   S_{p_{k,2}}   \hash_{k} & \odot &
          \dol_{p_{k,3}}   S_{p_{k,3}}   \hash_{k} & \cdots &
          \dol_{p_{k,n_k}} S_{p_{k,n_k}} \hash_{k}. &
        \end{array}
    \end{aligned}
  \]
  In particular,
  $\BigLZSize{T} = m + \sum_{j=1}^{m} |S_j| + \sum_{i=1}^{k} n_i$.
  Moreover, letting
  $(a_i)_{i \in [0 \dd m]}$ be a sequence defined such that, for any $i \in [0 \dd m]$, $a_i = \sum_{j=1}^{i} |S_j|$,
  in the LZ78 representation (\cref{def:lz78-representation}) of $T$:
  \begin{itemize}
  \item a phrase $\dol_i$, where $i \in [1 \dd m]$, is encoded as $(0,\dol_i)$,
  \item a phrase $\dol_i \cdot S_i[1 \dd j]$, where $i \in [1 \dd m]$ and $j \in [1 \dd |S_i|]$, is encoded as $(a_{i-1}+(i-1)+j, S_i[j])$,
  \item a phrase $\dol_{p_{i,j}} \cdot S_{p_{i,j}} \cdot \hash_i$, where $i \in [1 \dd k]$ and $j \in [1 \dd n_i]$, is encoded as $(a_{p_{i,j}}+p_{i,j},\hash_i)$.
  \end{itemize}
\end{lemma}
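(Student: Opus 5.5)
We prove the claimed factorization by induction over the phrases in left-to-right order, maintaining the invariant that after each phrase is parsed, the set of existing phrases is exactly the set of phrases listed so far in the statement. The key structural fact is that each $\dol_i$ and each $\hash_i$ is a fresh delimiter. A symbol $\dol_i$ occurs only at the beginning of blocks of the form $\dol_i \cdot S_i[1 \dd j]$ or $\dol_i \cdot S_i \cdot \hash_{i'}$. Every phrase in the claimed factorization begins with some $\dol$ symbol. Moreover, every phrase $f$ is determined by its first symbol $\dol_i$ together with its length, or, in the second part, by its terminal $\hash$ symbol.

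\emph{First part.} Fix $i \in [1 \dd m]$ and suppose the phrases created so far are $\dol_{i'} S_{i'}[1 \dd j]$ for all $i' < i$ and $j \in [0 \dd |S_{i'}|]$, together with $\dol_i S_i[1 \dd j']$ for $j' < j$. The remaining text starts with $\dol_i S_i[1 \dd j] \cdots$. The only existing phrases starting with $\dol_i$ are the $\dol_i S_i[1 \dd j']$ with $j' < j$, and phrases starting with other symbols cannot match.

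\begin{itemize}
\item For $j = 0$, no existing phrase starts with $\dol_i$. The longest prefix equal to some $f_{j'} \cdot c$ is therefore $\emptystring \cdot \dol_i$, and this phrase is encoded as $(0,\dol_i)$.
\item For $j \geq 1$, the longest existing phrase that is a prefix of the remaining text is $\dol_i S_i[1 \dd j-1]$. This holds because the next block after $\dol_i S_i[1 \dd j]$ starts with a $\dol$ symbol, which prevents matching further, and $\dol_i S_i[1\dd j]$ itself does not yet exist. Hence the new phrase is $\dol_i S_i[1 \dd j]$.
\end{itemize}

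Its parent index is obtained by counting. The phrases for rows $1, \dots, i-1$ number $\sum_{i'<i}(1+|S_{i'}|) = a_{i-1} + (i-1)$, and within row $i$ the parent $\dol_i S_i[1\dd j-1]$ is the $j$-th phrase. This gives the encoding $(a_{i-1}+(i-1)+j, S_i[j])$. Taking $j = |S_i|$ shows that the phrase $\dol_i S_i$ has index $a_i + i$, which is the index used in the second part.

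\emph{Second part.} The remaining text now begins with a block $\dol_p S_p \hash_i$, where $p = p_{i,j}$. The existing phrases starting with $\dol_p$ are all prefixes of $\dol_p S_p$, possibly with a trailing $\hash_{i'}$. The one ending in $\hash_i$ would be $\dol_p S_p \hash_i$ itself, and this has not been created yet.

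\begin{itemize}
\item Any earlier phrase $\dol_p S_p \hash_{i'}$ with $i' < i$ differs from the current block at the $\hash$ position, so it cannot match.
\item For the same $i$, a repeat of $\dol_p S_p \hash_i$ is impossible because the $p_{i,j}$ are strictly increasing in $j$.
\end{itemize}

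So the longest matching existing phrase is $\dol_p S_p$, with index $a_p + p$. The new phrase is $\dol_p S_p \hash_i$, encoded as $(a_p+p, \hash_i)$. The next block again starts with a $\dol$ symbol, so the phrase boundaries are exactly the block boundaries. Summing the phrase counts over both parts gives $\BigLZSize{T} = m + \sum_j |S_j| + \sum_i n_i$.

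The main obstacle is arguing rigorously that no longer match exists. This requires ruling out that an earlier phrase extends past a block boundary. It follows because every phrase contains exactly one $\dol$ symbol, located at its start, while the longest candidate would have to include the next block's $\dol$ symbol in a non-initial position.
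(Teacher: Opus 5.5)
Your proof is correct and follows essentially the same route as the paper. The prefix blocks are handled by the same induction: the only earlier phrases beginning with $\dol_i$ are the prefixes of $\dol_i S_i$ already produced. The second part rests on the same facts, namely that the $\hash$ symbols are distinct and that $p_{i,1} < \cdots < p_{i,n_i}$ is strictly increasing. The only difference is presentational: you carry the entire phrase dictionary as an explicit invariant and read off the longest match directly. The paper instead proves maximality of $\dol_t S_t \hash_r$ by prefix-closedness: a longer match would force $\dol_t S_t \hash_r$ to be an earlier phrase, which is impossible because this substring occurs only once in $T$.
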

\begin{proof}

  For each $i \in [1 \dd m]$, let $X_i := \dol_i \cdot S_i$. We first
  prove by induction on $i$ that, for every $i \in [1 \dd m]$, the
  first $i+a_i$ phrases in the LZ78 factorization of $T$ are exactly
  the phrases $X_1[1], X_1[1 \dd 2], \ldots, X_1[1 \dd |X_1|], \ldots,
  X_i[1],\allowbreak X_i[1 \dd 2], \ldots, X_i[1 \dd |X_i|]$, i.e.,
  $\dol_1, \dol_1S_1[1], \ldots, \dol_1S_1, \ldots, \allowbreak \dol_i, \dol_iS_i[1], \ldots, \dol_iS_i$.
  \begin{itemize}

  \item For the base case $i=1$, note that the prefix of $T$
    corresponding to $\prefixes{X_1}$ is exactly $X_1[1] \cdot
    X_1[1 \dd 2] \cdots X_1[1 \dd |X_1|]$. The first phrase is
    therefore $X_1[1]=\dol_1$, since $\dol_1$ has not occurred
    earlier. After this, the unread suffix again starts with
    $\dol_1$. More generally, whenever the phrases $X_1[1], X_1[1 \dd
    2], \ldots, X_1[1 \dd r]$ have already been produced, the next
    unread position starts with $X_1[1 \dd r+1]$; this string is
    obtained by extending the earlier phrase $X_1[1 \dd r]$ by one
    symbol, and no longer phrase can be chosen, because among all
    earlier phrases the only ones starting with $\dol_1$ are precisely
    $X_1[1], X_1[1 \dd 2], \ldots, X_1[1 \dd r]$. Hence the phrases
    contributed by $\prefixes{X_1}$ are exactly $X_1[1], X_1[1 \dd
    2], \ldots, X_1[1 \dd |X_1|]$.

  \item For the induction step, assume that the claim holds for some
    $i \in [1 \dd m)$, and let $\delta := i+a_i$. Then the first
    $\delta$ phrases are exactly those coming from
    $\prefixes{X_1}, \ldots, \prefixes{X_i}$. The next unread part of
    $T$ begins with $\prefixes{X_{i+1}}$, namely with
    $X_{i+1}[1] \cdot X_{i+1}[1 \dd 2] \cdots X_{i+1}[1 \dd
    |X_{i+1}|]$. Since $\dol_{i+1}$ does not occur in the already
    parsed prefix, the next phrase is
    $X_{i+1}[1]=\dol_{i+1}$. Repeating the same argument as in the
    base case, after the phrases $X_{i+1}[1], X_{i+1}[1 \dd
    2], \ldots, X_{i+1}[1 \dd r]$ have been produced, the next unread
    suffix starts with $X_{i+1}[1 \dd r+1]$; this is obtained by
    extending the previous phrase $X_{i+1}[1 \dd r]$, and it is
    maximal because the only earlier phrases starting with
    $\dol_{i+1}$ are the prefixes already created in this block. Thus
    the phrases contributed by $\prefixes{X_{i+1}}$ are exactly
    $X_{i+1}[1], X_{i+1}[1 \dd 2], \ldots, X_{i+1}[1 \dd
    |X_{i+1}|]$. This proves the induction claim.
  \end{itemize}

  Let $\Delta := m+a_m$. By the above, the first $\Delta$
  phrases in the LZ78 factorization of $T$ are exactly
  $\dol_1, \dol_1S_1[1], \ldots, \dol_1S_1, \ldots, \dol_m, \dol_mS_m[1], \ldots, \dol_mS_m$,
  and the remaining suffix is $\bigodot_{i=1,\ldots,k} T_i$. Consider
  any substring of this suffix of the form $\dol_t \cdot
  S_t \cdot \hash_r$, where $t \in [1 \dd m]$ and $r \in [1 \dd
  k]$. Since $\dol_tS_t$ is already one of the first $\Delta$ phrases,
  the string $\dol_tS_t\hash_r$ can be chosen as the next LZ78
  phrase. It is also maximal: indeed, if a longer phrase were to start
  here, then there would exist an earlier phrase $Q$ that matches a
  longer prefix of the unread suffix and has $\dol_tS_t\hash_r$ as a
  prefix. Since every nonempty LZ78 phrase is obtained from an earlier
  phrase by appending one symbol, repeatedly following source phrases
  from $Q$ shows that every nonempty prefix of $Q$ is also an earlier
  phrase. In particular, the string $\dol_tS_t\hash_r$ itself would
  have to occur earlier as a phrase. This is impossible, because
  $\hash_r$ does not occur in the prefix
  $\bigodot_{i=1,\ldots,m} \prefixes{\dol_iS_i}$, and within
  the suffix $\bigodot_{i=1,\ldots,k} T_i$ each substring
  $\dol_tS_t\hash_r$ appears at most once: any occurrence of
  $\dol_tS_t\hash_r$ must start at an occurrence of $\dol_t$, and
  since each $S_u$ belongs to $\Sigma^*$ while all symbols
  $\dol_1,\ldots,\dol_m,\hash_1,\ldots,\hash_k$ lie outside $\Sigma$,
  the symbol $\dol_t$ can occur in the suffix only as the first symbol
  of one of the blocks $\dol_{p_{u,v}}S_{p_{u,v}}\hash_u$; for fixed
  $r$ this follows from the strict inequalities $p_{r,1} < \cdots <
  p_{r,n_r}$, and for different values of $r$ the last symbol differs
  since $\hash_1,\ldots,\hash_k$ are distinct. Therefore each
  substring $\dol_{p_{i,j}} S_{p_{i,j}} \hash_i$ forms one phrase of
  the LZ78 factorization of $T$, in the stated order. This proves the
  claimed formula for the factorization. In particular, the number of
  phrases is $\Delta + \sum_{i=1}^{k} n_i = m + \sum_{j=1}^{m} |S_j|
  + \sum_{i=1}^{k} n_i$.

  It remains to verify the formulas for the LZ78 representation.
  \begin{itemize}

  \item A phrase $\dol_i$, where $i \in [1 \dd m]$, is encoded as $(0,\dol_i)$
    by definition.

  \item Next, let us consider a phrase $\dol_i \cdot S_i[1 \dd j]$, where
    $i \in [1 \dd m]$ and $j \in [1 \dd |S_i|]$. This is the phrase
    $X_i[1 \dd j+1]$, and it is the $(a_{i-1}+(i-1)+(j+1))$th phrase in
    the factorization. Its source phrase is the immediately preceding
    phrase $X_i[1 \dd j]$, whose index is $a_{i-1}+(i-1)+j$. Therefore
    its encoding is $(a_{i-1}+(i-1)+j, S_i[j])$.

  \item Finally, let us consider a phrase
    $\dol_{p_{i,j}} \cdot S_{p_{i,j}} \cdot \hash_i$, where $i \in
    [1 \dd k]$ and $j \in [1 \dd n_i]$. Its source is the phrase
    $\dol_{p_{i,j}} \cdot S_{p_{i,j}}$, and by the first part of the
    proof that source is the $(a_{p_{i,j}} + p_{i,j})$th phrase. Hence
    the encoding of $\dol_{p_{i,j}} \cdot S_{p_{i,j}} \cdot \hash_i$ is
    $(a_{p_{i,j}} + p_{i,j}, \hash_i)$.
    \qedhere
  \end{itemize}
\end{proof}

\begin{definition}[LZ78-compressible encoding of the product $AB$ as a string]\label{def:mat-mul-lz78-answer-string}
  Let $A, B \in \BinaryAlphabet^{n \times n}$, where $n \geq 1$.
  For every $i \in [1 \dd n]$, we define the string (parentheses added for clarity):
  \[
    X_{A,B,i} := \textstyle\bigodot_{t=1,\dots,k} \Big( (n+R[t]) \cdot \RowOnes{B}{R[t]} \cdot (2n + i) \Big)
                 \in [1 \dd 3n]^{*},
  \]
  where $R = \RowOnes{A}{i}$ (\cref{def:row-ones}) and $k = |R|$.
  We then let (see \cref{def:prefixes})
  \[
    X_{A,B} := \Big( \textstyle\bigodot_{i=1,\ldots,n} \prefixes{(n+i) \cdot \RowOnes{B}{i}} \Big) \cdot
               \Big( \textstyle\bigodot_{i=1,\dots,n} X_{A,B,i}                              \Big)
               \in [1 \dd 3n]^{*}.
  \]
\end{definition}

\begin{observation}\label{ob:lz78-answer-string-length}
  For every $A,B \in \BinaryAlphabet^{n \times n}$, the string $X_{A,B}$
  (\cref{def:mat-mul-lz78-answer-string}) satisfies $|X_{A,B}| \leq 6n^3$.
\end{observation}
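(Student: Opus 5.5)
The bound follows by splitting $X_{A,B}$ into its two parts from \cref{def:mat-mul-lz78-answer-string} and bounding each separately, using only that every row of $A$ and $B$ has at most $n$ ones. We will show the first part (the prefix gadgets) has length at most $n \cdot \tfrac{(n+1)(n+2)}{2}$ and the second part (the blocks $X_{A,B,i}$) has length at most $n^2(n+2)$. Adding these and checking the total against $6n^3$ for all $n \geq 1$ finishes the proof.

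\emph{First part.} Fix $i \in [1 \dd n]$ and put $Y_i := (n+i) \cdot \RowOnes{B}{i}$. Since $|\RowOnes{B}{i}| \leq n$, we have $|Y_i| \leq n+1$. By \cref{def:prefixes}, $|\prefixes{Y_i}| = |Y_i|(|Y_i|+1)/2 \leq (n+1)(n+2)/2$. Summing over the $n$ values of $i$ bounds the first part by $n(n+1)(n+2)/2$.

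\emph{Second part.} Fix $i \in [1 \dd n]$. The block $X_{A,B,i}$ is a concatenation of $k = |\RowOnes{A}{i}| \leq n$ pieces. Each piece has the form $(n+R[t]) \cdot \RowOnes{B}{R[t]} \cdot (2n+i)$, so its length is $|\RowOnes{B}{R[t]}| + 2 \leq n+2$. Hence $|X_{A,B,i}| \leq n(n+2)$, and summing over $i$ bounds the second part by $n^2(n+2)$.

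\emph{Final inequality.} For $n \geq 1$ we have $n+1 \leq 2n$ and $n+2 \leq 3n$. The first part is then at most $n \cdot 2n \cdot 3n / 2 = 3n^3$, and the second is at most $n^2 \cdot 3n = 3n^3$. Their sum is at most $6n^3$, as claimed.

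There is no real obstacle here. The only care needed is to count the extra delimiter symbols correctly: one in each $Y_i$, and two in each piece of $X_{A,B,i}$.
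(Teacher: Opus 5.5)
Your proposal is correct and follows the paper's proof: you bound the prefix gadgets by $n(n+1)(n+2)/2$ and the blocks $X_{A,B,i}$ by $n^2(n+2)$, exactly as the paper does. The only difference is cosmetic. The paper expands the sum to $\tfrac{3}{2}n^3 + \tfrac{7}{2}n^2 + n$, while you bound each factor using $n+1 \leq 2n$ and $n+2 \leq 3n$; both give $6n^3$.
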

\begin{proof}
  For all $i \in [1 \dd n]$, it holds $|\RowOnes{A}{i}| \leq n$ and
  $|\RowOnes{B}{i}| \leq n$. Thus, $X_{A,B,i}$ is a concatenation of at most
  $n$ strings each of which has length at most $n+2$, and hence
  $|X_{A,B,i}| \leq n(n+2)$ for all $i \in [1 \dd n]$.
  On the other hand, by \cref{def:prefixes}, for every $i \in [1 \dd n]$,
  letting $k = |\RowOnes{B}{i}|$, it holds $|\prefixes{(n+i) \cdot \RowOnes{B}{i}}| = (k+1)(k+2)/2 \leq (n+1)(n+2)/2$.
  Consequently, $|X_{A,B}| \leq n(n+1)(n+2)/2 + n^2(n+2) = \tfrac{3}{2}n^3 + \tfrac{7}{2}n^2 + n \leq 6n^3$.
\end{proof}

\begin{figure}
  \begin{algorithm}[H]
    \caption{Computing the LZ78 representation of the string $X_{A,B}$ (\cref{def:mat-mul-lz78-answer-string}).}\label{alg:lz78-parsing-of-answer-string}
    \Input{Boolean matrices $A,B \in \BinaryAlphabet^{n \times n}$.}
    \Output{The LZ78 representation (\cref{def:lz78-representation}) of the string $X_{A,B}$ (\cref{def:mat-mul-lz78-answer-string}).}
    \vspace{1ex}
    $\mathrm{sum}[0] \gets 0$\\
    \For{$i \gets 1$ \KwTo $n$}{
      $\ell \gets 0$\\
      $\mathrm{parse}.\mathrm{append}(0,n+i)$\\
      \For{$j \gets 1$ \KwTo $n$}{
        \If{$B[i,j] = \one$}{
          $\ell \gets \ell + 1$\\
          $\mathrm{parse}.\mathrm{append}(\mathrm{sum}[i-1] + (i-1) + \ell, j)$\\
        }
      }
      $\mathrm{sum}[i] \gets \mathrm{sum}[i-1] + \ell$\\
    }
    \For{$i \gets 1$ \KwTo $n$}{
      \For{$j \gets 1$ \KwTo $n$}{
        \If{$A[i,j] = \one$}{
          $\mathrm{parse}.\mathrm{append}(\mathrm{sum}[j] + j, 2n + i)$\\
        }
      }
    }
    \Return{$\mathrm{parse}$}
  \end{algorithm}
\end{figure}

\begin{proposition}\label{pr:lz78-parsing-of-mat-mul-answer-string}
  Given any $A, B \in \BinaryAlphabet^{n \times n}$, we can compute the
  LZ78 representation (\cref{def:lz78-representation}) of the string
  $X_{A,B}$ (\cref{def:mat-mul-lz78-answer-string}) in $\bigO(n^2)$ time.
\end{proposition}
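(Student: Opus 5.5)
The plan is to show that \cref{alg:lz78-parsing-of-answer-string} is correct and runs in $\bigO(n^2)$ time. Correctness will come from instantiating \cref{lm:lz78-representation}, and the running time from inspecting the loops.

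First, we instantiate \cref{lm:lz78-representation} with $m = k = n$ and $\Sigma = \{1,\dots,n\}$ (so $\sigma = n$). We take $S_i = \RowOnes{B}{i}$ for $i \in [1 \dd n]$. The delimiters are encoded as integers by setting $\dol_i = n+i$ and $\hash_i = 2n+i$. These $2n$ symbols are pairwise distinct and lie outside $[1 \dd n]$, as the lemma requires. For every $i \in [1 \dd n]$ we let $n_i = |\RowOnes{A}{i}|$ and $p_{i,j} = \RowOnes{A}{i}[j]$; these are strictly increasing by \cref{def:row-ones}. Comparing with \cref{def:mat-mul-lz78-answer-string}, the lemma's $T_i$ then coincides with $X_{A,B,i}$. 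The lemma's string $T$ coincides with $X_{A,B}$, because $\prefixes{\dol_i \cdot S_i} = \prefixes{(n+i)\cdot\RowOnes{B}{i}}$.

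Next, we match the algorithm's output to the representation the lemma gives. The lemma lists the phrases in textual order: for each $i$, the phrase $\dol_i$ and then $\dol_i S_i[1\dd j]$ for $j = 1,\dots,|S_i|$; after all of these come the blocks of $T_1,\dots,T_n$ in order. The first double loop of the algorithm emits exactly this prefix of the representation.
\begin{itemize}
\item For each $i$, the pair $(0, n+i)$ encodes $\dol_i$.
\item For the $\ell$-th one in row $i$ of $B$, say at column $j$, we have $S_i[\ell] = j$. The algorithm maintains the invariant $\mathrm{sum}[i-1] = a_{i-1}$ by induction on $i$, so the emitted pair $(\mathrm{sum}[i-1]+(i-1)+\ell, j)$ is exactly $(a_{i-1}+(i-1)+\ell, S_i[\ell])$.
\end{itemize}
The second double loop iterates over $i$ and then over the columns $j$ with $A[i,j]=\one$ in increasing order. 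This is precisely the sequence $p_{i,1} < \dots < p_{i,n_i}$. Since $\mathrm{sum}[j] = a_j$, the emitted pair $(\mathrm{sum}[j]+j, 2n+i)$ equals $(a_{p_{i,\cdot}}+p_{i,\cdot}, \hash_i)$, matching the lemma.

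Finally, each loop body takes $\bigO(1)$ time and there are $\bigO(n^2)$ iterations in total, so the algorithm runs in $\bigO(n^2)$ time. The only point that needs care is the bookkeeping: checking that the indices in the algorithm agree with the prefix sums $a_i$ and with the phrase ordering in \cref{lm:lz78-representation}. No real obstacle remains, because the lemma has already established the structure of the factorization.
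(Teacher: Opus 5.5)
Your proposal is correct and matches the paper's proof. Both instantiate \cref{lm:lz78-representation} with $S_i = \RowOnes{B}{i}$, $p_{i,j} = \RowOnes{A}{i}[j]$, $\dol_i \mapsto n+i$ and $\hash_i \mapsto 2n+i$, then emit the pairs given by the lemma in textual order in $\bigO(n^2)$ time. The only cosmetic difference is that you verify \cref{alg:lz78-parsing-of-answer-string} directly via the invariant $\mathrm{sum}[i] = a_i$, whereas the paper argues for a step-by-step procedure over the precomputed strings $\RowOnes{A}{i}$ and $\RowOnes{B}{i}$ and only remarks that the algorithm is an equivalent implementation.
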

\begin{proof}
  Observe that letting $S_i = \RowOnes{B}{i}$, $n_i =
  |\RowOnes{A}{i}|$, and $(p_{i,j})_{j=1,\ldots,n_i}$ (where $i \in
  [1 \dd n]$) be the sequence of symbols in $\RowOnes{A}{i}$,
  the string $T$ from \cref{lm:lz78-representation} is equal to the
  string $X_{A,B}$, assuming we map symbols in the sets
  $\{\dol_1, \dots, \dol_n\}$ and $\{\hash_1, \dots, \hash_n\}$ so
  that for every $i \in [1 \dd n]$, $\dol_i$ (resp. $\hash_i$) is
  mapped to $n + i$ (resp.\ $2n + i$).
  By \cref{lm:lz78-representation}, we can thus compute the LZ78
  representation of $X_{A,B}$ as follows:
  \begin{enumerate}

  \item In $\bigO(n^2)$ time, we compute, for $i \in [1 \dd n]$, the
    strings $A_i = \RowOnes{A}{i}$ and $B_i = \RowOnes{B}{i}$.

  \item In $\bigO(n)$ time, we compute an array ${\rm sum}[0 \dd n]$
    defined such that ${\rm sum}[0] = 0$ and, for every $i \in [1 \dd n]$, ${\rm sum}[i] = {\rm sum}[i-1] + |B_i|$.

  \item For $i=1,\ldots,n$, perform the following steps:
    \begin{enumerate}
    \item First, append the pair $(0, n+i)$ to the output LZ78 representation.
    \item For $j=1,\ldots,|B_i|$, append the pair
      $({\rm sum}[i-1]+(i-1)+j,B_i[j])$ to the output LZ78 representation.
    \end{enumerate}
    In total, this takes $\bigO(n^2)$ time.

  \item For $i=1,\ldots,n$, scan the sequence $A_i$
    left-to-right and, for every $j \in [1 \dd |A_i|]$, append the
    pair $({\rm sum}[A_i[j]] + A_i[j], 2n + i)$ to the output LZ78
    representation.  In total, this takes $\bigO(n^2)$
    time.
  \end{enumerate}

  In total, the computation takes $\bigO(n^2)$ time.
  An equivalent optimized implementation, which avoids explicitly storing the strings $A_i$
  and $B_i$, is given in \cref{alg:lz78-parsing-of-answer-string}.
\end{proof}

\thhardnessoflzbigsymbolocc*
\begin{proof}
  Let $A,B \in \BinaryAlphabet^{n \times n}$ be two given Boolean matrices.
  The algorithm to compute the Boolean matrix product $AB$ proceeds as follows:
  \begin{enumerate}

  \item In the first step, we compute the integer $\Delta = \sum_{i=1}^{n} |\prefixes{(n+i) \cdot \RowOnes{B}{i}}|$.
    To this end, we first in $\bigO(n^2)$ time compute the strings $\RowOnes{B}{i}$ for all $i \in [1 \dd n]$.
    Using their lengths, the computation of $\Delta$ takes $\bigO(n)$ time (see \cref{def:prefixes}).

  \item In the second step, we compute an array
    $A_{\rm sum}[0 \dd n]$ defined such that $A_{\rm sum}[0] = 0$ and,
    for every $i \in [1 \dd n]$, $A_{\rm sum}[i] = \sum_{j=1}^{i} |X_{A,B,j}|$ (see \cref{def:mat-mul-lz78-answer-string}).
    We begin by computing an array $A_{\rm len}[1 \dd n]$ defined by
    $A_{\rm len}[i] = |X_{A,B,i}|$.
    To this end, we first in $\bigO(n^2)$ time compute the strings $\RowOnes{A}{i}$ and $\RowOnes{B}{i}$
    (see \cref{def:row-ones}) for
    all $i \in [1 \dd n]$. Using these strings, we can then compute $|X_{A,B,i}|$ for any $i \in [1 \dd n]$
    in $\bigO(n)$ time (see \cref{def:mat-mul-lz78-answer-string}). In total, computing
    $A_{\rm len}[1 \dd n]$ takes $\bigO(n^2)$ time.
    Using $A_{\rm len}$, we then easily obtain $A_{\rm sum}$ in $\bigO(n)$ time.
    In total, construction of $A_{\rm sum}[0 \dd n]$ takes $\bigO(n^2)$ time.
    Observe that by \cref{def:mat-mul-lz78-answer-string}, for any $i \in [1 \dd n]$, it holds
    \[
      X_{A,B}(\Delta + A_{\rm sum}[i-1] \dd \Delta + A_{\rm sum}[i]] = X_{A,B,i}.
    \]

  \item Using \cref{pr:lz78-parsing-of-mat-mul-answer-string},
    in $\bigO(n^2)$ time we compute the LZ78 representation of the string $T = X_{A,B}$ (\cref{def:mat-mul-lz78-answer-string}).
    Note that $T$ is over alphabet $\Sigma = \{1, \dots, \sigma\}$, where $\sigma = 3n$.
    Note also that by \cref{def:mat-mul-lz78-answer-string} and \cref{lm:lz78-representation}, it follows that $\BigLZSize{T} =
    n + \sum_{i=1}^{n} |\RowOnes{B}{i}| + \sum_{i=1}^{n} |\RowOnes{A}{i}| \leq n + 2n^2$.
    Denote $N = |T|$ and recall that $N \leq 6n^3$ (\cref{ob:lz78-answer-string-length}).
    We then have $\sigma = 3n = \Omega(N^{1/3})$.

  \item In this step, we compute the product $AB$. By
    \cref{lm:sym-occ-and-mat-mul-equivalence},
    computing $(AB)[i,j]$ for any given $i,j \in [1 \dd n]$
    can be done using a single symbol occurrence query on the string $T_{A,B,i}$ (\cref{def:mat-mul-answer-string}). More precisely,
    by \cref{lm:sym-occ-and-mat-mul-equivalence}, it holds
    $(AB)[i,j] = \Occurs{T_{A,B,i}}{0}{|T_{A,B,i}|}{j}$. On the other hand, note that by
    comparing \cref{def:mat-mul-answer-string} and \cref{def:mat-mul-lz78-answer-string}, we immediately
    have that since $j \in [1 \dd n]$, it holds $\Occurs{T_{A,B,i}}{0}{|T_{A,B,i}|}{j} = \Occurs{X_{A,B,i}}{0}{|X_{A,B,i}|}{j}$.
    Putting this together, we thus obtain that
    \begin{align*}
      (AB)[i,j]
        &= \Occurs{T_{A,B,i}}{0}{|T_{A,B,i}|}{j}\\
        &= \Occurs{X_{A,B,i}}{0}{|X_{A,B,i}|}{j}\\
        &= \Occurs{X_{A,B}}{\Delta+A_{\rm sum}[i-1]}{\Delta+A_{\rm sum}[i]}{j}\\
        &= \Occurs{T}{\Delta+A_{\rm sum}[i-1]}{\Delta+A_{\rm sum}[i]}{j}.
    \end{align*}
    Using the array $A_{\rm sum}$, in $\bigO(n^2)$ time we prepare
    the arguments for a batch of $m = n^2$ symbol occurrence queries on $T$.
    Using the algorithm from the claim, we then answer all the queries
    (and hence compute the product $AB$) in
    $
      \bigO((\BigLZSize{T} + m) \log^{\bigO(1)} N) =
      \bigO(n^2 \log^{\bigO(1)} (n^3)) =
      \bigO(n^2 \log^{\bigO(1)} n)
    $
    time.
 \end{enumerate}

 In total, the algorithm takes $\bigO(n^2 \log^{\bigO(1)} n)$ time.
\end{proof}

%% file: hardness-of-approx-rank.tex
\subsection{Hardness of Additive Approximation of Rank on Grammars}\label{sec:hardness-of-approx-rank-queries}

\begin{definition}[Stretch operation]\label{def:strword}
  For every $T \in \Sigma^{n}$ and $k \in \Zp$, we denote
  \[
    \strword{k}{T} := \textstyle\bigodot_{i=1,\dots,n} T[i]^k \in \Sigma^{n \cdot k}.
  \]
\end{definition}

\begin{lemma}\label{lm:stretch}
  Let $T \in \Sigma^{n}$, $k \in \Zp$, and
  $T' = \strword{k}{T}$ (\cref{def:strword}).
  For every $b, e \in [0 \dd n]$ and every $c \in \Sigma$,
  the following two statements hold
  (see \cref{def:symbol-occ,def:rank}):
  \begin{enumerate}
  \item If $\Occurs{T}{b}{e}{c} = 0$, then $\TwoSidedRank{T'}{kb}{ke}{c} = 0$.
  \item If $\Occurs{T}{b}{e}{c} = 1$, then $\TwoSidedRank{T'}{kb}{ke}{c} \geq k$.
  \end{enumerate}
\end{lemma}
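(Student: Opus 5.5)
The plan is to describe exactly which block of $T'$ corresponds to $T(b \dd e]$, and then count copies of $c$ in that block.

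\textbf{Step 1: locate the block.} By \cref{def:strword}, $T' = \bigodot_{i=1,\dots,n} T[i]^k$. The $i$th run $T[i]^k$ therefore occupies positions $(k(i-1) \dd ki]$ of $T'$. Equivalently, for every $p \in [1 \dd nk]$ we have $T'[p] = T[\lceil p/k \rceil]$.

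\textbf{Step 2: express the two-sided rank through the original string.} Assume $b \leq e$. The interval $(kb \dd ke]$ is then the disjoint union of the intervals $(k(i-1) \dd ki]$ over $i \in (b \dd e]$. Each such interval consists of exactly $k$ copies of $T[i]$. Summing over the intervals gives the identity
\[
  \TwoSidedRank{T'}{kb}{ke}{c} = k \cdot \TwoSidedRank{T}{b}{e}{c}.
\]
If $b > e$, the intervals $(b \dd e]$ and $(kb \dd ke]$ are both empty. Both ranks are then $0$ and $\Occurs{T}{b}{e}{c}=0$, so the same identity holds trivially.

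\textbf{Step 3: derive the two statements.}
\begin{enumerate}
\item If $\Occurs{T}{b}{e}{c} = 0$, then no $i \in (b \dd e]$ has $T[i]=c$. Hence $\TwoSidedRank{T}{b}{e}{c}=0$, and the identity gives $\TwoSidedRank{T'}{kb}{ke}{c} = 0$.
\item If $\Occurs{T}{b}{e}{c} = 1$, then $\TwoSidedRank{T}{b}{e}{c} \geq 1$, and the identity gives $\TwoSidedRank{T'}{kb}{ke}{c} \geq k$.
\end{enumerate}

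The only point needing care is the index bookkeeping in Step 2, namely checking that the half-open intervals partition $(kb \dd ke]$ exactly. The argument contains no real obstacle beyond that.
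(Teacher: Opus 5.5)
Your proof is correct and follows the same route as the paper: both rest on the block decomposition $T'(kb \dd ke] = \bigodot_{i=b+1,\dots,e} T[i]^k$, with the degenerate case of an empty interval handled separately. The only difference is that you first derive the exact identity $\TwoSidedRank{T'}{kb}{ke}{c} = k \cdot \TwoSidedRank{T}{b}{e}{c}$ and then specialize, whereas the paper argues the two cases directly from the decomposition; your identity is slightly stronger at no extra cost.
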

\begin{proof}
  If $b \geq e$, then we have $kb \geq ke$,
  $\Occurs{T}{b}{e}{c} = 0$, and $\TwoSidedRank{T'}{kb}{ke}{c} = 0$. Thus, the
  claim in this case holds. Let us now assume $b < e$.
  By \cref{def:strword}, $T'(kb \dd ke] = \bigodot_{i=b+1,\dots,e} T[i]^{k}$.
  Thus, if $c$ does not occur in $T(b \dd e]$, then it also
  does not occur in $T'(kb \dd ke]$, and hence
  $\TwoSidedRank{T'}{kb}{ke}{c} = 0$. If $c$ occurs in $T(b \dd e]$ then,
  letting $i \in (b \dd e]$ be such that $T[i] = c$, $T[i]^{k}$ occurs
  in $T'(kb \dd ke]$, and hence $\TwoSidedRank{T'}{kb}{ke}{c} \geq k$.
\end{proof}

\begin{proposition}\label{pr:slp-stretching}
  Let $\sigma \in \Zp$ and $\Sigma = \{1, \dots, \sigma\}$.
  Let $G = (V, \Sigma, R, S)$ be an SLP generating a nonempty string
  $T$ and let $k \in \Zp$.
  Given $G$, we can in $\bigO(|G| + \sigma \log k)$ time construct an SLG 
  $G' = (V', \Sigma, R', S')$ such that $\Lang{G'} = \{\strword{k}{T}\}$
  (\cref{def:strword}).
\end{proposition}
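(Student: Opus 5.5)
The construction keeps every rule of $G$ and replaces only the terminal rules. In an SLP, every nonterminal $N$ has either $\Rhs{G}{N} = XY$ with $X,Y \in V$, or $\Rhs{G}{N} = a$ with $a \in \Sigma$. Stretching is a morphism: it maps each symbol $a$ to $a^k$, so $\strword{k}{UW} = \strword{k}{U}\cdot\strword{k}{W}$ for all strings $U,W$. So it suffices to have, for each $a \in \Sigma$, a nonterminal $P_a$ with $\Exp{G'}{P_a} = a^k$. Every terminal rule $N \to a$ is then replaced by $N \to P_a$, and binary rules $N \to XY$ are kept. We set $S' = S$.

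Correctness follows by induction along the SLG ordering of $V$, from the last nonterminal to the first. The claim is $\Exp{G'}{N} = \strword{k}{\Exp{G}{N}}$ for every $N \in V$.
\begin{itemize}
\item For a terminal rule $N \to a$: $\Exp{G'}{N} = a^k = \strword{k}{a}$.
\item For a binary rule $N \to XY$: $\Exp{G'}{N} = \Exp{G'}{X}\Exp{G'}{Y} = \strword{k}{\Exp{G}{X}}\strword{k}{\Exp{G}{Y}} = \strword{k}{\Exp{G}{N}}$, using the morphism property.
\end{itemize}
Applying this to $S$ gives $\Lang{G'} = \{\strword{k}{T}\}$.

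To generate $a^k$ with $\bigO(\log k)$ rules, use binary exponentiation. Define $Q_{a,0} \to a$ and $Q_{a,t} \to Q_{a,t-1}Q_{a,t-1}$ for $t \in [1 \dd \lfloor \log_2 k \rfloor]$, so that $\Exp{G'}{Q_{a,t}} = a^{2^t}$. Then $P_a$ has as its right-hand side the concatenation of $Q_{a,t}$ over all $t$ with bit $t$ of $k$ set. This right-hand side has length at most $\lfloor\log_2 k\rfloor + 1$, so each $a$ contributes $\bigO(\log k)$ to $|G'|$.

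To respect the SLG ordering, place all the new nonterminals after the original $V$, in the order $P_a$, then $Q_{a,t}$ for decreasing $t$. Rewriting the original rules takes $\bigO(|G|)$ time, and building the gadgets for all $a \in \Sigma$ takes $\bigO(\sigma \log k)$ time. This is within the claimed bound. For $k = 1$ the gadget degenerates to $P_a \to Q_{a,0} \to a$, which causes no problem.

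There is no real obstacle here. The only care needed is formal bookkeeping: the SLG ordering condition, and the size and time accounting. Gadgets could be built only for the terminals that actually appear in $G$, but building them for all of $\Sigma$ already matches the stated bound.
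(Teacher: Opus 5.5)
Your proposal is correct and follows the paper's proof essentially step for step. The paper also copies the binary rules and redirects each terminal rule $N_i \to c$ to a nonterminal $Y_c$ (your $P_a$). The right-hand side of $Y_c$ concatenates the doubling nonterminals $Z_{c,p_t}$, with $\Exp{G'}{Z_{c,j}} = c^{2^j}$ (your $Q_{a,t}$), according to the binary expansion of $k$, and the paper concludes via $\Exp{G'}{X_i} = \strword{k}{\Exp{G}{N_i}}$. Your explicit handling of the SLG ordering is a bit more careful than the paper's. As in the paper, the size and time bound for $k=1$ implicitly reads $\log k$ as $\max(1,\log k)$, since each of the $\sigma$ gadgets still has constant size.
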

\begin{proof}

  Denote $V = \{N_1, \dots, N_g\}$ and let $s \in [1 \dd g]$ be such
  that $S = N_s$. Denote $h = \lfloor \log k \rfloor$, and let
  $p_1,\ldots,p_{h'} \in \Zn$ be the unique nonempty increasing sequence
  satisfying $k = \sum_{i=1}^{h'} 2^{p_i}$. We then let
  $V' =
    \{X_i\}_{i \in [1 \dd g]} \cup
    \{Y_c\}_{c \in [1 \dd \sigma]} \cup
    \{Z_{c,j}\}_{c \in [1 \dd \sigma],j \in [0 \dd h]}$
  and $S' = X_{s}$, where the definitions of rules in $V'$ are as follows:
  \begin{itemize}
  \item To define $\Rhs{G'}{X_i}$, where $i \in [1 \dd g]$, we consider two cases.
    If $|\Rhs{G}{N_i}| = 2$, then, letting $j_1, j_2 \in [1 \dd g]$ be such that
    $\Rhs{G}{N_i} = N_{j_1} N_{j_2}$, we define $\Rhs{G'}{X_i} = X_{j_1} X_{j_2}$.
    Otherwise, letting $c \in [1 \dd \sigma]$ be such
    that $\Rhs{G}{N_i} = c$, we let $\Rhs{G'}{X_i} = Y_c$.
  \item For every $c \in [1 \dd \sigma]$, we define
    $\Rhs{G'}{Y_c} = \bigodot_{t=1,\ldots,h'} Z_{c,p_t}$.
  \item To define $\Rhs{G'}{Z_{c,j}}$, where $c \in [1 \dd \sigma]$ and $j \in [0 \dd h]$,
    we again consider two cases.
    If $j = 0$, then we let $\Rhs{G'}{Z_{c,j}} = c$.
    Otherwise, we let $\Rhs{G'}{Z_{c,j}} = Z_{c,j-1} Z_{c,j-1}$.
  \end{itemize}

  The size of $G'$ is $\bigO(g + \sigma h' + \sigma h) = \bigO(|G| + \sigma \log k)$.
  Given $G$, we can easily construct $G'$ in $\bigO(|G| + \sigma \log k)$ time.

  To show that $\Lang{G'} = \{\strword{k}{T}\}$, it suffices to observe that,
  for every $c \in [1 \dd \sigma]$ and $j \in [0 \dd h]$, it holds
  $\Exp{G'}{Z_{c,j}} = c^{2^j}$. This implies that, for every $c \in [1 \dd \sigma]$,
  $\Exp{G'}{Y_c}
    = \Exp{G'}{\bigodot_{j=1,\ldots,h'} Z_{c,p_j}}
    = \bigodot_{j=1,\ldots,h'} \Exp{G'}{Z_{c,p_j}}
    = \bigodot_{j=1,\ldots,h'} c^{2^{p_j}}
    = c^{\sum_{j=1}^{h'}2^{p_j}}
    = c^k$.
  Consequently, for
  every $i \in [1 \dd g]$, $\Exp{G'}{X_i} = \strword{k}{\Exp{G}{N_i}}$.
  In particular, $\Exp{G'}{S'} = \Exp{G'}{X_s} = \strword{k}{\Exp{G}{N_s}}
  = \strword{k}{\Exp{G}{S}} = \strword{k}{T}$, and hence we obtain
  $\Lang{G'} = \{\Exp{G'}{S'}\} = \{\strword{k}{T}\}$.
\end{proof}

\begin{proposition}\label{pr:slg-stretching}
  Let $\sigma \in \Zp$ and $\Sigma = \{1, \dots, \sigma\}$.
  Let $G = (V, \Sigma, R, S)$ be an SLG generating a nonempty string
  $T$ and let $k \in \Zp$.
  Given $G$, we can in $\bigO(|G| + \sigma \log k)$ time construct an SLG 
  $G' = (V', \Sigma, R', S')$ such that $\Lang{G'} = \{\strword{k}{T}\}$
  (\cref{def:strword}).
\end{proposition}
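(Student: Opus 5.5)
The plan is to reduce to \cref{pr:slp-stretching}. Given the SLG $G$ generating the nonempty string $T$, we first apply \cref{ob:slg-to-slp}, which is valid because $\Lang{G} = \{T\} \neq \{\emptystring\}$. In $\bigO(|G|)$ time this yields an SLP $G'' = (V'', \Sigma, R'', S'')$ with $|G''| = \Theta(|G|)$ and $\Lang{G''} = \Lang{G} = \{T\}$. Next we apply \cref{pr:slp-stretching} to $G''$ with the same $k$. This produces, in $\bigO(|G''| + \sigma \log k) = \bigO(|G| + \sigma \log k)$ time, an SLG $G'$ over $\Sigma$ such that $\Lang{G'} = \{\strword{k}{T}\}$. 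The total running time is therefore $\bigO(|G| + \sigma \log k)$, as claimed.

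The only point to check is that the hypotheses line up. The alphabet of $G''$ is still $\Sigma = \{1,\dots,\sigma\}$, since \cref{ob:slg-to-slp} keeps the terminal set unchanged. The string generated by $G''$ is nonempty, which \cref{pr:slp-stretching} requires. I expect no real obstacle, since this is a direct composition of two earlier results.

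A self-contained alternative, if one wanted to avoid the SLP conversion, would mimic the proof of \cref{pr:slp-stretching} directly. In each right-hand side one would replace every terminal occurrence $c$ by a nonterminal $Y_c$ that expands to $c^k$, built via the doubling nonterminals $Z_{c,j}$, and keep nonterminal occurrences as they are. Then $\Exp{G'}{N} = \strword{k}{\Exp{G}{N}}$ follows by induction along the SLG ordering. Any nonterminals with empty right-hand sides are harmless under this construction. The size is $\bigO(|G| + \sigma\log k)$. I would nevertheless present the short two-step reduction above.
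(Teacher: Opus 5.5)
Your proposal is correct and follows the paper's proof exactly: the paper also first converts $G$ into an equivalent SLP of size $\Theta(|G|)$ via \cref{ob:slg-to-slp}, which applies because $T$ is nonempty, and then invokes \cref{pr:slp-stretching}. Your hypothesis checks are the only points needing verification: the terminal alphabet is preserved, and the running times compose to $\bigO(|G| + \sigma \log k)$.
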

\begin{proof}
  The result follows by combining \cref{ob:slg-to-slp} and \cref{pr:slp-stretching}.
\end{proof}

\thhardnessofrankapproximation*
\begin{proof}
  Let $A,B \in \BinaryAlphabet^{n \times n}$ be two given Boolean matrices.
  The algorithm to compute the Boolean matrix product $AB$ proceeds as follows:
  \begin{enumerate}

  \item We compute an array $A_{\rm sum}[0 \dd n]$
    defined so that $A_{\rm sum}[0] = 0$ and, for every $i \in [1 \dd n]$,
    $A_{\rm sum}[i] = \sum_{j=1}^{i} |T_{A,B,j}|$ (where $T_{A,B,j}$ is as in
    \cref{def:mat-mul-answer-string}). Using the algorithm presented
    in the proof of \cref{th:hardness-of-slg-symbol-occ}, the computation
    of $A_{\rm sum}$ takes $\bigO(n^2)$ time. Note that, for every
    $i \in [1 \dd n]$, we then have
    $T_{A,B}(A_{\rm sum}[i-1] \dd A_{\rm sum}[i]] = T_{A,B,i}$
    (see \cref{def:mat-mul-answer-string}).

  \item We check if $A_{\rm sum}[n] = 0$. If so, then by \cref{def:mat-mul-answer-string},
    it holds $|T_{A,B}| = 0$. In this case, by \cref{lm:sym-occ-and-mat-mul-equivalence},
    the output matrix $AB$ consists only of zeros. Thus, in this case we return the
    $n \times n$ matrix consisting of zeros in $\bigO(n^2)$ time, and conclude the algorithm.
    Henceforth, we assume that $A_{\rm sum}[n] > 0$, i.e., $|T_{A,B}| > 0$.

  \item Using \cref{pr:mat-mul-answer-string-slg-construction},
    in $\bigO(n^2)$ time we construct an SLG
    $G_{A,B} = (V_{A,B}, \Sigma, R_{A,B}, S_{A,B})$ such that
    $\Sigma = \{1, \dots, \sigma\}$ (where $\sigma = n$) and
    $\Lang{G_{A,B}} = \{T_{A,B}\}$ (\cref{def:mat-mul-answer-string}).
    Note that the upper bound on the runtime of
    \cref{pr:mat-mul-answer-string-slg-construction} implies
    that $|G_{A,B}| = \bigO(n^2)$. Note also that $|T_{A,B}| \leq n^3$
    (\cref{ob:mat-mul-answer-string-length}).

  \item Apply \cref{pr:slg-stretching} to the SLG $G_{A,B}$ with $k = \lfloor 2^{1/\mu} \cdot n^{3(1/\mu-1)} \rfloor$
    (note that we can apply \cref{pr:slg-stretching} here, since $T_{A,B} \neq \emptystring$).
    This takes $\bigO(|G_{A,B}| + \sigma \log k) = \bigO(n^2)$ time,
    and we obtain an SLG $G = (V, \Sigma, R, S)$ such that $\Lang{G} = \{T\}$,
    where $T = \strword{k}{T_{A,B}}$ (\cref{def:strword}).
    The upper bound on the runtime of
    \cref{pr:slg-stretching} implies that $|G| = \bigO(n^2)$.
    Denote $N = |T|$ and note that
    \[
      N = k \cdot |T_{A,B}| \leq \lfloor 2^{1/\mu} \cdot n^{3(1/\mu-1)} \rfloor \cdot n^3 \leq 2^{1/\mu} \cdot n^{3/\mu}.
    \]
    Since $\sigma = n$, we thus obtain $N \leq 2^{1/\mu} \cdot \sigma^{3/\mu}$, i.e., $\sigma = \Omega(N^{\mu/3})$.
    Thus, all the conditions for applying the algorithm from the claim hold for
    SLG $G$.

  \item In this step, we compute the product $AB$.
    We first make the following observations:
    \begin{itemize}
    \item Note that
      $N^{1-\mu} \leq (2^{1/\mu} \cdot n^{3/\mu})^{1-\mu} = 2^{1/\mu-1} \cdot n^{3(1/\mu-1)}$. This implies
      $\lfloor N^{1-\mu} \rfloor \leq \lfloor 2^{1/\mu-1} \cdot n^{3(1/\mu-1)} \rfloor \leq \tfrac{1}{2} \lfloor 2^{1/\mu} \cdot n^{3(1/\mu-1)} \rfloor = k/2$.
    \item Consider
      any $c \in \Sigma$ and any $b,e \in [0 \dd |T_{A,B}|]$.
      Let $x$ be a $\lfloor N^{1-\mu} \rfloor$-additive approximation of
      $\TwoSidedRank{T}{kb}{ke}{c}$. Observe that:
      \begin{itemize}
      \item If $\Occurs{T_{A,B}}{b}{e}{c} = 0$, then since
        $T = \strword{k}{T_{A,B}}$, it follows by
        \cref{lm:stretch} that
        $\TwoSidedRank{T}{kb}{ke}{c} = 0$. Thus,
        by \cref{def:additive-approximation},
        $x < \lfloor N^{1-\mu} \rfloor \leq k/2$.
      \item If $\Occurs{T_{A,B}}{b}{e}{c} = 1$, then by \cref{lm:stretch}, it holds
        $\TwoSidedRank{T}{kb}{ke}{c} \geq k$. Thus,
        by \cref{def:additive-approximation},
        we have $x > \TwoSidedRank{T}{kb}{ke}{c} - \lfloor N^{1-\mu} \rfloor \geq k/2$.
      \end{itemize}
      Consequently, we can determine $\Occurs{T_{A,B}}{b}{e}{c}$
      from $x$ in $\bigO(1)$ time.
    \end{itemize}
    By \cref{lm:sym-occ-and-mat-mul-equivalence}, for any $i,j \in [1 \dd n]$,
    it holds $(AB)[i,j] = \Occurs{T_{A,B}}{A_{\rm sum}[i-1]}{A_{\rm sum}[i]}{j}$.
    Consequently, to compute the product $AB$, we
    compute the $\lfloor N^{1-\mu} \rfloor$-additive
    approximation of the value $\TwoSidedRank{T}{k\cdot A_{\rm sum}[i-1]}{k\cdot A_{\rm sum}[i]}{j}$
    for every $i,j \in [1 \dd n]$. By the above discussion, this
    lets us compute all the values in $AB$.
    Answering the batch of $m = n^2$
    approximate rank queries takes
    $\bigO((|G| + m) \log^{\bigO(1)} N) = \bigO(n^2 \log^{\bigO(1)} (n^{3/\mu}))
    = \bigO(n^2 \log^{\bigO(1)} n)$ time.
	\end{enumerate}

	In total, the algorithm takes $\bigO(n^2 \log^{\bigO(1)} n)$ time.
\end{proof}

%% file: reductions-from-ov.tex
\section{Reductions from Orthogonal Vectors}\label{sec:reductions-from-ov}

\input{hardness-of-range-distinct-counting}

\input{hardness-of-range-mode-freq}

%% file: hardness-of-range-distinct-counting.tex
\subsection{Hardness of Range Distinct Count Queries on Grammars}\label{sec:hardness-of-range-distinct-count-on-slg}

\begin{definition}[String of vector IDs with $\one$ at given coordinate]\label{def:one-vectors}
  Let $A = (a_1, \dots, a_n)$ be a sequence of $n \geq 1$ binary vectors of dimension $d \geq 1$, i.e., such that, for every
  $i \in [1 \dd n]$, it holds $a_i \in \BinaryAlphabet^{d}$.
  For every $j \in [1 \dd d]$, by $\OneVectors{A}{j}$ we denote a string containing in
  increasing order the indices of all vectors from $A$ with a one at the $j$th coordinate, i.e.,
  $\OneVectors{A}{j} \in \{1, \dots, n\}^*$ is such that, letting $k$ be its length, it holds
  \begin{itemize}
  \item if $k > 0$, then $\OneVectors{A}{j}[1] < \dots < \OneVectors{A}{j}[k]$, and
  \item $\{\OneVectors{A}{j}[t]\}_{t \in [1 \dd k]} = \{i \in [1 \dd n] : a_i[j] = \one\}$.
  \end{itemize}
\end{definition}

\begin{definition}[String of $\one$-bit positions]\label{def:one-pos}
  For any vector $x \in \BinaryAlphabet^{d}$, where $d \geq 1$, by $\OnePos{x}$ we denote a string
  containing in increasing order the coordinates of all ones in $x$, i.e.,
  $\OnePos{x} \in \{1,\dots,d\}^*$ is such that, letting $k$ be its length, it holds
  \begin{itemize}
  \item if $k > 0$, then $\OnePos{x}[1] < \dots < \OnePos{x}[k]$, and
  \item $\{\OnePos{x}[t]\}_{t \in [1 \dd k]} = \{j \in [1 \dd d] : x[j] = \one\}$.
  \end{itemize}
\end{definition}

\begin{definition}[Grammar-compressible encoding of OV orthogonality as a string]\label{def:range-distinct-count-answer-string}
  Let $A = (a_1, \dots, a_n)$ be a sequence of $n \geq 1$ binary vectors of dimension $d \geq 1$, i.e., such that,
  for every $i \in [1 \dd n]$, it holds $a_i \in \BinaryAlphabet^{d}$.
  For every $i \in [1 \dd n]$, we define (see \cref{def:one-vectors})
  \[
    U_{A,i} := \textstyle\bigodot_{j=1,\dots,k}\OneVectors{A}{R[j]},
  \]
  where $R = \OnePos{a_i}$ (\cref{def:one-pos}) and $k = |R|$. We then let
  \[
    U_A := \textstyle\bigodot_{i=1,\dots,n} U_{A,i}.
  \]
\end{definition}

\begin{observation}\label{ob:range-distinct-count-answer-string-length}
  For every sequence $A = (a_1, \dots, a_n)$ of $n \geq 1$ binary vectors of dimension $d \geq 1$,
  the string $U_A$ (\cref{def:range-distinct-count-answer-string}) satisfies $|U_A| \leq n^2 d$.
\end{observation}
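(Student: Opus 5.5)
The bound will follow from a direct length count, mirroring the proof of \cref{ob:mat-mul-answer-string-length}. I will first bound the length of each building block and then sum over all blocks.

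First, fix $j \in [1 \dd d]$. By \cref{def:one-vectors}, $\OneVectors{A}{j}$ is a strictly increasing sequence of distinct indices from $\{1,\dots,n\}$, so $|\OneVectors{A}{j}| \leq n$. Similarly, for every $i \in [1 \dd n]$, \cref{def:one-pos} gives that $\OnePos{a_i}$ is a strictly increasing sequence of distinct coordinates from $\{1,\dots,d\}$, so $|\OnePos{a_i}| \leq d$.

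Next, fix $i \in [1 \dd n]$ and let $R = \OnePos{a_i}$ and $k = |R| \leq d$. By \cref{def:range-distinct-count-answer-string}, $U_{A,i}$ is the concatenation of the $k$ strings $\OneVectors{A}{R[1]}, \dots, \OneVectors{A}{R[k]}$. Each of these has length at most $n$, hence
\[
  |U_{A,i}| \leq k n \leq nd .
\]
Finally, $U_A$ is the concatenation of the $n$ blocks $U_{A,1},\dots,U_{A,n}$, which gives
\[
  |U_A| \leq n \cdot nd = n^2 d .
\]

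There is no real obstacle. The only point requiring care is the distinctness argument: it is what bounds each $\OneVectors{A}{j}$ by $n$ and each $\OnePos{a_i}$ by $d$, and it comes directly from the strictly increasing ordering required in the two definitions.
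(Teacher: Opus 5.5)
Your proposal is correct and matches the paper's proof: you bound each $\OneVectors{A}{j}$ by $n$ and each $\OnePos{a_i}$ by $d$, which gives $|U_{A,i}| \leq nd$. Concatenating the $n$ blocks then gives $|U_A| \leq n^2 d$.
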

\begin{proof}
  For every $j \in [1 \dd d]$, it holds $|\OneVectors{A}{j}| \leq n$ (\cref{def:one-vectors}).
  On the other hand, for every $i \in [1 \dd n]$, $|\OnePos{a_i}| \leq d$ (\cref{def:one-pos}).
  Consequently, for every $i \in [1 \dd n]$, we have $|U_{A,i}| \leq n d$ and hence we obtain that
  $|U_A| \leq n^2 d$ (see \cref{def:range-distinct-count-answer-string}).
\end{proof}

\begin{lemma}\label{lm:reduce-ov-to-range-distinct-count}
  Let $A = (a_1, \dots, a_n)$ be a sequence of $n \geq 1$ binary vectors of dimension $d \geq 1$.
  For every $i,j \in [1 \dd n]$, the following two conditions are equivalent:
  \begin{enumerate}
  \item\label{lm:reduce-ov-to-range-distinct-count-it-1}
    $\DotProduct{a_i}{a_j} \neq 0$.
  \item\label{lm:reduce-ov-to-range-distinct-count-it-2}
    The symbol $j$ occurs in the string $U_{A,i}$ (\cref{def:range-distinct-count-answer-string}).
  \end{enumerate}
\end{lemma}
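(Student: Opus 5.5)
The proof mirrors that of \cref{lm:sym-occ-and-mat-mul-equivalence}, with $\OnePos{a_i}$ playing the role of $\RowOnes{A}{i}$ and $\OneVectors{A}{\cdot}$ playing the role of $\RowOnes{B}{\cdot}$. The one preliminary fact needed is that, since all entries are binary, $\DotProduct{a_i}{a_j} \neq 0$ holds if and only if there is a coordinate $p \in [1 \dd d]$ with $a_i[p] = a_j[p] = \one$. Indeed, $\DotProduct{a_i}{a_j} = \sum_{p=1}^{d} a_i[p] \cdot a_j[p]$ is a sum of nonnegative terms, each of which is $1$ exactly when both coordinates are $\one$. Throughout, write $R = \OnePos{a_i}$ and $k = |R|$.

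For (\ref{lm:reduce-ov-to-range-distinct-count-it-1}) $\Rightarrow$ (\ref{lm:reduce-ov-to-range-distinct-count-it-2}), fix $p$ with $a_i[p] = a_j[p] = \one$.
\begin{enumerate}
\item By \cref{def:one-pos}, $a_i[p] = \one$ means $p$ occurs in $R$, so $p = R[t]$ for some $t \in [1 \dd k]$.
\item By \cref{def:one-vectors}, $a_j[p] = \one$ means $j$ occurs in $\OneVectors{A}{p} = \OneVectors{A}{R[t]}$.
\item By \cref{def:range-distinct-count-answer-string}, $\OneVectors{A}{R[t]}$ is a substring of $U_{A,i}$, so $j$ occurs in $U_{A,i}$.
\end{enumerate}

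For (\ref{lm:reduce-ov-to-range-distinct-count-it-2}) $\Rightarrow$ (\ref{lm:reduce-ov-to-range-distinct-count-it-1}), suppose $j$ occurs in $U_{A,i}$.
\begin{enumerate}
\item Since $U_{A,i}$ is the concatenation of the strings $\OneVectors{A}{R[t]}$ for $t \in [1 \dd k]$, the symbol $j$ occurs in some $\OneVectors{A}{R[t]}$.
\item By \cref{def:one-vectors}, this gives $a_j[R[t]] = \one$.
\item By \cref{def:one-pos}, $a_i[R[t]] = \one$.
\item Hence the term $a_i[R[t]] \cdot a_j[R[t]] = 1$ appears in the dot product. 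All other terms are nonnegative, so $\DotProduct{a_i}{a_j} \geq 1$.
\end{enumerate}

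No step is a genuine obstacle. The only point needing care is the degenerate case $k = 0$, i.e., $a_i$ is the zero vector. Then $U_{A,i} = \emptystring$, so no symbol occurs in it, and $\DotProduct{a_i}{a_j} = 0$ for every $j$. Both conditions are false, so the equivalence holds; the general argument already covers this case, since it quantifies over $t \in [1 \dd k]$.
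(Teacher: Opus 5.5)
Your proof is correct and follows essentially the same argument as the paper: both directions go through a coordinate $p$ with $a_i[p]=a_j[p]=\one$. They use that $p$ appears in $\OnePos{a_i}$ and that $j$ appears in $\OneVectors{A}{p}$, which is one of the concatenated pieces of $U_{A,i}$. Your extra remarks on the nonnegativity of the dot-product terms and the degenerate case $a_i = \zero^{d}$ are harmless additions that the paper leaves implicit.
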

\begin{proof}

  (\ref{lm:reduce-ov-to-range-distinct-count-it-1}) $\Rightarrow$
  (\ref{lm:reduce-ov-to-range-distinct-count-it-2})
  Assume that $\DotProduct{a_i}{a_j} \neq 0$.
  Then, there exists $p \in [1 \dd d]$ such that $a_i[p] = \one$ and $a_j[p] = \one$.
  By \cref{def:one-vectors}, this implies that the symbol $j$ occurs in the string $\OneVectors{A}{p}$.
  On the other hand, symbol $p$ occurs in $\OnePos{a_i}$ (\cref{def:one-pos}).
  By \cref{def:range-distinct-count-answer-string}, this implies that $\OneVectors{A}{p}$ is a substring of $U_{A,i}$.
  Consequently, the symbol $j$ occurs in the string $U_{A,i}$.

  (\ref{lm:reduce-ov-to-range-distinct-count-it-2}) $\Rightarrow$
  (\ref{lm:reduce-ov-to-range-distinct-count-it-1})
  Assume now that the symbol $j$ occurs in the string $U_{A,i}$. Denote $R = \OnePos{a_i}$ and $k = |R|$.
  By \cref{def:range-distinct-count-answer-string}, the assumption that $j$ occurs in $U_{A,i}$ implies
  that there exists $t \in [1 \dd k]$ such that $j$ occurs in $\OneVectors{A}{R[t]}$. By \cref{def:one-vectors},
  this implies that $a_j[R[t]] = \one$. On the other hand, by \cref{def:one-pos}, we have $a_i[R[t]] = \one$.
  Thus, we obtain $\DotProduct{a_i}{a_j} \neq 0$.
\end{proof}

\begin{lemma}\label{lm:ov-and-range-distinct-count-equivalence}
  Let $A = (a_1, \dots, a_n)$ be a sequence of $n \geq 1$ binary vectors of
  dimension $d \geq 1$. The following two conditions are equivalent:
  \begin{enumerate}
  \item There exist $i,j \in [1 \dd n]$ such that $\DotProduct{a_i}{a_j} = 0$.
  \item There exists $i \in [1 \dd n]$ such that $\RangeDistinctCount{U_{A,i}}{0}{|U_{A,i}|} < n$
    (see \cref{def:range-distinct-count,def:range-distinct-count-answer-string}).
  \end{enumerate}
\end{lemma}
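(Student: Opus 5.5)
The plan is to derive both directions directly from \cref{lm:reduce-ov-to-range-distinct-count}, using one elementary fact. By \cref{def:one-vectors} and \cref{def:range-distinct-count-answer-string}, every symbol of $U_{A,i}$ lies in $\{1,\dots,n\}$. Hence
\[
\RangeDistinctCount{U_{A,i}}{0}{|U_{A,i}|} = |\{j \in [1 \dd n] : j \text{ occurs in } U_{A,i}\}| \leq n,
\]
and equality holds exactly when every $j \in [1 \dd n]$ occurs in $U_{A,i}$. It follows that $\RangeDistinctCount{U_{A,i}}{0}{|U_{A,i}|} < n$ holds if and only if some $j \in [1 \dd n]$ does not occur in $U_{A,i}$. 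Note that \cref{def:range-distinct-count} applies to the full range $(0 \dd |U_{A,i}|]$ even when $U_{A,i}$ is empty; in that case the count is $0 < n$.

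For (1) $\Rightarrow$ (2), I take $i,j$ with $\DotProduct{a_i}{a_j} = 0$. The contrapositive of the direction (2) $\Rightarrow$ (1) of \cref{lm:reduce-ov-to-range-distinct-count} shows that $j$ does not occur in $U_{A,i}$. By the observation above, the distinct count of $U_{A,i}$ is then less than $n$.

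For (2) $\Rightarrow$ (1), I take $i$ with count less than $n$. By the observation, some $j \in [1 \dd n]$ is missing from $U_{A,i}$. The direction (1) $\Rightarrow$ (2) of \cref{lm:reduce-ov-to-range-distinct-count}, used contrapositively, gives $\DotProduct{a_i}{a_j} = 0$.

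No step is a real obstacle. The only care needed is to justify that the alphabet of $U_{A,i}$ is contained in $\{1,\dots,n\}$, so that the distinct count is at most $n$ and equals $n$ exactly when all indices appear. Both directions apply to an arbitrary pair $(i,j)$, so the case $i=j$ (a zero vector, for which $U_{A,i}$ is empty) needs no separate treatment.
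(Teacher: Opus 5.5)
Your proposal is correct and matches the paper's proof: the paper likewise observes, via \cref{lm:reduce-ov-to-range-distinct-count}, that $\RangeDistinctCount{U_{A,i}}{0}{|U_{A,i}|} = |\{j \in [1 \dd n] : \DotProduct{a_i}{a_j} \neq 0\}|$, and reads off the equivalence from that. You make the same argument with the alphabet bound $U_{A,i} \in \{1,\dots,n\}^{*}$ stated explicitly, which is the only ingredient the paper leaves implicit.
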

\begin{proof}
  The equivalence follows immediately by observing that by
  \cref{lm:reduce-ov-to-range-distinct-count}, for every $i \in [1 \dd n]$,
  it holds
  \[
    \RangeDistinctCount{U_{A,i}}{0}{|U_{A,i}|}
      = |\{j \in [1 \dd n] : \DotProduct{a_i}{a_j} \neq 0\}|.
    \qedhere
  \]
\end{proof}

\begin{proposition}\label{pr:range-distinct-count-answer-string-slg-construction}
  Given any sequence $A = (a_1, \dots, a_n)$ of $n \geq 1$ binary vectors of
  dimension $d \geq 1$, we can in $\bigO(dn)$ time construct an SLG
  $G = (V, \Sigma, R, S)$ such that $\Sigma = \{1, \dots, n\}$ and
  $\Lang{G} = \{U_A\}$ (\cref{def:range-distinct-count-answer-string}).
\end{proposition}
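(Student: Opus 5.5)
We mirror the construction from \cref{pr:mat-mul-answer-string-slg-construction}, replacing rows of $B$ by the strings $\OneVectors{A}{j}$ and rows of $A$ by the strings $\OnePos{a_i}$. Concretely, we take $V = \{X_1, \dots, X_d, Y_1, \dots, Y_n, S\}$ and define the rules as follows.
\begin{itemize}
\item For every $j \in [1 \dd d]$, we set $\Rhs{G}{X_j} = \OneVectors{A}{j} \in \{1,\dots,n\}^{*}$ (\cref{def:one-vectors}).
\item For every $i \in [1 \dd n]$, letting $R = \OnePos{a_i}$ (\cref{def:one-pos}) and $k = |R|$, we set $\Rhs{G}{Y_i} = X_{R[1]} X_{R[2]} \cdots X_{R[k]}$, interpreted as $\emptystring$ when $k = 0$.
\item Finally, we set $\Rhs{G}{S} = Y_1 Y_2 \cdots Y_n$.
\end{itemize}
Ordering the nonterminals as $S, Y_1, \dots, Y_n, X_1, \dots, X_d$ shows that $G$ is an SLG. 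The expansions are then immediate: $\Exp{G}{X_j} = \OneVectors{A}{j}$, hence $\Exp{G}{Y_i} = U_{A,i}$ by \cref{def:range-distinct-count-answer-string}, and therefore $\Exp{G}{S} = \bigodot_{i} U_{A,i} = U_A$. This gives $\Lang{G} = \{U_A\}$.

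For the running time, we first compute all strings $\OnePos{a_i}$ by scanning each vector $a_i$ once, which takes $\bigO(nd)$ time in total. We compute all strings $\OneVectors{A}{j}$ by scanning $i = 1, \dots, n$ in increasing order and, for each $j$ with $a_i[j] = \one$, appending $i$ to a list for $j$. The lists come out sorted, and this also takes $\bigO(nd)$ time.

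Writing out the right-hand sides then costs:
\begin{itemize}
\item $\sum_j |\OneVectors{A}{j}| \leq nd$ for the $X_j$,
\item $\sum_i |\OnePos{a_i}| \leq nd$ for the $Y_i$,
\item $n$ for $S$.
\end{itemize}
The total is $\bigO(nd)$, as required. The size measure counts $\max(|\Rhs{G}{N}|, 1)$ per nonterminal, so empty right-hand sides contribute only $\bigO(n + d)$ extra.

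None of these steps is a real obstacle. The only point needing a little care is producing $\OneVectors{A}{j}$ in sorted order within $\bigO(nd)$ time without sorting, and the column-wise bucketing scan above handles this directly.
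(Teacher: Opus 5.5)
Your proposal is correct and matches the paper's proof essentially verbatim: the same nonterminals $X_1,\dots,X_d$, $Y_1,\dots,Y_n$, $S$ with the same right-hand sides, the same expansion argument $\Exp{G}{Y_i} = U_{A,i}$, and the same $\bigO(dn)$ time accounting. You also spell out two details the paper leaves implicit: the column-wise bucketing scan that produces each $\OneVectors{A}{j}$ already sorted, and the nonterminal ordering that certifies $G$ is an SLG.
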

\begin{proof}

  Let $V = \{X_1, \ldots, X_d, Y_1, \ldots, Y_n, S\}$.
  For $j \in [1 \dd d]$, we set
  $\Rhs{G}{X_j} = \OneVectors{A}{j} \in \{1, \dots, n\}^{*}$ (\cref{def:one-vectors}).
  For every $i \in [1 \dd n]$, letting $R = \OnePos{a_i}$ (\cref{def:one-pos}) and $k = |R|$,
  we set $\Rhs{G}{Y_i} = X_{R[1]} \cdot X_{R[2]} \cdots X_{R[k]}$,
  with the right-hand side interpreted as $\emptystring$ if $k=0$.
  Finally, we set $\Rhs{G}{S} = Y_{1} \cdot Y_{2} \cdots Y_{n}$.

  To construct $G$, we proceed as follows:
  \begin{enumerate}
  \item In $\bigO(dn)$ time we compute the strings $\OneVectors{A}{j}$ for all $j \in [1 \dd d]$.
  \item In $\bigO(dn)$ time we compute the strings $\OnePos{a_i}$ for all $i \in [1 \dd n]$.
  \item Given the above strings, the construction of $\Rhs{G}{N}$ for all $N \in V$
    takes $\bigO(dn)$ time.
  \end{enumerate}
  In total, the construction takes $\bigO(dn)$ time.

  To show that $\Lang{G} = \{U_A\}$, it suffices to observe that, for every
  $i \in [1 \dd n]$, it holds $\Exp{G}{Y_i} = U_{A,i}$ (\cref{def:range-distinct-count-answer-string}).
  Thus, $\Lang{G} = \{\Exp{G}{S}\} = \{\bigodot_{i=1,\ldots,n} \Exp{G}{Y_i}\} = \{U_{A}\}$.
\end{proof}

\thhardnessofrangedistinctcount*
\begin{proof}

  Suppose that there exists a constant $\epsilon > 0$ such that,
  given any SLG $G = (V, \Sigma, R, S)$ representing a string
  $T \in \Sigma^{N}$ (where $\Sigma = \{1, \ldots, \sigma\}$ and
  $\sigma = \Omega((N / \log N)^{1/2})$), we can
  answer any batch of $m = \Omega(|G| / \log N)$ range distinct counting queries in
  $\bigO(m|G|^{1-\epsilon} \log^{c'} N)$ time,
  where $c' > 0$ is some constant.
  We will show that this implies that \cref{con:ov} does not hold.

  Denote $\epsilon' = \epsilon/2$.
  Consider any sequence $A = (a_1, \ldots, a_n)$ of $n \geq 1$ binary vectors
  of dimension $d = c\log n$, where $c \geq 1$ is a constant.
  Given the sequence $A$, we determine whether there exist
  $i,j \in [1 \dd n]$ such that $\DotProduct{a_i}{a_j} = 0$ as follows:
  \begin{enumerate}

  \item In $\bigO(dn)$ time we check if there
    exists $i \in [1 \dd n]$ such that $a_i = \zero^{d}$. If we find
    such $i$, we immediately return that the given OV instance has
    a pair $i,j \in [1 \dd n]$ satisfying $\DotProduct{a_i}{a_j} = 0$,
    and the algorithm is complete. Let us now assume that no such $i$
    was found. Denote $p = \sum_{i=1}^{n} \sum_{j=1}^{d} a_i[j]$.
    Since $A$ does not contain the zero vector, we have $p \geq n$.
    Note that by \cref{def:range-distinct-count-answer-string}, we then
    obtain $|U_{A}| \geq p \geq n$.

  \item Next, we compute an array $A_{\rm sum}[0 \dd n]$
    defined such that $A_{\rm sum}[0] = 0$ and, for every $i \in [1 \dd n]$,
    $A_{\rm sum}[i] = \sum_{j=1}^{i} |U_{A,j}|$
    (see \cref{def:range-distinct-count-answer-string}). We first
    compute an array $A_{\rm len}[1 \dd n]$ defined by
    $A_{\rm len}[i] = |U_{A,i}|$. To this end, we first in $\bigO(dn)$ time
    compute the strings $\OneVectors{A}{j}$ for all $j \in [1 \dd d]$
    and the strings $\OnePos{a_i}$ for all $i \in [1 \dd n]$. Using
    these strings, we can then compute $|U_{A,i}|$ for every
    $i \in [1 \dd n]$ in $\bigO(d)$ time. In total, computing
    $A_{\rm len}$ takes $\bigO(dn)$ time. Using $A_{\rm len}$, we
    can obtain $A_{\rm sum}$ in $\bigO(n)$ time. In total, construction
    of $A_{\rm sum}$ takes $\bigO(dn)$ time.
    Observe that, by
    \cref{def:range-distinct-count-answer-string}, for any $i \in [1 \dd n]$,
    we then have
    \[
      U_{A}(A_{\rm sum}[i-1] \dd A_{\rm sum}[i]] = U_{A,i}.
    \]

  \item Using \cref{pr:range-distinct-count-answer-string-slg-construction},
    in $\bigO(dn)$ time we construct an SLG
    $G = (V, \Sigma, R, S)$ such that $\Sigma = \{1, \dots, \sigma\}$ (where $\sigma = n$)
    and $\Lang{G} = \{T\}$, where $T = U_{A}$ (\cref{def:range-distinct-count-answer-string}).
    Note that the upper bound on the runtime of \cref{pr:range-distinct-count-answer-string-slg-construction}
    implies that $|G| = \bigO(dn)$. Denote $N = |T|$ and note
    that by \cref{ob:range-distinct-count-answer-string-length}, it holds
    $N \leq dn^2$. Recall that above we also noted that $N \geq n$ (which, together with $N \leq dn^2$, implies that $\log N = \Theta(\log n)$).
    Since $\sigma = n$, we thus obtain $N \leq d\sigma^2$,
    i.e., $\sigma = \Omega((N/d)^{1/2}) = \Omega((N/\log n)^{1/2}) = \Omega((N/\log N)^{1/2})$.
    Note also that by $|G| = \bigO(dn)$, it follows that $n = \Omega(|G| / d) = \Omega(|G| / \log n) = \Omega(|G| / \log N)$.

  \item In this step, we determine whether there exist
    $i,j \in [1 \dd n]$ such that $\DotProduct{a_i}{a_j} = 0$.
    By \cref{lm:ov-and-range-distinct-count-equivalence}, this reduces to checking whether there
    exists $i \in [1 \dd n]$ such that $\RangeDistinctCount{U_{A,i}}{0}{|U_{A,i}|} < n$
    (see \cref{def:range-distinct-count,def:range-distinct-count-answer-string}).
    We thus proceed as follows.
    Using the array $A_{\rm sum}$, in $\bigO(n)$ time, we prepare arguments for a batch of
    $m = n = \Omega(|G| / \log N)$ range distinct counting queries on $T$. More precisely, the $i$th query
    is to compute $\RangeDistinctCount{T}{A_{\rm sum}[i-1]}{A_{\rm sum}[i]}$. We return
    yes if and only if at least one returned value is smaller than $n$. Answering
    this batch of queries takes
    \begin{align*}
      \bigO(m|G|^{1-\epsilon} \log^{c'} N)
        \subseteq   \bigO(n \cdot (dn)^{1-\epsilon} \log^{c'} n)
        =           \bigO(n^{2-\epsilon} \cdot \log^{1+c'-\epsilon} n)
        \subseteq   \bigO(n^{2-\epsilon'})
    \end{align*}
    time.
  \end{enumerate}

  In total, the algorithm takes $\bigO(dn + n^{2-\epsilon'}) = \bigO(n^{2-\epsilon'})$ time.
  Note that since our choice of $\epsilon'$ works for all $c$, we have thus proved
  that \cref{con:ov} does not hold.
\end{proof}

%% file: hardness-of-range-mode-freq.tex
\subsection{Hardness of Range Mode Frequency Queries on Grammars}\label{sec:hardness-of-range-mode-freq-queries}

\begin{definition}[String of vector IDs with $\zero$ at given coordinate]\label{def:zero-vectors}
  Let $A = (a_1, \dots, a_n)$ be a sequence of $n \geq 1$ binary vectors of dimension $d \geq 1$, i.e., such that, for every
  $i \in [1 \dd n]$, it holds $a_i \in \BinaryAlphabet^{d}$.
  For every $j \in [1 \dd d]$, by $\ZeroVectors{A}{j}$ we denote a string containing in
  increasing order the indices of all vectors from $A$ with a zero at the $j$th coordinate, i.e.,
  $\ZeroVectors{A}{j} \in \{1, \dots, n\}^*$ is such that, letting $k$ be its length, it holds
  \begin{itemize}
  \item if $k > 0$, then $\ZeroVectors{A}{j}[1] < \dots < \ZeroVectors{A}{j}[k]$, and
  \item $\{\ZeroVectors{A}{j}[t]\}_{t \in [1 \dd k]} = \{i \in [1 \dd n] : a_i[j] = \zero\}$.
  \end{itemize}
\end{definition}

\begin{definition}[Grammar-compressible encoding of OV orthogonality as a string]\label{def:range-mode-freq-answer-string}
  Let $A = (a_1, \dots, a_n)$ be a sequence of $n \geq 1$ binary vectors of dimension $d \geq 1$, i.e., such that,
  for every $i \in [1 \dd n]$, it holds $a_i \in \BinaryAlphabet^{d}$.
  For every $i \in [1 \dd n]$, we define (see \cref{def:zero-vectors})
  \[
    W_{A,i} := \textstyle\bigodot_{j=1,\dots,k}\ZeroVectors{A}{R[j]},
  \]
  where $R = \OnePos{a_i}$ (\cref{def:one-pos}) and $k = |R|$. We then let
  \[
    W_A := \textstyle\bigodot_{i=1,\dots,n} W_{A,i}.
  \]
\end{definition}

\begin{observation}\label{ob:range-mode-freq-answer-string-length}
  For every sequence $A = (a_1, \dots, a_n)$ of $n \geq 1$ binary vectors
  of dimension $d \geq 1$, the string $W_A$
  (\cref{def:range-mode-freq-answer-string}) satisfies $|W_A| \leq n^2d$.
\end{observation}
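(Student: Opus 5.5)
The plan mirrors \cref{ob:range-distinct-count-answer-string-length}: first bound the length of each string $\ZeroVectors{A}{j}$, then the length of each block $W_{A,i}$, and finally sum over the $n$ blocks.

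First, fix any $j \in [1 \dd d]$. By \cref{def:zero-vectors}, the symbols of $\ZeroVectors{A}{j}$ are pairwise distinct, since they are strictly increasing, and they all lie in $\{1, \dots, n\}$. Hence $|\ZeroVectors{A}{j}| \leq n$.

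Next, fix any $i \in [1 \dd n]$. By \cref{def:one-pos}, $\OnePos{a_i}$ is a strictly increasing string over $\{1, \dots, d\}$, so $k = |\OnePos{a_i}| \leq d$. By \cref{def:range-mode-freq-answer-string}, $W_{A,i}$ is the concatenation of $k$ strings of the form $\ZeroVectors{A}{R[j]}$. Each of these has length at most $n$, so $|W_{A,i}| \leq kn \leq nd$.

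Finally, $W_A$ is the concatenation of the $n$ blocks $W_{A,1}, \dots, W_{A,n}$. Therefore $|W_A| = \sum_{i=1}^{n} |W_{A,i}| \leq n \cdot nd = n^2 d$.

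There is no real obstacle here; the only point needing care is that strict monotonicity in the definitions forces distinct entries, which is what gives the bounds $n$ and $d$.
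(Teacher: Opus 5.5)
Your proof is correct and takes the same approach as the paper, whose proof just says the argument is analogous to the one for $|U_A| \leq n^2 d$. That argument bounds each $|\ZeroVectors{A}{j}|$ by $n$ and each $|\OnePos{a_i}|$ by $d$, so each block $W_{A,i}$ has length at most $nd$ and the $n$ blocks together have length at most $n^2 d$.
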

\begin{proof}
  The proof is analogous to the proof of \cref{ob:range-distinct-count-answer-string-length}.
\end{proof}

\begin{lemma}\label{lm:reduce-ov-to-range-mode-freq}
  Let $A = (a_1, \ldots, a_n)$ be a sequence of $n \geq 1$ binary vectors of dimension $d \geq 1$.
  For every $i, j \in [1 \dd n]$, the following conditions are equivalent:
  \begin{enumerate}
  \item\label{lm:reduce-ov-to-range-mode-freq-it-1}
    $\DotProduct{a_i}{a_j} \neq 0$.
  \item\label{lm:reduce-ov-to-range-mode-freq-it-2}
    Symbol $j$ occurs in the string $W_{A,i}$ (\cref{def:range-mode-freq-answer-string}) less than $|\OnePos{a_i}|$ times.
  \end{enumerate}
\end{lemma}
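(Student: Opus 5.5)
The plan is to prove an exact counting identity for the number of occurrences of $j$ in $W_{A,i}$; both implications then follow at once. Write $R = \OnePos{a_i}$ and $k = |R|$. By \cref{def:range-mode-freq-answer-string}, $W_{A,i}$ is the concatenation of the $k$ strings $\ZeroVectors{A}{R[1]}, \dots, \ZeroVectors{A}{R[k]}$. By \cref{def:zero-vectors}, each $\ZeroVectors{A}{p}$ has strictly increasing entries. Hence the symbol $j$ occurs in $\ZeroVectors{A}{p}$ at most once, and it occurs exactly when $a_j[p] = \zero$. Summing over the $k$ blocks gives
\[
  \TwoSidedRank{W_{A,i}}{0}{|W_{A,i}|}{j} = |\{t \in [1 \dd k] : a_j[R[t]] = \zero\}|.
\]

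Next, I will split the coordinates $R[1], \dots, R[k]$ according to the value of $a_j$ at each one. This gives
\[
  k = |\{t : a_j[R[t]] = \zero\}| + |\{t : a_j[R[t]] = \one\}|.
\]
By \cref{def:one-pos}, the coordinates $R[t]$ are exactly those $p$ with $a_i[p] = \one$. So the second term equals $|\{p \in [1 \dd d] : a_i[p] = a_j[p] = \one\}| = \DotProduct{a_i}{a_j}$. Combining the two displays, the number of occurrences of $j$ in $W_{A,i}$ is $|\OnePos{a_i}| - \DotProduct{a_i}{a_j}$.

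The equivalence is then immediate. The count is less than $|\OnePos{a_i}|$ if and only if $\DotProduct{a_i}{a_j} > 0$, which for binary vectors is the same as $\DotProduct{a_i}{a_j} \neq 0$. This gives both directions simultaneously.

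The argument has no real obstacle. The only point that needs care is the step showing each block contributes at most one occurrence of $j$, which relies on the strict increase required by \cref{def:zero-vectors}. The degenerate case $k = 0$ also needs a check: then $W_{A,i}$ is empty, the count is $0$, which is not less than $0$, and indeed $\DotProduct{a_i}{a_j} = 0$, so the statement holds there too.
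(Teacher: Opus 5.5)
Your proof is correct and uses the same key fact as the paper. That fact is that each block $\ZeroVectors{A}{R[t]}$ of $W_{A,i}$ contains $j$ at most once, and contains it exactly when $a_j[R[t]] = \zero$. The paper uses this only to say when the count reaches $|\OnePos{a_i}|$, and then argues the two implications separately. You instead sharpen it to the exact identity that $j$ occurs $|\OnePos{a_i}| - \DotProduct{a_i}{a_j}$ times, which gives both directions at once.
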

\begin{proof}

  We first establish an auxiliary property.
  Observe that, for every $p \in [1 \dd d]$, all symbols in the string $\ZeroVectors{A}{p}$ (\cref{def:zero-vectors}) are distinct.
  By \cref{def:range-mode-freq-answer-string}, this implies that, for every $i, j \in [1 \dd n]$,
  the symbol $j$ occurs in the string $W_{A,i}$ at most $|\OnePos{a_i}|$ times. Consequently, we obtain that the symbol $j$ appears in the string $W_{A,i}$
  $|\OnePos{a_i}|$ times if and only if, letting $R = \OnePos{a_i}$ and $k = |R|$, the symbol $j$ appears in the string $\ZeroVectors{A}{R[t]}$ for all $t \in [1 \dd k]$.

  (\ref{lm:reduce-ov-to-range-mode-freq-it-1}) $\Rightarrow$
  (\ref{lm:reduce-ov-to-range-mode-freq-it-2})
  Assume that $\DotProduct{a_i}{a_j} \neq 0$.
  Then, there exists $p \in [1 \dd d]$ such that $a_i[p] = \one$ and $a_j[p] = \one$.
  By \cref{def:one-pos}, $p$ appears in $\OnePos{a_i}$, i.e., letting $R = \OnePos{a_i}$ and $k = |R|$, it holds $R[t] = p$ for some $t \in [1 \dd k]$.
  On the other hand, $a_j[p] = \one$ implies, by \cref{def:zero-vectors}, that $j$ does not occur in $\ZeroVectors{A}{p} = \ZeroVectors{A}{R[t]}$.
  By the above auxiliary property, we thus obtain that $j$ appears in $W_{A,i}$ less than $|\OnePos{a_i}|$ times.

  (\ref{lm:reduce-ov-to-range-mode-freq-it-2}) $\Rightarrow$
  (\ref{lm:reduce-ov-to-range-mode-freq-it-1})
  Assume that the symbol $j$ occurs in $W_{A,i}$ less than $|\OnePos{a_i}|$ times. By the above auxiliary property, this implies that,
  letting $R = \OnePos{a_i}$ and $k = |R|$, there exists $t \in [1 \dd k]$ such that $j$ does not occur in $\ZeroVectors{A}{R[t]}$.
  By \cref{def:zero-vectors}, this implies that it holds $a_{j}[R[t]] = \one$. On the other hand, note that, by $R = \OnePos{a_i}$ and \cref{def:one-pos},
  it holds $a_{i}[R[t]] = \one$. Combining this, we thus obtain that $\DotProduct{a_i}{a_j} \neq 0$.
\end{proof}

\begin{lemma}\label{lm:ov-and-range-mode-freq-equivalence}
  Let $A = (a_1, \dots, a_n)$ be a sequence of $n \geq 1$ binary vectors of
  dimension $d \geq 1$. The following two conditions are equivalent:
  \begin{enumerate}
  \item There exist $i,j \in [1 \dd n]$ such that $\DotProduct{a_i}{a_j} = 0$.
  \item There exists $i \in [1 \dd n]$ such that $\RangeModeFreq{W_{A,i}}{0}{|W_{A,i}|} \geq |\OnePos{a_i}|$
    (see \cref{def:range-mode-freq,def:range-mode-freq-answer-string}).
  \end{enumerate}
\end{lemma}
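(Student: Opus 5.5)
The plan is to mirror the proof of \cref{lm:ov-and-range-distinct-count-equivalence}. I would first establish, for every fixed $i \in [1 \dd n]$, the identity
\[
  \RangeModeFreq{W_{A,i}}{0}{|W_{A,i}|} \geq |\OnePos{a_i}|
  \iff
  \text{there exists } j \in [1 \dd n] \text{ with } \DotProduct{a_i}{a_j} = 0.
\]
Both implications of the lemma then follow by quantifying over $i$.

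The key ingredient is the auxiliary property from the proof of \cref{lm:reduce-ov-to-range-mode-freq}. Each $\ZeroVectors{A}{p}$ consists of distinct symbols, so every symbol $c$ occurs in $W_{A,i}$ at most $k = |\OnePos{a_i}|$ times. Combining this with \cref{def:range-mode-freq}, we have $\RangeModeFreq{W_{A,i}}{0}{|W_{A,i}|} = \max_{c \in \Sigma} \TwoSidedRank{W_{A,i}}{0}{|W_{A,i}|}{c}$, and all symbols of $W_{A,i}$ lie in $[1 \dd n]$. Hence the mode frequency is at least $k$ if and only if some $j \in [1 \dd n]$ occurs exactly $k$ times, which is the same as saying that some $j$ does \emph{not} occur fewer than $k$ times.

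By \cref{lm:reduce-ov-to-range-mode-freq} (the contrapositive of its equivalence), the symbol $j$ fails to occur fewer than $|\OnePos{a_i}|$ times exactly when $\DotProduct{a_i}{a_j} = 0$. This proves the displayed identity.

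The only point needing care is the degenerate case $a_i = \zero^d$. Then $k = 0$ and $W_{A,i} = \emptystring$, so the mode frequency is $\max_c 0 = 0 \geq 0$. Correspondingly, $a_i$ is orthogonal to every vector, including itself, so both sides of the identity hold.

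Since $\Sigma = \{1,\dots,n\}$ is nonempty, the maximum in \cref{def:range-mode-freq} is well defined, and the argument above goes through uniformly. I expect no real obstacle here: the lemma is essentially a restatement of \cref{lm:reduce-ov-to-range-mode-freq} together with the upper bound of $k$ on the count of any symbol.
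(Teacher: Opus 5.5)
Your proposal is correct and matches the paper's proof. The paper likewise fixes $i$, notes that $W_{A,i}$ is over $\{1,\ldots,n\}$, and applies \cref{lm:reduce-ov-to-range-mode-freq} to identify the set of symbols occurring at least $|\OnePos{a_i}|$ times in $W_{A,i}$ with $\{j \in [1 \dd n] : \DotProduct{a_i}{a_j} = 0\}$. Your extra appeal to the upper bound of $|\OnePos{a_i}|$ occurrences per symbol is harmless but unnecessary, since ``at least $k$ times'' is already the negation of ``fewer than $k$ times.''
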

\begin{proof}
  Note that, for every $i \in [1 \dd n]$,
  the string $W_{A,i}$ (\cref{def:range-mode-freq-answer-string})
  is over alphabet $\{1, \ldots, n\}$. Thus,
  by \cref{lm:reduce-ov-to-range-mode-freq}, we obtain that
  \[
    \{j \in [1 \dd n] : j\text{ occurs in }W_{A,i}\text{ at least }|\OnePos{a_i}|\text{ times}\}
    = \{j \in [1 \dd n] : \DotProduct{a_i}{a_j} = 0\}.
    \qedhere
  \]
\end{proof}

\begin{proposition}\label{pr:range-mode-freq-answer-string-slg-construction}
  Given any sequence $A = (a_1, \dots, a_n)$ of $n \geq 1$ binary vectors of
  dimension $d \geq 1$, we can in $\bigO(dn)$ time construct an SLG
  $G = (V, \Sigma, R, S)$ such that $\Sigma = \{1, \dots, n\}$ and
  $\Lang{G} = \{W_{A}\}$ (\cref{def:range-mode-freq-answer-string}).
\end{proposition}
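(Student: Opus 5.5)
The construction mirrors the proof of \cref{pr:range-distinct-count-answer-string-slg-construction}, with $\ZeroVectors{A}{j}$ in place of $\OneVectors{A}{j}$. Let $V = \{X_1, \ldots, X_d, Y_1, \ldots, Y_n, S\}$. The rules are as follows:
\begin{itemize}
\item for every $j \in [1 \dd d]$, set $\Rhs{G}{X_j} = \ZeroVectors{A}{j} \in \{1, \dots, n\}^{*}$ (\cref{def:zero-vectors});
\item for every $i \in [1 \dd n]$, letting $R = \OnePos{a_i}$ (\cref{def:one-pos}) and $k = |R|$, set $\Rhs{G}{Y_i} = X_{R[1]} \cdot X_{R[2]} \cdots X_{R[k]}$, interpreted as $\emptystring$ if $k = 0$;
\item set $\Rhs{G}{S} = Y_1 \cdot Y_2 \cdots Y_n$.
\end{itemize}
Ordering the nonterminals as $S, Y_1, \ldots, Y_n, X_1, \ldots, X_d$ shows that $G$ is an SLG, since every nonterminal appears only on right-hand sides of nonterminals earlier in this order.

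For the running time, compute all strings $\ZeroVectors{A}{j}$ with one pass over the coordinates. For each $j \in [1 \dd d]$, scan $i = 1, \ldots, n$ and append $i$ whenever $a_i[j] = \zero$; this takes $\bigO(dn)$ time in total, and the output is automatically increasing. Similarly, compute each $\OnePos{a_i}$ by scanning $a_i$, which takes $\bigO(dn)$ time in total. Writing the right-hand sides then costs $\sum_j |\ZeroVectors{A}{j}| + \sum_i |\OnePos{a_i}| + n = \bigO(dn)$, so the whole construction runs in $\bigO(dn)$ time. In particular, $|G| = \bigO(dn)$.

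For correctness, note that $\Exp{G}{X_j} = \ZeroVectors{A}{j}$. Hence $\Exp{G}{Y_i} = \bigodot_{t=1,\ldots,k} \ZeroVectors{A}{R[t]} = W_{A,i}$ by \cref{def:range-mode-freq-answer-string}. Therefore $\Lang{G} = \{\Exp{G}{S}\} = \{\bigodot_{i=1,\ldots,n} W_{A,i}\} = \{W_A\}$.

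There is no real obstacle here. The only point needing care is the empty right-hand sides, which arise when $\ZeroVectors{A}{j}$ or $\OnePos{a_i}$ is empty. The size convention $|G| = \sum_{N \in V} \max(|\Rhs{G}{N}|, 1)$ means each such rule still counts as size $1$, which stays within the $\bigO(dn)$ budget.
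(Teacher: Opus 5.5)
Your proposal is correct and is exactly the construction the paper intends: it proves this proposition by saying it is analogous to \cref{pr:range-distinct-count-answer-string-slg-construction}. Your version is that grammar with $\ZeroVectors{A}{j}$ in place of $\OneVectors{A}{j}$, with the same $\bigO(dn)$ construction time and the same expansion argument $\Exp{G}{Y_i} = W_{A,i}$.
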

\begin{proof}
  The proof is analogous to the proof of \cref{pr:range-distinct-count-answer-string-slg-construction}.
\end{proof}

\thhardnessofrangemodefreq*
\begin{proof}

  Suppose that there exists a constant $\epsilon > 0$ such that,
  given any SLG $G = (V, \Sigma, R, S)$ representing a string
  $T \in \Sigma^{N}$ (where $\Sigma = \{1, \ldots, \sigma\}$ and
  $\sigma = \Omega((N / \log N)^{1/2})$), we can
  answer any batch of $m = \Omega(|G| / \log N)$ range mode frequency queries in
  $\bigO(m|G|^{1-\epsilon} \log^{c'} N)$ time,
  where $c' > 0$ is some constant.
  We will show that this implies that \cref{con:ov} does not hold.

  Denote $\epsilon' = \epsilon/2$.
  Consider any sequence $A = (a_1, \ldots, a_n)$ of $n \geq 2$ binary vectors
  of dimension $d = c\log n$, where $c \geq 1$ is a constant.
  Given the sequence $A$, we determine whether there exist
  $i,j \in [1 \dd n]$ such that $\DotProduct{a_i}{a_j} = 0$ as follows:
  \begin{enumerate}

  \item In $\bigO(dn)$ time we check whether for all
    $i \in (1 \dd n]$ it holds $a_i = a_1$. If yes, then we are
    able to solve this OV instance simply by checking if $a_1 = \zero^{d}$,
    and this check concludes the algorithm. Let us thus assume that
    there exists $i \in (1 \dd n]$ such that $a_i \neq a_1$, i.e., not
    all vectors in the input sequence are equal. Then,
    there exists $j \in [1 \dd d]$ such that for some
    $i, i' \in [1 \dd n]$ it holds $a_{i}[j] = \zero$ and $a_{i'}[j] = \one$.
    Denote $n_0 = |\{i \in [1 \dd n] : a_{i}[j] = \zero\}|$ and $n_1 = n - n_0$.
    Note that it holds $n_1 \geq 1$ and $|\ZeroVectors{A}{j}| = n_0 \geq 1$.
    Since for every $i \in [1 \dd n]$ satisfying $a_i[j] = \one$, the string
    $\ZeroVectors{A}{j}$ is a substring of $W_{A,i}$
    (see \cref{def:range-mode-freq-answer-string}),
    it follows that for every such index $i$, it holds $|W_{A,i}| \geq n_0$.
    Since there are $n_1$ such indices $i$, we thus obtain
    \[
      |W_{A}| = \textstyle\sum_{i=1}^{n} |W_{A,i}| \geq n_1 \cdot n_0.
    \]
    By $n_0 \geq 1$, $n_1 \geq 1$, and $n_0 + n_1 = n$, we thus obtain
    $|W_{A}| \geq n - 1$.

  \item In $\bigO(dn)$ time we compute an array $A_{\rm ones}[1 \dd n]$
    defined by $A_{\rm ones}[i] = |\OnePos{a_i}|$.

  \item Next, we compute an array $A_{\rm sum}[0 \dd n]$
    defined such that $A_{\rm sum}[0] = 0$ and, for every $i \in [1 \dd n]$,
    $A_{\rm sum}[i] = \sum_{j=1}^{i} |W_{A,j}|$
    (see \cref{def:range-mode-freq-answer-string}). The algorithm
    proceeds analogously as in the proof of \cref{th:hardness-of-range-distinct-count}.
    Observe that, by
    \cref{def:range-mode-freq-answer-string}, for any $i \in [1 \dd n]$,
    we then have
    \[
      W_{A}(A_{\rm sum}[i-1] \dd A_{\rm sum}[i]] = W_{A,i}.
    \]

  \item Using \cref{pr:range-mode-freq-answer-string-slg-construction},
    in $\bigO(dn)$ time we construct an SLG
    $G = (V, \Sigma, R, S)$ such that $\Sigma = \{1, \dots, \sigma\}$ (where $\sigma = n$)
    and $\Lang{G} = \{T\}$, where $T = W_{A}$ (\cref{def:range-mode-freq-answer-string}).
    Note that the upper bound on the runtime of \cref{pr:range-mode-freq-answer-string-slg-construction}
    implies that $|G| = \bigO(dn)$. Denote $N = |T|$ and note
    that by \cref{ob:range-mode-freq-answer-string-length}, it holds
    $N \leq dn^2$. Recall that above we also noted that $N \geq n-1$ (which, together with $N \leq dn^2$, implies that $\log N = \Theta(\log n)$).
    Since $\sigma = n$, we thus obtain $N \leq d\sigma^2$,
    i.e., $\sigma = \Omega((N/d)^{1/2}) = \Omega((N/\log n)^{1/2}) = \Omega((N/\log N)^{1/2})$.
    Note also that by $|G| = \bigO(dn)$, it follows that $n = \Omega(|G| / d) = \Omega(|G| / \log n) = \Omega(|G| / \log N)$.

  \item In this step, we determine whether there exist
    $i,j \in [1 \dd n]$ such that $\DotProduct{a_i}{a_j} = 0$.
    By \cref{lm:ov-and-range-mode-freq-equivalence}, this reduces to checking whether there
    exists $i \in [1 \dd n]$ such that it holds
    $\RangeModeFreq{W_{A,i}}{0}{|W_{A,i}|} \geq A_{\rm ones}[i]$
    (see \cref{def:range-mode-freq,def:range-mode-freq-answer-string}).
    We thus proceed as follows.
    Using the array $A_{\rm sum}$, in $\bigO(n)$ time, we prepare arguments for a batch of
    $m = n = \Omega(|G| / \log N)$ range mode frequency queries on $T$. More precisely, the $i$th query
    is to compute $\RangeModeFreq{T}{A_{\rm sum}[i-1]}{A_{\rm sum}[i]}$. We return
    yes if and only if at least one returned value is at least the corresponding value $A_{\rm ones}[i]$. Answering
    this batch of queries takes
    \begin{align*}
      \bigO(m|G|^{1-\epsilon} \log^{c'} N)
        \subseteq   \bigO(n \cdot (dn)^{1-\epsilon} \log^{c'} n)
        =           \bigO(n^{2-\epsilon} \cdot \log^{1+c'-\epsilon} n)
        \subseteq   \bigO(n^{2-\epsilon'})
    \end{align*}
    time.
  \end{enumerate}

  In total, the algorithm takes $\bigO(dn + n^{2-\epsilon'}) = \bigO(n^{2-\epsilon'})$ time.
  Note that since our choice of $\epsilon'$ works for all $c$, we have thus proved
  that \cref{con:ov} does not hold.
\end{proof}

%% file: appendix.tex
\appendix

\section{Appendix}

\subsection{Answering a Range Distinct Count Query on a Grammar}\label{sec:range-distinct-count-on-slg-in-linear-time}

\begin{proposition}\label{pr:range-distinct-count-on-slp-in-linear-time-small-alphabet}
  Let $G = (V, \Sigma, R, S)$ be an SLP representing a string
  $T \in \Sigma^{n}$, where $\Sigma = \{1,\ldots,\sigma\}$,
  $\sigma \leq |G|$, and $G$ is of height $h = \bigO(\log n)$.
  Given $G$ and any $b, e \in [0 \dd n]$ with $b \leq e$,
  we can compute $\RangeDistinctCount{T}{b}{e}$
  (\cref{def:range-distinct-count}) in $\bigO(|G|)$ time.
\end{proposition}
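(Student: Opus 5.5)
We compute the set of distinct symbols occurring in $T(b \dd e]$ as a bit vector $D[1 \dd \sigma]$ and then count its ones. Since $\sigma \leq |G|$, allocating, initializing and finally scanning $D$ costs $\bigO(|G|)$ time. The remaining task is to mark in $D$ every symbol of $T(b \dd e]$ without ever expanding any nonterminal more than once.

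First, in $\bigO(|G|)$ time we compute $|\Exp{G}{N}|$ for every $N \in V$ by a bottom-up pass in the SLP's topological order. Next, we decompose the interval $(b \dd e]$ into canonical pieces. We descend from $S$ along the root-to-leaf paths leading to positions $b+1$ and $e$. At every node where a path turns, we collect the sibling nonterminal whose expansion lies entirely inside $(b \dd e]$. Because the height is $h = \bigO(\log n)$, this yields a list $N_1, \dots, N_q$ of nonterminals with $q = \bigO(h)$ and $T(b \dd e] = \Exp{G}{N_1} \cdots \Exp{G}{N_q}$. Choosing the children to descend into uses the precomputed lengths, so this step costs $\bigO(h) = \bigO(\log n)$ time. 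If $b = e$, we simply return $0$.

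The set of symbols occurring in $T(b \dd e]$ is the set of terminals reachable in the DAG of $G$ from any of $N_1, \dots, N_q$. We find it with a single graph search over the grammar DAG (DFS or BFS, using an explicit stack to avoid recursion depth issues). The search is multi-source, starting from all the $N_t$. A global visited array over $V$ ensures that each nonterminal is processed once. Whenever the search visits a nonterminal $X$ with $\Rhs{G}{X} = c \in \Sigma$, it sets $D[c] = 1$.

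Each nonterminal is expanded at most once and has at most two outgoing edges, so the search takes $\bigO(|V|) = \bigO(|G|)$ time. The $q = \bigO(\log n)$ sources are also within budget, by \cref{ob:slp-min-size}. Summing the contributions gives $\bigO(|G|)$ in total.

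For correctness, a terminal $c$ occurs in $\Exp{G}{N_t}$ if and only if there is a derivation path from $N_t$ to a rule $X \to c$; this follows by induction over the SLP structure. Hence the number of ones in $D$ equals $\RangeDistinctCount{T}{b}{e}$. The main subtlety is making sure no work scales with $n$ or with the multiplicity of expansions. The shared visited array handles the multiplicity, and the balanced height from the hypothesis (obtainable via \cref{lm:slp-balancing}) keeps the boundary decomposition cheap. The condition $\sigma \leq |G|$ is exactly what makes the final count over $D$ affordable.
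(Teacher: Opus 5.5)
Your proposal is correct and follows essentially the same route as the paper. Both compute expansion lengths, use the $\bigO(\log n)$ height to split $T(b \dd e]$ into $\bigO(\log n)$ nonterminals, and run one graph search over the grammar DAG in $\bigO(|G|)$ time, relying on $\sigma \leq |G|$ and \cref{ob:slp-min-size}. The paper differs only cosmetically: it adds a super-source vertex $0$ and explicit terminal vertices $g+1,\ldots,g+\sigma$, where you use a multi-source search with a bit vector $D[1 \dd \sigma]$.
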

\begin{proof}

  Denote $V = \{N_1, \ldots, N_{g}\}$. Assume that the definitions
  of all nonterminals in $G$ are represented using an array
  $A_{\rm rhs}[1 \dd g]$ defined such that, for every $i \in [1 \dd g]$,
  \begin{itemize}
  \item If $|\Rhs{G}{N_i}| = 1$, then
    $A_{\rm rhs}[i] = \Rhs{G}{N_i} \in [1 \dd \sigma]$.
  \item Otherwise $A_{\rm rhs}[i] = (i', i'')$, where
    $i', i'' \in [1 \dd g]$ are such that $\Rhs{G}{N_i} = N_{i'} N_{i''}$.
  \end{itemize}

  The algorithm to compute $\RangeDistinctCount{T}{b}{e}$ proceeds as follows:
  \begin{enumerate}

  \item In the first step we construct a directed graph
    $D = (V', E')$, where $V' = \{0,1,\ldots,g+\sigma\}$ and
    the set of directed edges $E'$ is defined such that, for every
    $(u,v) \in V'^2$, $(u,v) \in E'$ holds if and only if
    $u \in [1 \dd g]$ and one of the following two conditions hold:
    \begin{itemize}
    \item $|\Rhs{G}{N_u}| = 2$ and, letting $(i',i'') = A_{\rm rhs}[u]$,
      it holds either $v = i'$ or $v = i''$,
    \item $|\Rhs{G}{N_u}| = 1$, and, letting $c = \Rhs{G}{N_u}$,
      it holds $v = g + c$.
    \end{itemize}
    Given the array $A_{\rm rhs}[1 \dd g]$, the computation
    of $D$ takes $\bigO(|G|)$ time.

  \item Next, we compute an array $A_{\rm explen}[1 \dd g]$
    defined such that, for every $i \in [1 \dd g]$,
    $A_{\rm explen}[i] = |\Exp{G}{N_i}|$.
    Given the array $A_{\rm rhs}$,
    computing $A_{\rm explen}$ takes $\bigO(|G|)$ time.

  \item We use the array $A_{\rm rhs}$ and $A_{\rm explen}$ to compute
    a sequence $(a_i)_{i \in [1 \dd k]}$ of integers in $[1 \dd g]$ that
    satisfies
    \begin{itemize}
    \item $k = \bigO(\log n)$ and
    \item $T(b \dd e] = \bigodot_{i=1,\ldots,k} \Exp{G}{N_{a_i}}$.
    \end{itemize}
    It is easy to see that such a sequence exists since $G$ has height
    $h = \bigO(\log n)$.
    After computing the sequence $(a_i)_{i \in [1 \dd k]}$, we update
    the graph $D$ by adding to $E'$ the set of edges
    $\{(0,a_i) : i \in [1 \dd k]\}$.
    This step takes $\bigO(k) = \bigO(\log n) = \bigO(|G|)$ time
    (see \cref{ob:slp-min-size}).

  \item We perform a depth-first search in $D$ from vertex $0$ marking all reachable
    vertices, and finally, count how many vertices among those in the
    subset $\{g+1,\ldots,g+\sigma\}$ were reachable, and return that
    as the answer. This step takes $\bigO(|V'| + |E'|) = \bigO(|G|)$ time.
  \end{enumerate}
  In total, the algorithm takes $\bigO(|G|)$ time.
\end{proof}

\begin{proposition}\label{pr:range-distinct-count-on-slp-in-near-linear-time}
  Let $G = (V, \Sigma, R, S)$ be an SLP representing a string
  $T \in \Sigma^{n}$, where $\Sigma$ is any ordered set
  and $G$ is of height $h = \bigO(\log n)$.
  Given $G$ and any $b, e \in [0 \dd n]$ with $b \leq e$,
  we can compute $\RangeDistinctCount{T}{b}{e}$
  (\cref{def:range-distinct-count}) in $\bigO(|G| \log n)$ time.
\end{proposition}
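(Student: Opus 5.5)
The plan is to reduce to \cref{pr:range-distinct-count-on-slp-in-linear-time-small-alphabet} by relabeling the alphabet. The general case differs from that proposition only in that $\Sigma$ is an arbitrary ordered set, possibly much larger than $|G|$. Only the terminals that actually appear in $G$ matter, and there are at most $|G|$ of them. So I would first extract the multiset of terminals occurring on the right-hand sides of rules of the form $\Rhs{G}{N_i} = a$, $a \in \Sigma$. There are at most $g = |V| \le |G|$ such rules.

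Next I would sort these terminals by comparisons in $\bigO(|G| \log |G|)$ time, remove duplicates, and assign each distinct terminal its rank in $[1 \dd \sigma']$, where $\sigma' \le g \le |G|$. Replacing each terminal $a$ by its rank gives an SLP $G'$ with the same nonterminals, the same shape, and hence the same height $h = \bigO(\log n)$. It generates the string $T' = \phi(T)$, where $\phi$ is the injective rank map. Because $\phi$ is injective on the symbols occurring in $T$, we have $\RangeDistinctCount{T'}{b}{e} = \RangeDistinctCount{T}{b}{e}$ for all $b \le e$, and $|T'| = n$.

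Then I would apply \cref{pr:range-distinct-count-on-slp-in-linear-time-small-alphabet} to $G'$ with alphabet $\{1,\ldots,\sigma'\}$, where $\sigma' \le |G'|$. This step takes $\bigO(|G|)$ time.

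The one step that needs care is bounding the sorting cost by $\bigO(|G|\log n)$ rather than $\bigO(|G| \log |G|)$. Since $G$ is an SLP generating a length-$n$ string, every nonterminal reachable from $S$ has a nonempty expansion that occurs in $T$. I would first discard unreachable nonterminals in $\bigO(|G|)$ time; these do not affect the answer. After this, each remaining nonterminal corresponds to a distinct node of the DAG. Combined with height $h=\bigO(\log n)$, this alone does not bound $g$ polynomially in $n$. For the sort itself, however, only the distinct terminals matter, and every terminal in the pruned grammar occurs in $T$, so at most $\min(g,n)$ distinct terminals remain. Deduplicating by sorting the at most $g$ leaf rules still costs $\bigO(g \log g)$.

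If $\log g$ could exceed $\bigO(\log n)$, I would instead make the preliminary pass sort-free. I would first merge all leaf nonterminals with identical terminal (so each terminal has one leaf rule); if the model only permits comparisons, I would note that a pruned, deduplicated SLP, where every nonterminal has a distinct expansion, satisfies $g \le n^2$, since distinct expansions are distinct substrings of $T$. That gives $\log g = \bigO(\log n)$ and hence the total bound $\bigO(|G|\log n)$. I expect this bookkeeping to be the only real obstacle: assume, or reduce to, the setting where nonterminals have distinct expansions, which is standard and is implied by the balanced SLPs of \cref{lm:slp-balancing} up to such pruning.
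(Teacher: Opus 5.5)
Your argument is the paper's: sort the terminals on the right-hand sides of the leaf rules, replace each by its rank, note that this renaming is injective on the symbols of $T$ (so $\RangeDistinctCount{T}{b}{e}$ and the height are unchanged), and apply \cref{pr:range-distinct-count-on-slp-in-linear-time-small-alphabet} to the relabeled SLP, whose alphabet has size at most $|G|$.

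The paper simply writes $\bigO(|G|\log|G|) = \bigO(|G|\log n)$ without comment. Your concern about $\log g$ versus $\log n$ is therefore legitimate, but your final paragraph resolves it incorrectly and with more machinery than needed.

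The claim that pruning unreachable nonterminals ``alone does not bound $g$ polynomially in $n$'' is false. After pruning, every remaining nonterminal labels at least one node of the derivation tree of $S$. In an SLP generating a length-$n$ string, this tree has exactly $n$ nodes with terminal rules and $n-1$ nodes with binary rules. So the pruned grammar has $g \le 2n-1$ nonterminals, and sorting its leaf terminals costs $\bigO(g\log g) \subseteq \bigO(|G|\log n)$. The height bound is not needed for this step.

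Your fallback does not work as written. Merging leaf nonterminals with equal terminals is itself the comparison-based deduplication whose cost you are trying to bound. It also does not make all expansions distinct, since binary rules with different right-hand sides can expand to the same string, so it does not yield $g \le n^2$.

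Replace the last paragraph with the one-line pruning bound and the proof is complete.
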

\begin{proof}

  Denote $V = \{N_1, \ldots, N_{g}\}$. Assume that the definitions
  of all nonterminals in $G$ are represented using an array
  $A_{\rm rhs}[1 \dd g]$ defined such that, for every $i \in [1 \dd g]$,
  \begin{itemize}
  \item If $|\Rhs{G}{N_i}| = 1$, then
    $A_{\rm rhs}[i] = \Rhs{G}{N_i} \in \Sigma$.
  \item Otherwise $A_{\rm rhs}[i] = (i', i'')$, where
    $i', i'' \in [1 \dd g]$ are such that $\Rhs{G}{N_i} = N_{i'} N_{i''}$.
  \end{itemize}
  We also denote $\Sigma_{T} = \{T[i] : i \in [1 \dd n]\}$.

  The algorithm to compute $\RangeDistinctCount{T}{b}{e}$ proceeds as follows:
  \begin{enumerate}

  \item In this step, we compute an array $A'_{\rm rhs}[1 \dd g]$
    defined such that, for every $i \in [1 \dd g]$,
    \begin{itemize}
    \item If $|\Rhs{G}{N_i}| = 1$, then
      $A'_{\rm rhs}[i] = |\{c \in \Sigma_{T} : c \preceq \Rhs{G}{N_i}\}| \in [1 \dd |\Sigma_{T}|]$.
    \item Otherwise, $A'_{\rm rhs}[i] = A_{\rm rhs}[i]$.
    \end{itemize}
    To compute the array $A'_{\rm rhs}$, we first lexicographically sort all
    the pairs in the set $\{(\Rhs{G}{N_i}, i) : i \in [1 \dd g]\text{ and }|\Rhs{G}{N_i}| = 1\}$
    (which is easy to obtain from $A_{\rm rhs}$ in $\bigO(|G|)$ time) in $\bigO(|G| \log |G|) = \bigO(|G| \log n)$ time.
    With one more scan of this sorted sequence,
    it is straightforward to compute the array $A'_{\rm rhs}$ in $\bigO(|G|)$ time.

  \item Let $G'$ denote an SLP obtained from $G$ by replacing all the rules (represented in array $A_{\rm rhs}$)
    with rules represented in $A'_{\rm rhs}$, and let $T'$ be the string represented by $G'$.
    The resulting grammar is over alphabet $\Sigma' = [1 \dd |\Sigma_{T}|]$
    which satisfies $|\Sigma'| \leq |G| = |G'|$. Since $T'$ is obtained from $T$ by a one-to-one renaming of symbols,
    $\RangeDistinctCount{T}{b}{e} = \RangeDistinctCount{T'}{b}{e}$. Thus, using
    \cref{pr:range-distinct-count-on-slp-in-linear-time-small-alphabet},
    we can compute $\RangeDistinctCount{T'}{b}{e}$ in $\bigO(|G|)$ time.

  \end{enumerate}
  In total, the algorithm takes $\bigO(|G| \log n)$ time.
\end{proof}

\begin{theorem}\label{th:range-distinct-count-on-slg-in-near-linear-time}
  Let $G = (V, \Sigma, R, S)$ be an SLG representing a string
  $T \in \Sigma^{n}$, where $n \geq 1$ and $\Sigma$ is any ordered set.
  Given $G$ and any $b, e \in [0 \dd n]$ with $b \leq e$,
  we can compute $\RangeDistinctCount{T}{b}{e}$
  (\cref{def:range-distinct-count}) in $\bigO(|G| \log n)$ time.
\end{theorem}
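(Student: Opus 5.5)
The plan is to reduce to \cref{pr:range-distinct-count-on-slp-in-near-linear-time} in three steps: convert the SLG to an SLP, balance it, and then run that proposition. Before that, I dispose of the degenerate case. If $b = e$, the answer is $0$ and we return it in $\bigO(1)$ time. Since $n \geq 1$, we have $\Lang{G} \neq \{\emptystring\}$, so \cref{ob:slg-to-slp} applies.

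\emph{Step 1 (SLG to SLP).} By \cref{ob:slg-to-slp}, in $\bigO(|G|)$ time we transform $G$ into an SLP $G_1$. It satisfies $|G_1| = \Theta(|G|)$ and $\Lang{G_1} = \{T\}$.

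\emph{Step 2 (balancing).} We apply \cref{lm:slp-balancing} to $G_1$. In $\bigO(|G_1|) = \bigO(|G|)$ time this yields an SLP $G_2$ of height $\bigO(\log n)$ that still represents $T$. Because the construction runs in $\bigO(|G|)$ time, the output size also satisfies $|G_2| = \bigO(|G|)$.

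\emph{Step 3 (query).} The alphabet $\Sigma$ is ordered, and $G_2$ is an SLP of height $\bigO(\log n)$ representing $T$. We can therefore invoke \cref{pr:range-distinct-count-on-slp-in-near-linear-time} on $G_2$ with the given $b \leq e$. This computes $\RangeDistinctCount{T}{b}{e}$ in $\bigO(|G_2| \log n) = \bigO(|G| \log n)$ time.

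The total running time is $\bigO(|G|) + \bigO(|G| \log n)$. When $n = 1$ we have $\log n = 0$, so to be safe I would read the second term as $\bigO(|G| \log(n+1))$, or simply note that it is dominated by the first. Either way the total is $\bigO(|G| \log n)$, as claimed.

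The only point needing care is that the earlier results compose as stated. \cref{lm:slp-balancing} requires an SLP as input, which Step 1 supplies, and it produces an SLP, which \cref{pr:range-distinct-count-on-slp-in-near-linear-time} requires. The size bound on $G_2$ follows from the linear construction time. No step should pose a genuine obstacle.
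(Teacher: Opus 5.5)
Your proposal is correct and matches the paper's proof: convert the SLG to an SLP via \cref{ob:slg-to-slp}, balance it with \cref{lm:slp-balancing} (taking the $\bigO(|G|)$ size bound from the construction time), and then apply \cref{pr:range-distinct-count-on-slp-in-near-linear-time}. Your separate treatment of $b = e$ and of $\log n = 0$ at $n = 1$ is harmless but not needed. The latter is settled by the usual convention of reading $\log n$ as $\max(1, \log n)$; note that saying the term is ``dominated by the first'' gives $\bigO(|G|)$, which literally is not $\bigO(|G| \log n)$ when $n = 1$.
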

\begin{proof}
  The algorithm to compute $\RangeDistinctCount{T}{b}{e}$ proceeds as follows:
  \begin{enumerate}

  \item Using \cref{ob:slg-to-slp}, in $\bigO(|G|)$ time,
    compute an SLP $G' = (V', \Sigma, R', S')$
    such that $\Lang{G'} = \Lang{G} = \{T\}$.
    Note that the upper bound on the runtime of \cref{ob:slg-to-slp}
    implies that $|G'| = \bigO(|G|)$.

  \item Using \cref{lm:slp-balancing}, in $\bigO(|G|)$ time,
    compute an SLP $G'' = (V'', \Sigma, R'', S'')$
    of height $h = \bigO(\log n)$
    such that $\Lang{G''} = \Lang{G'} = \{T\}$.
    Similarly as above, note that $|G''| = \bigO(|G'|) = \bigO(|G|)$.

  \item Apply \cref{pr:range-distinct-count-on-slp-in-near-linear-time} to $G''$
    to compute $\RangeDistinctCount{T}{b}{e}$
    in $\bigO(|G''| \log n) = \bigO(|G| \log n)$ time.
  \end{enumerate}
  In total, the algorithm takes $\bigO(|G| \log n)$ time.
\end{proof}